\documentclass[11pt,reqno]{amsart}
\pdfoutput=1

\usepackage{amsmath, amsthm, amssymb, mathtools, mathrsfs}
\usepackage{amsaddr}
\usepackage{dsfont}
\usepackage{graphicx}
\usepackage{enumitem}
\usepackage{fullpage}
\usepackage[pdftex]{hyperref}
\usepackage{physics}
\usepackage{cleveref}
\usepackage[bibstyle=numeric, backend=biber, sorting=none, citestyle=numeric-comp, giveninits, url=false, maxbibnames=99]{biblatex} % loads etoolbox
\usepackage[toc,page]{appendix}

\usepackage{tikz}
\usetikzlibrary{calc}
\usetikzlibrary{arrows,backgrounds}
\usetikzlibrary{arrows.meta, bending, positioning}

\newcommand{\tikzmath}[2][]
{\vcenter{\hbox{\begin{tikzpicture}[#1]#2\end{tikzpicture}}}
}

\newcommand{\roundNbox}[6]{
	\draw[rounded corners=5pt, very thick, #1] ($#2+(-#3,-#3)+(-#4,0)$) rectangle ($#2+(#3,#3)+(#5,0)$);
	\coordinate (ZZa) at ($#2+(-#4,0)$);
	\coordinate (ZZb) at ($#2+(#5,0)$);
	\node at ($1/2*(ZZa)+1/2*(ZZb)$) {#6};
}

\usetikzlibrary{decorations,decorations.pathreplacing,decorations.markings}

\tikzset{snake it/.style={decorate, decoration=snake}}
\tikzset{super thick/.style={line width=3pt}}

\tikzstyle{mid>}=[decoration={markings, mark=at position 0.5 with {\arrow{>}}}, postaction={decorate}]
\tikzstyle{mid<}=[decoration={markings, mark=at position 0.5 with {\arrow{<}}}, postaction={decorate}]
\tikzstyle{knot}=[preaction={super thick, white, draw}]
\tikzstyle{gray knot}=[preaction={super thick, gray!25, draw}]

\usepackage{amsmath, amsthm, amssymb, mathtools, mathrsfs}

\usepackage{xcolor}
\definecolor{rufous}{HTML}{A81C07}
\definecolor{azure}{HTML}{007fff}
\definecolor{OliveGreen}{HTML}{6D712E}
\definecolor{boysenberry}{HTML}{873260}
\definecolor{violet}{RGB}{148,0,211}
\definecolor{salmon}{HTML}{ff8c69}
\definecolor{DarkGreen}{RGB}{0,150,0}
\definecolor{NewOliveGreen}{HTML}{7DA12E}

\newcommand{\Ad}{\operatorname{Ad}}
\newcommand{\id}{\operatorname{id}}

\newcommand{\Irr}{\operatorname{Irr}}
\newcommand{\Dr}{\operatorname{Dr}}
\newcommand{\Tube}{\operatorname{Tube}}

\newcommand{\placeholder}{\bullet}

\def\semicolon{;}
\def\applytolist#1{
    \expandafter\def\csname multi#1\endcsname##1{
        \def\multiack{##1}\ifx\multiack\semicolon
            \def\next{\relax}
        \else
            \csname #1\endcsname{##1}
            \def\next{\csname multi#1\endcsname}
        \fi
        \next}
    \csname multi#1\endcsname}

\def\calc#1{\expandafter\def\csname c#1\endcsname{{\mathcal #1}}}
\applytolist{calc}QWERTYUIOPLKJHGFDSAZXCVBNM;
\def\bbc#1{\expandafter\def\csname bb#1\endcsname{{\mathbb #1}}}
\applytolist{bbc}QWERTYUIOPLKJHGFDSAZXCVBNM;
\def\bfc#1{\expandafter\def\csname bf#1\endcsname{{\mathbf #1}}}
\applytolist{bfc}QWERTYUIOPLKJHGFDSAZXCVBNM;
\def\sfc#1{\expandafter\def\csname s#1\endcsname{{\sf #1}}}
\applytolist{sfc}QWERTYUIOPLKJHGFDSAZXCVBNM;
\def\fc#1{\expandafter\def\csname f#1\endcsname{{\mathfrak #1}}}
\applytolist{fc}QWERTYUIOPLKJHGFDSAZXCVBNM;
\def\rmc#1{\expandafter\def\csname rm#1\endcsname{{\mathrm #1}}}
\applytolist{rmc}QWERTYUIOPLKJHGFDSAZXCVBNM;
\def\scrc#1{\expandafter\def\csname scr#1\endcsname{{\mathscr #1}}}
\applytolist{scrc}QWERTYUIOPLKJHGFDSAZXCVBNM;

\newcommand{\End}{\operatorname{End}}
\newcommand{\supp}{\operatorname{supp}}

\newcommand{\diam}{\operatorname{diam}}

\newcommand{\loc}{\operatorname{loc}}

\hypersetup{
    colorlinks=true,                            % false: boxed links; true: colored links
    linkcolor=azure,                          % color of internal links
    citecolor=NewOliveGreen,                          % color of links to bibliography
    filecolor=red,                           % color of file links
    urlcolor=azure                             % color of external links
}

\theoremstyle{plain}
\newtheorem{thm}{Theorem}[section]

\newtheorem{cor}[thm]{Corollary}
\newtheorem{lem}[thm]{Lemma}

\theoremstyle{definition}
\newtheorem{defn}[thm]{Definition}

\newtheorem{rem}[thm]{Remark}

\title{A nonabelian anyon violates Haag duality}
\author{Daniel Wallick, Henrik Wilming}
\address{Leibniz Universit\"{a}t Hannover, Institut f\"{u}r Theoretische Physik, Appelstra\ss{}e 2, 30167 Hannover, Germany}
\date{\today}
\begin{document}
\begin{abstract}
	We show that a superselction sector describing a single nonabelian anyon violates Haag duality, which is equivalent to the fundamental quantum-information theoretic principle of uniqueness of purifications. 
	We provide a simple physics argument as well as a rigorous proof using sector theory, and instantiate our result concretely in Levin--Wen models. 
	We also show that the associated ground state does not allow for quantum steering despite being a pure state.
	Finally, we show that this ground state violates approximate Haag duality, disproving the conjecture that all gapped ground states satisfy this condition. Our result implies that there are gapped phases of matter where Haag duality fails at every point in the phase.
\end{abstract}
\maketitle
\section{Introduction and overview}
Recently, entanglement properties of systems with infinitely many degrees of freedom are receiving more and more attention. Such systems can have entanglement properties that finite systems cannot have \cite{summersMaximalViolationBells1987,keylInfinitelyEntangledStates2002,verch_distillability_2005,keylEntanglementHaagdualityType2006,crannStateConvertibilityNeumann2020,van_luijk_embezzlement_2024,van_luijk_relativistic_2024,vanluijkPureStateEntanglement2025,MR5026085,luijkQuantumSteeringEquivalent2026}, but which can survive in an approximate way in finite systems \cite{vanluijkCriticalFermionsAre2025,vanluijkLargeScaleStructureEntanglement2025,luijkMultipartiteEmbezzlementEntanglement2025}. A notable example of a property that requires infinitely many degrees of freedom (absent further symmetries) is the violation of ``uniqueness of purifications": First consider a bipartite quantum system made up of finitely many degrees of freedom. Divide the system into two parts $A$ and $B$. Then the total Hilbert space is $\cH=\cH_A \otimes \cH_B$.
Uniqueness of purification says that if $\ket\Psi,\ket\Phi\in \cH$ are state vectors such that $\bra \Psi a\otimes \mathds 1 \ket\Psi = \bra\Phi a\otimes \mathds 1 \ket \Phi$ for all operators $a$ acting on $A$, then for every $\varepsilon>0$ there exists a unitary $u$ acting on $B$ such that
\begin{align}
		\left| \bra\Psi \mathds 1\otimes u \ket\Phi \right| \geq 1-\varepsilon.
\end{align}
In words: If two state vectors have the same reduced density matrix on $A$, then they are related by a unitary acting only on $B$. 

In systems with infinitely many degrees of freedom the Hilbert space $\cH$ generally does not factorize, and the subsystems are instead described by commuting von Neumann algebras $M_A$ and $M_B$. 
The von Neumann algebras $M_A$ and $M_B$ contain the local observables in regions $A$ and $B$, respectively.
Uniqueness of purifications then requires that $\bra\Psi a \ket\Psi =\bra \Phi a \ket\Phi$ for all $a \in M_A$ implies that for any $\varepsilon > 0$ there exists a unitary $u\in M_B$ such that $|\bra\Psi u\ket \Phi|\geq 1-\varepsilon$. 

Recently \cite{MR5026085}, it was shown that uniqueness of purifications is equivalent to \emph{Haag duality}, which is the statement
\begin{align}
		M_A = M_B',\qquad M_B' := \{ x\in B(\cH) : [x,y]=0\ \forall y\in M_A \}.
\end{align}
Haag duality plays an important role in (algebraic) quantum field theory \cite{haagLocalQuantumPhysics1996}\footnote{See \cite{RobertsHDFailure} for an early discussion of the failure of Haag duality in the context of gauge symmetries.} and in the rigorous understanding of topologically ordered phases \cite{naaijkensHaagDualityDistal2012, MR3426207, MR4362722, 2509.23734,2605.10693}. 
Recently, it also came into focus in the context of non-invertible symmetries \cite{shaoAdditivityHaagDuality2025a,harlowDisjointAdditivityLocal2025a}. 
It is generally difficult to prove Haag duality for quantum spin systems, but it was recently proven to hold for all Levin--Wen models \cite{2509.23734,2605.10693} (see \cite{naaijkensHaagDualityDistal2012,MR3426207} for prior results in more specialized cases).
It is an open problem whether Haag duality at a single point in a gapped phase of matter already implies Haag duality throughout the phase.

Topologically ordered systems can host so-called \emph{nonabelian anyons}, emergent point-like quasi-particles with nonabelian exchange statistics, which may in principle be used for fault-tolerant quantum computation \cite{wangTopologicalQuantumComputation2010}. 
While anyons can only be created out of the vacuum in pairs of an anyon with its anti-particle, in an infinite systems we may consider moving one of the partners away to infinity. 
Such a situation is then referred to as a \emph{(nonabelian) superselection sector}, because the left-over anyon cannot be destroyed by any local operation. 
This is a consequence of the fact that they can be \emph{detected} by operators acting along arbitrary loops that encompass the anyon: If a local unitary operator $u$ would remove the anyon from state $\ket\Psi$, we could simply choose a loop outside of the support to detect this using an observable $x$. 
But then the two operators would commute, which contradicts the assumption that $x$ can detect the removal of the anyon, because $x$ would take the same expectation value on $\ket\Psi$ and $u\ket\Psi$.

In this work, we show that uniqueness of purifications, hence Haag duality, must always be violated in a nonabelian superselection sector as long as the system is partitioned into two parts, where each part contains an infinitely large connected component that contains balls of arbitrary sizes. We establish our result on three levels. 
First, we provide a very simple and general physics argument. Second, we provide a rather abstract, but rigorous, model-independent result for cone-regions assuming that the vacuum fulfills Haag duality. Finally, we instantiate our result concretely in Levin--Wen models \cite{PhysRevB.71.045110}, which are believed to cover all bosonic, 2+1-dimensional topologically ordered phases admitting a gapped boundary \cite{MR2200691,kitaevModelsGappedBoundaries2012}. 
Our result shows that one cannot prove, even for simply connected regions, that pure ground states of gapped, commuting projector Hamiltonians fulfil Haag duality. 
Moreover, we generalize this result to a weakened form of Haag duality called \emph{approximate Haag duality} \cite{MR4362722}. Haag duality implies approximate Haag duality. 
However, in contrast to Haag duality, it is known that approximate Haag duality is a property of a gapped phase: If it holds at one point in the phase it holds at all points in the phase \cite{MR4362722}. Our result shows that there are gapped phases of matter where approximate Haag duality fails at every point in the phase. Since Haag duality implies approximate Haag duality it follows that Haag duality fails at every point in the gapped phase.

Finally, we also discuss a connection to a recent result showing that the quantum information theoretic primitive of \emph{steering} may survive the violation of Haag duality \cite{luijkQuantumSteeringEquivalent2026}. This is, for example, the case of the surface code when the $A$ subsystem consists of two disjoint infinite cones \cite{naaijkensKosakiLongoIndexClassification2013,fiedlerJonesIndexSecret2017}.
Our results show that the natural ground states in \emph{nonabelian} superselection sectors in general do not allow for steering, which we demonstrate explicitly in Levin--Wen models.  

\subsection{Non-abelian anyons}
We briefly review the required basics of nonabelian anyons, which are localized excitations in a quantum system.
For an in-depth discussion, see for example \cite{wangTopologicalQuantumComputation2010}.
Throughout we consider two-dimensional systems that can host finitely many types of (simple) nonabelian anyons, which we label by roman letters $a,b,c,\ldots$. A special type of anyon is the \emph{vacuum} type, which we label by $1$ instead. To each anyon type $a$ there is a ``dual anyon type" $\bar a$, which we may heuristically think of the anti-particle associated to $a$. The vacuum type is always its own dual, $1=\bar1$.

Anyons are characterized by three operations: fusion, splitting and braiding.
For us, only fusion and splitting will be important. 
Fusion involves bringing two anyons, of types $a$ and $b$, close together, which we denote by writing $a\otimes b$.  
We can try to interpret their composite as an anyon again. In general the resulting state, which depends on the history of the two anyons, will be a superposition of different anyon types. If we measure the anyon type we get a definitive answer $c$ and project the system into the ``fusion outcome" $c$. The outcome probabilities depend on the quantum state. 
In general, there can be multiple \emph{fusion channels} for $a$ and $b$ to fuse into $c$. We denote the number of them by $N^c_{ab}$. Symbollically, we write
\begin{align}
	a\otimes b = \sum_c N^c_{ab} c.
\end{align}
We call anyon type $a$ \emph{abelian} if it \emph{always} fuses with its dual anyon $\bar a$  into the vacuum $1$, i.e. $N_{a\bar a}^c = \delta^{1c}$. 
Otherwise it is called \emph{nonabelian}. 
It is an essential property of topologically ordered system that measuring the anyon type (more precisely, the total topological charge) inside a region can be done by measuring an operator localized along the boundary of the region. Thus we can measure the anyon type of an anyon by a measurement along any loop that encircles the anyon and no other anyon. 

Splitting is an operation that is dual to fusion, where an anyon $c$ is split up into two anyons of types $a$ and $b$.
If an \emph{abelian} anyon $a$ splits into $a$ and $b$, then $b$ must necessarily be the vacuum. 
But for \emph{nonabelian} anyons there are non-trivial $b$ so that $a$ can split into $a$ and $b$. 
This will be the essential feature of nonabelian anyons we will use to show a violation of Haag duality.

\subsection{The physical argument}
\label{sec:PhysicalArgument}
\begin{figure}[t]
  \centering
  % ------------------------------------------------------------------
% anyon_figure_snippet.tex
% Combined figure: a) split-and-move sequence, b) state with regions
% u and P_a around anyon a, separated by a light vertical line.
%
% Usage:
%   \begin{figure}[t]
%     \centering
%     \input{anyon_figure_snippet}
%     \caption{...}
%   \end{figure}
%
% Preamble requirements:
%   \usepackage{tikz}
%   \usetikzlibrary{arrows.meta,calc}
% ------------------------------------------------------------------

% ---------- Color scheme ----------
\definecolor{FrameGray}{HTML}{334155}   % slate: boxes, process arrows, kets
\definecolor{WallTeal}{HTML}{0F766E}    % teal: A/B domain wall + labels
\definecolor{AnyonOrange}{HTML}{C2410C} % burnt orange: anyons, strings, trajectories
\definecolor{DiskBlue}{HTML}{2563EB}    % blue: solid circle u around a
\definecolor{AnnulusCyan}{HTML}{0891B2} % cyan-teal: annulus P_a

% A/B domain wall, identical in every panel; A covers roughly the
% upper-left half of the box.
\providecommand{\domainwall}{}
\renewcommand{\domainwall}{%

  \draw[wall] (3.70,5.00)
      .. controls (3.45,3.90) and (3.55,3.20) .. (2.90,2.45)
      .. controls (2.10,1.55) and (0.90,1.05) .. (0,0.95);
  \node[WallTeal] at (3.15,4.60) {$A$};
  \node[WallTeal] at (4.28,4.55) {$B$};
}

\resizebox{.9\textwidth}{!}{%
\begin{tikzpicture}[
  box/.style    = {draw=FrameGray, line width=1pt},
  wall/.style   = {WallTeal, line width=1.1pt, dash pattern=on 5pt off 4pt,
                   line cap=round},
  string/.style = {AnyonOrange, line width=1.1pt, line cap=round},
  traj/.style   = {AnyonOrange, line width=1.1pt, line cap=round,
                   dash pattern=on 0.1pt off 4.5pt},
  anyon/.style  = {circle, fill=AnyonOrange, inner sep=1.7pt},
  proc/.style   = {-{Stealth[length=3mm]}, FrameGray, line width=1pt},
  ket/.style    = {FrameGray, font=\huge},
  panellabel/.style = {FrameGray, font=\Large, anchor=base west},
  every node/.style = {font=\Large},
]

% ======== part a) : split-and-move sequence ========
\node[panellabel] at (-0.10,5.45) {a)};

% ---- Panel 1 : |psi> ----
\begin{scope}
  \draw[box] (0,0) rectangle (5,5);
  \domainwall
  \node[anyon, label={[AnyonOrange, label distance=-1pt]above right:$b$}] (b1) at (3.90,1.30) {};
  \draw[traj] (b1)
      .. controls (4.30,0.95) and (4.68,0.62) .. (5.00,0.35);
  \node[ket] at (2.50,-0.90) {$|\Psi\rangle$};
\end{scope}

% ---- process arrow: split b ----
\draw[proc] (5.55,2.60) -- (7.25,2.60)
    node[midway, below=3pt, FrameGray] {split $b$};

% ---- Panel 2 ----
\begin{scope}[xshift=7.8cm]
  \draw[box] (0,0) rectangle (5,5);
  \domainwall
  \node[anyon, label={[AnyonOrange, label distance=-1pt]left:$a$}]  (a2) at (3.78,1.44) {};
  \node[anyon, label={[AnyonOrange, label distance=-1pt]above right:$b$}] (b2) at (3.90,1.30) {};
  \draw[string] (a2) .. controls (3.83,1.28) and (3.87,1.40) .. (b2);
  \draw[traj] (b2)
      .. controls (4.30,0.95) and (4.68,0.62) .. (5.00,0.35);
\end{scope}

% ---- process arrow: move a ----
\draw[proc] (13.35,2.60) -- (15.05,2.60)
    node[midway, below=3pt, FrameGray] {move $a$};

% ---- Panel 3 : |phi> ----
\begin{scope}[xshift=15.6cm]
  \draw[box] (0,0) rectangle (5,5);
  \domainwall
  \node[anyon, label={[AnyonOrange, label distance=-1pt]above left:$a$}] (a3) at (1.25,3.60) {};
  \node[anyon, label={[AnyonOrange, label distance=-1pt]above right:$b$}] (b3) at (3.90,1.30) {};
  \draw[string] (b3)
      .. controls (3.25,2.10) and (2.28,3.05) .. (a3);
  \draw[traj] (b3)
      .. controls (4.30,0.95) and (4.68,0.62) .. (5.00,0.35);
  \node[ket] at (2.50,-0.90) {$|\Phi\rangle$};
\end{scope}

% ======== separator ========
\draw[black!12, line width=0.8pt] (21.60,-0.90) -- (21.60,5.45);

% ======== part b) : |phi> with regions u and P_a ========
\node[panellabel] at (22.50,5.45) {b)};

\begin{scope}[xshift=22.6cm]
  \draw[box] (0,0) rectangle (5,5);
  \domainwall

  \coordinate (aP) at (1.25,3.60);

  % annulus P_a (drawn first, so everything sits on top)
  \draw[AnnulusCyan, line width=0.9pt, even odd rule,
        fill=AnnulusCyan, fill opacity=0.14, draw opacity=1]
      (aP) circle (0.95) (aP) circle (0.62);
  \node[AnnulusCyan] at ($(aP)+(45:1.20)+(0.18,0)$) {$P_a$};

  % solid blueish circle u around a
  \draw[DiskBlue, line width=0.9pt,
        fill=DiskBlue, fill opacity=0.18, draw opacity=1]
      (aP) circle (0.50);
  \node[DiskBlue] at ($(aP)+(215:0.27)$) {$u$};

  \node[anyon, label={[AnyonOrange, label distance=-1pt]above right:$b$}] (b4) at (3.90,1.30) {};
  \draw[string] (b4)
      .. controls (3.25,2.10) and (2.28,3.05) .. (aP);
  \draw[traj] (b4)
      .. controls (4.30,0.95) and (4.68,0.62) .. (5.00,0.35);
  \node[anyon, label={[AnyonOrange, label distance=-1pt]above:$a$}] at (aP) {};
\end{scope}

\end{tikzpicture}%
}
	\caption{\small a) The state vector $\ket\Phi$ is prepared from the ground state $\ket\Psi$ in a nonabelian superselection sector by splitting the nonabelian anyon $b$ into $b$ and $a$ and then transporting $a$ into region $A$. The red dotted path indicates a path to infinity, where the $\bar b$ partner of $b$ is located. b) The resulting state cannot be transformed into $\ket\Psi$ by local unitaries in $A$: For every such unitary $u$ there is a loop around its support on which the topological charge $a$ can be detected. The projection $P_a$ for this measurement outcome commutes with $u$, but annihilates $\ket\Psi$.}
  \label{fig:anyons}
\end{figure}

We now provide the physical argument for violation of Haag duality. 
We restrict our argument to renormalization-group fixed points, where anyons can be strictly localized in a finite region, but expect the general argument to be robust against deviations from this assumption. We note that even in the case of renormalization-group fixed points represented by Levin--Wen models, Haag duality for the frustration-free ground state (i.e., without anyons) was only proven very recently \cite{2509.23734,2605.10693}.

Starting from a ground state vector $\ket\Psi$ in a superselection sector hosting a single nonabelian anyon of type $b$ in region $B$ (with its partner at infinity), we produce a state vector $\ket\Phi \in \cH$ that has the same reduced state in $B$, but for every unitary $u$ in $M_A$ we have 
\begin{align}\label{eq:u-orthogonal}
		|\bra\Psi u\ket\Phi| = 0.
\end{align}
The procedure to prepare $\ket\Phi$ is very simple (cf.~\cref{fig:anyons}): 
We split the $b$-anyon into $b$ and a non-vacuum anyon $a$. This is possible, because we assume that $b$ is nonabelian. We then (adiabatically) transport the $a$ anyon into the interior of region $A$.
We assume that we can do this without creating any residual excitations.
Importantly, the final state vector $\ket\Phi$ is, up to a phase, independent of the path along which we transport the anyon to its final destination, as long as we do not encircle the $b$-anyon (recall that there are no other anyons around).
The resulting state $\ket\Phi$ has two essential properties: 
First, its reduced state in region $B$ (i.e., when evaluated with operators in $M_B$) is the same as that of $\ket\Psi$.
The reason is that, given the $b$-type anyon at its fixed location, there is a unique ground state in the $B$ region. Since the system is gapped and we created and moved away $a$ adiabatically, the system remains in this state in the $B$ region.

Second, we need to show \cref{eq:u-orthogonal}. Any unitary in $M_A$ can be approximated arbitrarily well (in the weak operator topology) by a unitary $u$ with finite support in $A$. As emphasized above, no such local unitary $u$ can remove the $a$-anyon from $A$: There is a topological charge measurement that can perfectly distinguish $\ket\Psi$ from $\ket\Phi$ by measuring the topological charge included in the region where $u$ acts. Let $P_a$ be the projector associated to measuring charge $a$ in the region. 
Then $P_a \ket\Phi = \ket\Phi$, while $P_a \ket\Psi = 0$. Since $P_a$ commutes with $u$, we have
\begin{align}
	\bra \Psi u\ket\Phi= \bra\Psi  u P_a \ket\Phi = \bra \Psi P_a u \ket \Phi = 0.
\end{align}
This shows that uniqueness of purifications fails.

We emphasize the neccessity that $b$ is a nonabelian anyon. Indeed, we will prove that for abelian anyon sectors, Haag duality continues to hold when it holds for the ground state and the region $A$ is simply connected. It is known that Haag duality fails if $A$ contains two, infinite connected components \cite{naaijkensKosakiLongoIndexClassification2013,fiedlerJonesIndexSecret2017} and the system allows for abelian (or nonabelian) anyons. 
A simple argument based on uniqueness of purifications can be found in \cite{MR5026085}.  

\subsection{The general theorem}

The idea behind the general result is quite simple and builds upon the framework of \emph{sector theory} \cite{MR297259, MR334742, MR660538, MR2804555, MR3426207, MR4362722, MR4426734, MR4884427, MR4927814, MR4927817, MR4998297, MR4998903, 2410.04736, 2511.08382, 2511.21521, 2603.01936}. 
Mathematically, one way to define an anyon $a$ on the infinite lattice is by a collection of injective maps $a_A \colon M_A \to B(\cH)$ for each bipartition of the system into $A$ and $B$ \cite{MR4927814}. The maps $a_A$ describe how the algebras of local operators change in presence of an $a$-type anyon in $A$: The algebra of observables in $A$ is given by $a_A(M_A)$ and the algebra of observables in $M_B$ is given by $a_B(M_B)$.
Assuming that $M_A = M_B'$, we may assume that $a_A$ is an endomorphism $a_A: M_A\to M_A$ and $a_B$ is the identity on $M_B$ \cite[Facts 2.27]{MR4927814}. 
The abstract fusion product $a \otimes b$ then corresponds to composing the endomorphisms $a_A \circ b_A$.
We then have that $a$ and $\bar a$ always fuse to the vacuum $1$ (i.e., $a$ is abelian) if and only if $\bar a_A$ is an \emph{inverse} of $a_A$. However, $\bar a_A$ is an inverse of $a_A$ if and only if $a_A(M_A) = M_A$. 
Since $a_B$ is taken to be the identity on $M_B$, we have Haag duality,
\begin{align}
	a_A(M_A) = a_B(M_B)',
\end{align}
if and only if $a$ is an abelian anyon. 

\subsection{Levin--Wen models}

We consider the Levin--Wen model built from the unitary fusion category $\cC$ \cite{PhysRevB.71.045110}. 
This model is a commuting projector model \cite{0907.2204, PhysRevB.103.195155, 2305.14068}, and for a finite region $R$, the ground state space over $R$ is isomorphic to the skein module for $R$ \cite{MR3204497, 2305.14068, 2511.21521}. 
In this model, anyon states can be obtained by composing the ground state with an infinite-depth quantum circuit \cite{2511.21521}. 
Each unitary in this circuit acts along a finite part of an infinite string. The full circuit consists of applying these unitaries along an infinite string, with the starting point of one unitary covering the end point of the previous one.
Each unitary consists of four orthogonal partial isometries and an identity on their overall orthogonal complement.
For an anyon type $X$, these partial isometries either create an $X - \bar X$ pair along the string, annihilate such a pair, or move an $X$ excitation from one end of the string to the other. 
To obtain an anyon localized in the region $B$, we take the strings comprising this infinite-depth quantum circuit to be located in $B$. 
We let $\ket{\Psi}$ denote the anyon state obtained in this way. 

Now, suppose $X$ is a nonabelian anyon type and $a \neq 1$ is one of the possible fusion outcomes for $X$ and $\bar X$. 
We can obtain the state in which $a$ is split off from $X$ via the following procedure. 
First, along the first string comprising the infinite-depth quantum circuit, we locally witness $a$ splitting off from $X \otimes \bar X$ (cf.~\cref{fig:LocalAnyonSplitting-LW}). 
We may choose our splitting operator so that the $a$ excitation is located in $A$. 
We now apply our infinite-depth quantum circuit that moves $X$ to infinity. 
We let $\ket{\Phi}$ denote the state obtained in this way. 

\begin{figure}[!ht]
\centering
\begin{tikzpicture}
\draw[thick, dashed] (-5, 1.25) -- (-3.25, -1) -- (-1.5, 1.25);
\filldraw[fill=white] (-3.75, 0) rectangle (-2.75, .5);
\draw[mid>, very thick, blue, rounded corners] (-2.75, .25) -- (-1, -.25) -- (.75, -.25) -- (.75, .25);
\node[blue] at (-2, -.1) {$\scriptstyle a$};
\filldraw[fill=white] (-1, 0) rectangle (0, .5);
\draw[mid>, very thick, red] (0, .25) -- (1, .25);
\node[red] at (.25, .5) {$\scriptstyle X$};
\filldraw[fill=white] (1, 0) rectangle (2, .5);
\filldraw[cyan] (.75, .25) circle (0.05cm);
\node at (-3.25, 1.25) {$\scriptstyle A$};
\node at (-.5, 1.25) {$\scriptstyle B$};
\end{tikzpicture}
	\caption{\small An $a$ anyon being locally split from an $X -\bar X$ anyon pair in the Levin--Wen model. Afterwards, the $X$ anyon is moved to infinity, with an $\bar X$ left behind in $B$.}
\label{fig:LocalAnyonSplitting-LW}
\end{figure}

Inside $B$, the two states $\ket{\Psi}$ and $\ket{\Phi}$ have an $\bar X$ excitation where the anyon is located, but otherwise the two states satisfy the ground state constraints within $B$. 
These states therefore agree when evaluated on observables in $B$. 
Indeed, the nonabelian anyon state $\ket{\Psi}$ satisfies a local topological quantum order (LTQO) condition \cite{MR3077916}, as implicitly shown in \cite{2511.21521}. 
On the other hand, the argument in section \ref{sec:PhysicalArgument} shows that for every unitary $u \in M_A$, $\bra{\Psi} u \ket{\Phi} = 0$. 
Hence the argument in section \ref{sec:PhysicalArgument} can be rigorously instantiated in the case of the Levin--Wen model.\footnote{We expect that this argument can also be made rigorous for the nonabelian Quantum Double model, using the anyon states written down in \cite{MR4927817, HamdanThesis}.}

\subsection{Failure of quantum steering} With the absence of uniqueness of purifications, one may ask which parts of pure state entanglement theory remain. In \cite{luijkQuantumSteeringEquivalent2026} it was shown that \emph{quantum steering} \cite{schrodingerProbabilityRelationsSeparated1936,wisemanSteeringEntanglementNonlocality2007} can provide an answer. Given a pure state $\ket\Psi \in \cH$ and local algebras $M_A$,$M_B$, we say that the state of $A$ is \emph{steerable} by $B$ if every ensemble decomposition
\begin{align}
		\omega_A = \sum_x p_x \omega_{A,x}
\end{align}
into a probability distribution $p_x$ and states $\omega_{A,x}$ of the state $\omega_A(a) = \bra\Psi a\ket\Psi$ on $M_A$ can be obtained by a measurement by Bob: There exists a positive operator valued measure $\{b_x\} \subset M_B$ with $0\leq b_x \leq 1$ and $\sum_x b_x = 1$ such that 
\begin{align}
		p_x \omega_{A,x}(a) = \bra\Psi a\, b_x \ket\Psi,\qquad a\in M_A.
\end{align}
It was shown in \cite{luijkQuantumSteeringEquivalent2026} that $A$ is steerable by $B$ if and only if there exists a \emph{conditional expectation} $E\colon M_B' \to M_A$ such that
\begin{align}\label{eq:CE-invariance}
	\bra \Psi E(b) \ket\Psi = \bra\Psi b\ket \Psi,\qquad b\in M_B'. 
\end{align}
A conditional expectation is a unital, completely positive map such that $E^2 = E$. 
We show that, while a conditional expectation $E$ exists, for the case of Levin--Wen models, the anyon state $\ket\Psi$ fulfills \eqref{eq:CE-invariance} if and only if the anyon is abelian. 
Thus, not only uniqueness of purifications, but also steering fails for nonabelian anyon states.  

\subsection{Failure of approximate Haag duality}
As emphasized in the introduction, there is a more general notion of approximate Haag duality that is often used when studying sector theory \cite{MR4362722, 2410.04736, 2511.08382}.\footnote{It is an open question whether approximate Haag duality is in fact more general than strict Haag duality, as there are no known models where approximate Haag duality is known to hold but Haag duality does not hold.}
The advantage of approximate Haag duality is that it has been shown to be stable under perturbations \cite{MR4362722}:
Whether it holds is a property of a gapped phase of matter and not of a single representant of that phase.
Assuming that the vacuum fulfills Haag duality, we show that nonabelian anyon states also violate this more general condition (recall that abelian anyons preserve Haag duality). 
Since a nonabelian anyon state for a Levin--Wen model is a gapped ground state of a commuting projector Hamiltonian \cite{2511.21521} (see also Section~\ref{sec:approximateHD}), our work disproves the conjecture that every gapped ground state satisfies approximate Haag duality.
Since approximate Haag duality is a phase property and is implied by Haag duality, this shows that throughout the whole gapped phase corresponding to this Hamiltonian, Haag duality fails.
This latter gapped phase is of course different from the gapped phase described by the initial Levin-Wen model. Nevertheless, we argue (see remark~\ref{rem:anyons}) that the anyon content of this phase is identical to the initial phase. 

When trying to classify topological orders, we are usually interested in the gapped phase associated to the ``natural vacuum" and not in the gapped phase corresponding to superselection sectors (anyons) on the natural vacuum.
It is a priori not clear how to decide whether a given gapped phase, say represented by a commuting projector Hamiltonian, belongs to the one case or the other.
Our results show that approximate Haag duality (and potentially Haag duality itself) provide a reasonable criterion to select for the right phases. 
This claim is strengthened by the observation that the vacuum phase and the phase associated to a superselection sector sector have the same anyon theories.
One might object that approximate Haag duality cannot distinguish between the vacuum phase and the phase associated to an abelian superselection sector.
However, we expect that the gapped phases corresponding to abelian superselection sectors always coincide with the associated vacuum superselection sectors (in concrete models, \emph{abelian} anyons can be prepared by finite-depth quantum circuits).

\subsection{Conclusion and outlook}
The large-scale structure of entanglement in quantum many-body systems can behave widely different from systems with finitely many degrees of freedom \cite{vanluijkLargeScaleStructureEntanglement2025}. 
Here, we uncovered that a single nonabelian anyon is sufficient to violate the fundamental property of uniqueness of purifications. 
Notably, this occurs even if the subsystems of interest are topologically trivial, in contrast to the case of abelian anyons.

Recently, an alternative approach to topological order has been formulated in terms of the \emph{entanglement bootstrap} program \cite{MR4109024, PhysRevB.103.115150, 2404.05867, 2405.17379, MR4914881}, in which the anyon theory is deduced directly from the entanglement structure of the ground state. It is interesting to observe that a state with a single nonabelian anyon also fails to fulfill the entanglement bootstrap axioms (more precisely, axiom \textbf{A1} \cite{MR4109024}) due to the resulting correction to the topological entanglement entropy for regions containing the anyon \cite{MR2214523, PhysRevLett.96.110405}, while a state with an abelian anyon should still fulfill them. This situation is hence completely analogous to that of Haag duality.
We hence expect that the entanglement bootstrap axioms also imply Haag duality.

\subsection*{Acknowledgments}
We would like to thank Sven Bachmann, Alex Bols, Milo Moses, Pieter Naaijkens, Alexander Stottmeister and Lauritz van Luijk for helpful conversations. 
The conversations with Milo Moses took place at the workshop ``Quantum Field Theory and Topological Phases via Homotopy Theory and Operator Algebras" hosted by the Center of Mathematical Sciences and Applications at Harvard University. 
The conversations with Sven Bachmann and Alex Bols took place at the Les Houches summer school titled ``Quantum Theory on All Scales."

Funded by the European Union. Views and opinions expressed are however those of the authors only and do not necessarily reflect those of the European Union or the European Research Council Executive Agency. Neither the European Union nor the granting authority can be held responsible for them.
This work is supported by an ERC grant (ERC StG LargEnt, 101219447, \href{https://doi.org/10.3030/101219447}{DOI:10.3030/101219447}).

Funded by zukunft.niedersachsen, the joint science funding program of the
Lower Saxony Ministry of Science and Culture and the Volkswagen Foundation.

\subsection*{Use of large lanuage models (LLMs)}
H.W. used a LLM (Claude Opus) for gaining a better understanding of the general theory of nonabelian anyons and sector theory as well as preparing Fig.~\ref{fig:anyons}. 
All technical results have been worked out by the authors, and the entire text of the paper was written by the authors.

\tableofcontents

\section{Basics of operator algebraic approach}
\label{sec:OABasics}

Here, we summarize the operator algebraic approach to quantum spin systems \cite{MR887100, MR1441540, MR3617688}. 
Let $\Gamma$ be an infinite regular lattice on the plane. 
We associate a finite-dimensional Hilbert space $\cH_s \coloneqq \bbC^d$ (with $d \geq 2$) to each site $s \in \Gamma$, where by ``site" we refer to vertices, edges, or faces as appropriate. 
These finite-dimensional Hilbert spaces represent quantum particles at each of the sites. 
To a finite subset $S \subseteq \Gamma$, we associate the Hilbert space $\cH_S \coloneqq \bigotimes_{s \in S} \bbC^d$, which represents the interacting particles. 
We now adopt the Heisenberg approach, where we look at the operators acting on this Hilbert space. 
Specifically, we consider the algebra $\fA(S) \coloneqq B(\cH_S) = \bigotimes_{s \in S} M_d(\bbC)$. 
Note that if $S_1 \subseteq S_2$, then $\fA(S_1) \subseteq \fA(S_2)$, where we include by tensoring with the identity on $S_2 \setminus S_1$. 
We can then take the inductive limit, obtaining the \emph{quasilocal $\rmC^*$-algebra}
\begin{equation}
\fA
\coloneqq 
\overline{\bigotimes_{s \in \Gamma} M_d(\bbC)}^{\| \cdot \|}
=
\overline{\bigcup_{\substack{S \subseteq \Gamma \\ \text{finite}}} \fA(S)}^{\| \cdot \|}.
\end{equation}
For an arbitrary region $\Lambda \subseteq \Gamma$, we can define $\fA(\Lambda) \coloneqq \overline{\bigotimes_{s \in \Lambda} M_d(\bbC)}^{\| \cdot \|}$ similarly. 

A state $\omega \colon \fA \to \bbC$ is a positive unital linear functional. 
To a state $\omega$, we can associate a GNS representation $(\pi, \cH, \Omega)$, where $\pi \colon \fA \to B(\cH)$ is a representation and $\Omega$ is a unit cyclic vector for $\pi(\fA)$ such that $\omega(x) = \bra{\Omega} \pi(x) \ket{\Omega}$ for all $x \in \fA$ \cite[Thm.~2.5.3]{MR3617688}. 
Furthermore, the GNS representation is unique up to unitary equivalence. 
To obtain the GNS representation, one can define a semi-inner product on $\fA$ by $\langle a | b \rangle \coloneqq \omega(a^*b)$. 
Quotienting by the kernel of this form and then completing yields the Hilbert space $\cH$. 
We let $\Omega$ denote the image of $1$ under this quotient map. 
Now, the map $b \mapsto ab$ for $a \in \fA$ is bounded with respect to the form, so it induces an action of $\fA$ on $\cH$, which we denote $\pi \colon \fA \to B(\cH)$. 
One can verify that $(\pi, \cH, \Omega)$ does indeed satisfy the requirements of the GNS representation. 

When studying superselection sectors, we will consider cone regions. 
A \emph{cone in $\bbR^2$} is a region bounded by two rays with the same starting point. 
Note that a cone does not need to be a convex set. 
A \emph{cone in $\Gamma$} is a region of the form $\Lambda \cap \Gamma$, where $\Lambda$ is a cone in $\bbR^2$. 
We will generally use the word ``cone" to refer to a cone in $\Gamma$, not in $\bbR^2$. 

We will also have a fixed, pure reference state $\omega_0$, which in application is a ground state for some model. 
We let $(\pi_0, \cH, \Omega)$ be a GNS representation for $\omega_0$. 
To each cone $\Lambda \subseteq \Gamma$, we associate the von Neumann algebra $\cR(\Lambda) \coloneq \pi_0(\fA(\Lambda))'' \subseteq B(\cH)$. 
Since $\fA(\Lambda)$ and $\fA(\Lambda^c)$ commute, it follows that $\cR(\Lambda) \subset \cR(\Lambda^c)'$. 
We will require the much stronger condition that the net of von Neumann algebras $\cR(\Lambda)$ satisfies \emph{Haag duality}, that is, for every cone $\Lambda$, we have 
\begin{equation}
\cR(\Lambda^c)'
=
\cR(\Lambda).
\end{equation}

\begin{rem}
\label{rem:ProperlyInfinite}
When studying superselection theory, it is also standard to require that the von Neumann algebras $\cR(\Lambda)$ are properly infinite. 
In our setting, the algebra $\cR(\Lambda)$ is \emph{properly infinite} if for every projection $p \in \cR(\Lambda)$ with $p \neq 1$, there exists an isometry $v \in \cR(\Lambda)$ such that $v v^* = p$ \cite[Def.~V.1.15-16]{MR1943006}.\footnote{This definition is equivalent to the usual definition of properly infinite since the von Neumann algebras $\cR(\Lambda)$ are \emph{factors}, meaning that they have trivial center.} 
However, this assumption is satisfied for any gapped ground state of a Hamiltonian with uniformly bounded finite range interactions \cite[Lem.~5.3]{MR4362722}.
\end{rem}

We now define superselection sectors with respect to a reference state $\omega_0$. 
Let $(\pi_0, \cH, \Omega)$ be a GNS representation for $\omega_0$. 
A \emph{superselection sector} is a representation $\pi \colon \fA \to B(\cH_1)$ satisfying that for every cone $\Lambda$, we have 
\begin{equation}
\pi|_{\fA(\Lambda^c)}
\cong
\pi_0|_{\fA(\Lambda^c)},
\end{equation}
where $\cong$ denotes unitary equivalence.
Explicitly, this means that there exists a unitary $U_{\pi,\Lambda}: \cH_1 \to \cH$ such that
\begin{align}
	U_{\pi,\Lambda} \pi(x) U_{\pi,\Lambda}^* = \pi_0(x),\quad x\in \fA(\Lambda^c).
\end{align}

We will often use the term \emph{anyon sector} to refer to an irreducible superselection sector, as the irreducible sectors correspond to the physical anyons.
Here, recall that the representation $\pi$ is called irreducible if its commutant
\begin{align}
		\pi(\fA)' \coloneqq \{ x\in B(\cH_1) : [x,a]=0\ \forall a\in \fA \}
\end{align}
consists only of scalars.

Note that for any cone $\Lambda$, any superselection sector $\pi$ is unitarily equivalent to a sector $\rho \colon \fA \to B(\cH)$ that is \emph{localized in $\Lambda$}, meaning that $\rho|_{\fA(\Lambda^c)} = \pi_0|_{\fA(\Lambda^c)}$. Indeed, we can simply set $\rho = U_{\pi,\Lambda} \pi(\cdot) U_{\pi,\Lambda}^*$.
Evidently, if $V \in \cR(\Lambda)$ is any unitary, then $\tilde \rho(\cdot) = V \rho(\cdot) V^*$ is also a superselection sector localized in $\Lambda$.
The representation $\rho$ is also \emph{transportable}, meaning that for any cone $\Delta$, there exists $\hat \rho \colon \fA \to B(\cH)$ such that $\hat \rho \cong \rho$ and $\hat \rho$ is localized in $\Delta$. 
In summary, a localized and transportable representation $\rho \colon \fA \to B(\cH)$ is a superselection sector, and any superselection sector is unitarily equivalent to a localized and transportable representation $\rho$.

A morphism $T \colon \pi_1 \to \pi_2$ between superselection sectors $\pi_1 \colon \fA \to B(\cH_1)$ and $\pi_2 \colon \fA \to B(\cH_2)$ is an \emph{intertwiner}, that is, a map $T \colon \cH_1 \to \cH_2$ such that $T\pi_1(-) = \pi_2(-)T$. 

If $\pi \colon \fA \to B(\cH_1)$ is a superselection sector, then for any cone $\Lambda$, we have that $\pi|_{\fA(\Lambda)}$ has a unique, normal extension $\pi_\Lambda \colon \cR(\Lambda)\to B(\cH_1)$, where 
\begin{align}
	\pi_\Lambda(\cdot) = V_\Lambda \cdot V_\Lambda^*
\end{align}
for some unitary $V_\Lambda \colon \cH \to \cH_1$ (see for instance \cite[Facts 6.3]{MR4927814}).\footnote{There can only be one normal extension of $\pi|_{\fA(\Lambda)}$ to $\cR(\Lambda)$ since $\pi_0(\fA(\Lambda))$ is dense in $\cR(\Lambda)$ in the $\sigma$-WOT topology.} 
Observe that $\pi_\Lambda$ is necessarily injective.
Additionally, if $\Lambda \subseteq \Delta$, we have that $\pi_\Delta|_{\cR(\Lambda)} = \pi|_\Lambda$ since $\pi_\Delta|_{\cR(\Lambda)}$ is also a normal extension of $\pi|_{\fA(\Lambda)}$ to $\cR(\Lambda)$. 
We can apply this discussion to a superselection sector $\rho$ that is localized in a cone $\Lambda$ to obtain an extension $\rho_\Lambda$ to $\cR(\Lambda)$.
Assuming that the net $\cR(\Lambda)$ satisfies Haag duality, we have the following well-known observations (see \cite[Facts 2.27]{MR4927814}):
\begin{itemize}
\item
If $\rho \colon \fA \to B(\cH)$ is a superselection sector localized in $\Lambda$, then $\rho_\Lambda(\cR(\Lambda)) \subseteq \cR(\Lambda)$,  
\item
If $\rho \colon \fA \to B(\cH)$ is a superselection sector localized in $\Lambda$ and $\Lambda \subseteq \Delta$, then $\rho$ is localized in $\Delta$, 
\item
If $\rho_1 \colon \fA \to B(\cH)$ and $\rho_2 \colon \fA \to B(\cH)$ are superselection sectors localized in $\Lambda$ and $T \colon \rho_1 \to\rho_2$ is an intertwiner, then $T \in \cR(\Lambda)$. 
\end{itemize}

We now define the tensor product of superselection sectors. The most commonly used approach is detailed in \cite{MR4362722} and uses what is known as the auxiliary algebra.\footnote{The paper \cite{MR4927814} provides a different approach for defining the tensor product of superselection sectors that avoids using the auxiliary algebra. This work generalizes the tensor product for conformal nets \cite{MR1231644}. We use the auxiliary algebra here since it is more common for quantum spin systems and it is easier to write down the composition of superselection sectors in this setting.} 
This approach is adapted from the work of \cite{MR660538} using space-like cones in the continuous setting. 

For two cones $\Lambda$ and $\Delta$, we define 
\begin{align}
	\Lambda \Cap \Delta = \{\text{cones } \widehat{\Lambda} : \widehat{\Lambda} \subseteq \Lambda \cap \Delta\}
\end{align}
as the set of cones included in both $\Lambda$ and $\Delta$.
Fix a reference ``excluded" cone $\Lambda_a$. 
The \emph{auxiliary algebra} $\fA^{\Lambda_a}$ is defined to be the C* algebra generated by all the von Neumann algebras $\cR(\Delta)$ for cones $\Delta$ whose intersection with $\Lambda_a$ doesn't contain a cone:
\begin{align}
	\fA^{\Lambda_a} \coloneqq \overline{\bigcup_{\Delta : \Delta \Cap \Lambda_a = \emptyset} \cR(\Delta)}^{\| \cdot \|}.
\end{align}
It is not hard to see that $\pi_0(\fA) \subset \fA^{\Lambda_a}$, independent of the choice of $\Lambda_a$.
Now, let $\rho \colon \fA \to B(\cH)$ be a superselection sector localized in a cone $\Lambda$ such that $\Lambda \Cap \Lambda_a = \emptyset$. 
Then $\rho$ has a unique, norm-continuous extension $\rho^{\Lambda_a}\colon \fA^{\Lambda_a}\to B(\cH)$ such that for every cone $\Delta$ such that $\Delta \Cap \Lambda_a = \emptyset$, we have $\rho^{\Lambda_a}|_{\cR(\Delta)} = \rho_\Delta$ \cite[Lem.~4.1]{MR660538}. 
Furthermore, since $\rho_\Delta(\cR(\Delta)) \subseteq \cR(\Delta)$ whenever $\Lambda \subseteq \Delta$, we have that $\rho^{\Lambda_a}$ is an endomorphism of $\fA^{\Lambda_a}$. 
For $\rho_1, \rho_2 \colon \fA \to B(\cH)$ superselection sectors localized in $\Lambda$, we define $\rho_1 \otimes \rho_2 \coloneqq \rho_1^{\Lambda_a} \circ \rho_2$, which is a superselection sector localized in $\Lambda$ that does not depend on the choice of $\Lambda_a$ \cite[Thm.~4.2]{MR660538}. 

We can also take the tensor product of morphisms. 
Suppose $T \colon \rho_1 \to \rho_2$ and $S \colon \sigma_1 \to \sigma_2$, where $\rho_1, \rho_2, \sigma_1, \sigma_2 \colon \fA \to B(\cH)$ are all superselection sectors localized in some cone $\Lambda$. 
We define
\begin{equation}
T \otimes S
\coloneqq
T(\rho_{1})_{\Lambda}(S)
=
(\rho_{2})_{\Lambda}(S)T.
\end{equation}
Observe that the above expression is well-defined since $T, S \in \cR(\Lambda)$. 
We note the following important special cases of the above tensor product formula: 
\begin{equation}
\id_\rho \otimes S
=
\rho_\Lambda (S), 
\qquad\qquad
T \otimes \id_\sigma
=
T.
\end{equation}

Assuming the von Neumann algebras $\cR(\Lambda)$ satisfy Haag duality and are properly infinite (see Remark \ref{rem:ProperlyInfinite}), the superselection sectors form a braided tensor category \cite{MR4362722}, where the tensor product structure is exactly the one we described. 
In particular, the category admits direct sums and subobjects \cite[Lem.~5.7-5.8]{MR4362722}.

We can draw further parallels to the discussion in the main text if we restrict to the subcategory of dualizable sectors.
A superselection sector $\rho \colon \fA \to B(\cH)$ localized in a cone $\Lambda$ is \emph{dualizable} if there exists another superselection sector $\rho^\vee \colon \fA \to B(\cH)$ localized in $\Lambda$ and morphisms $R \colon \mathds{1} \to \rho^\vee \otimes \rho$, $\bar R \colon \mathds{1} \to \rho \otimes \rho^\vee$ that satisfy the following \emph{zig-zag equations}: 
\begin{equation}
\label{eq:zig-zag}
(R^* \otimes \id_{\rho^\vee})(\id_{\rho^\vee} \otimes \bar R)
=
R^* (\rho^\vee)_\Lambda(\bar R) = \mathds{1},
\qquad\qquad
(\id_\rho \otimes R^*)(\bar R \otimes \id_\rho)
=
\rho_\Lambda (R^*)\bar R 
= 
\mathds{1}.
\end{equation}
These equations are called the ``zig-zag equations" because they have a graphical interpretation. 
We can graphically depict the morphisms $R, \bar R$ as follows: 
\begin{equation}
R
\coloneqq
\tikzmath{
\draw[mid<] (0, .5) arc(0:-180:.5);
},
\qquad\qquad
\bar R
\coloneqq
\tikzmath{
\draw[mid>] (0, .5) arc(0:-180:.5);
}.
\end{equation}
The zig-zag equations can then be written in the following way: 
\begin{equation}
\tikzmath{
\draw[mid<] (0, -.25) -- (0, .75) arc(180:0:.375) arc(-180:0:.375) -- (1.5, 1.75);
}
=
\tikzmath{
\draw[mid<] (0, -.25) -- (0, 1.75);
}\,,
\qquad\qquad
\tikzmath{
\draw[mid>]
(0, 1.75) -- (0, .75) arc(-180:0:.375) arc(180:0:.375) -- (1.5, -.25);
}
=
\tikzmath{
\draw[mid>] (0, -.25) -- (0, 1.75);
}\,.
\end{equation}

For an anyon sector $\rho$, the sector $\rho^\vee$ corresponds exactly to the ``dual anyon type" discussed in section the main text. 
We call $\rho^\vee$ a \emph{conjugate sector} for $\rho$. 
The dualizable sectors form a braided tensor subcategory of the category of all superselection sectors \cite[Thm.~2.4]{MR1444286}. 
Furthermore, because the category of superselection sectors admits subobjects, every dualizable sector decomposes into a direct sum of irreducible ones \cite[Lem.~3.2]{MR1444286}. 
These facts allow us to recover the discussion in the main text in this framework. 
Bringing two anyons with associated sectors $\pi, \sigma$ ``close together" corresponds to the sector $\pi \otimes \sigma$. 
Assuming $\pi, \sigma$ are dualizable, $\pi \otimes \sigma$ is also dualizable, so we can write $\pi \otimes \sigma$ as a direct sum of (irreducible) anyon sectors. 
This direct sum decomposition corresponds to the different possible fusion outcomes and the associated fusion channels. 
In all known models, every anyon sector is dualizable.

\section{DHR argument}

In the following let $\rho$ be an anyon sector localized in a cone $\Lambda$. 
We then obtain a pure state $\omega_\rho$ on $\fA$ by setting
\begin{align}
		\omega_\rho(x) = \bra\Omega \rho(x)\ket\Omega.
\end{align}
The state $\omega_\rho$ describes a situation where an anyon of type $\rho$ is localized in $\Lambda$, but the state coincides with the vacuum $\omega_0$ outside of $\Lambda$.
Denote by $(\pi_\rho, \Omega_\rho,\cH_\rho)$ a GNS triple of $\omega_\rho$ and set
\begin{align}
	\cR_\rho(\Delta) = \pi_\rho(\fA(\Lambda))''
\end{align}
for any cone $\Delta$. 

\begin{thm}\label{thm:dhr}
Suppose that the net of reference von Neumann algebras $\cR(\Lambda)$ fulfills Haag duality and let $\rho: \fA \to B(\cH)$ be an anyon sector localized in a cone $\Lambda$. The following are equivalent:
	\begin{enumerate}
		\item\label{item:dhr1} $\rho$ is abelian, i.e., there exists a superselection sector $\rho^\vee$ localized in $\Lambda$ such that $\rho\otimes \rho^\vee = \rho^\vee \otimes \rho = \mathds{1}$.
		\item\label{item:dhr2} The net of von Neumann algebras $\cR_\rho(\Delta)$ fulfills Haag duality.
	\end{enumerate}
\end{thm}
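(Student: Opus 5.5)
The first step is to realise the GNS representation of $\omega_\rho$ concretely. Since $\rho$ is irreducible and $\Omega$ is a unit vector with $\omega_\rho(x)=\bra\Omega\rho(x)\ket\Omega$, we take $\cH_\rho = \overline{\rho(\fA)\Omega} = \cH$, $\pi_\rho=\rho$ and $\Omega_\rho=\Omega$. Irreducibility makes every nonzero vector cyclic. Hence $\cR_\rho(\Delta)=\rho(\fA(\Delta))''$ for every cone $\Delta$. We compute these algebras for the two cone types relevant to the proof, namely cones $\Delta\supseteq\Lambda$ and their complements $\Delta^c$. Haag duality of $\cR_\rho$ on general cones can then be reduced to this case via transportability: replace $\rho$ by an equivalent $\tilde\rho=V\rho V^*$ localized in a cone contained in $\Delta$ or in $\Delta^c$. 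Unitary conjugation preserves Haag duality, and the unitary $V$ implementing $\rho\cong\tilde\rho$ is a global unitary. So we may assume, without loss of generality, that $\rho$ is localized in $\Delta$ (possibly after enlarging the localization, which Haag duality permits).

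With $\rho$ localized in $\Delta$, we have $\rho|_{\fA(\Delta^c)}=\pi_0|_{\fA(\Delta^c)}$, so $\cR_\rho(\Delta^c)=\cR(\Delta^c)$. By normality of the extension $\rho_\Delta$ (the $\sigma$-weak density argument recalled above), $\cR_\rho(\Delta)=\rho_\Delta(\cR(\Delta))$. Haag duality of the vacuum gives $\cR_\rho(\Delta^c)'=\cR(\Delta^c)'=\cR(\Delta)$. Haag duality for $\cR_\rho$ at $\Delta$ is therefore equivalent to
\begin{equation*}
\rho_\Delta(\cR(\Delta))=\cR(\Delta),
\end{equation*}
that is, to surjectivity of the endomorphism $\rho_\Delta$ of $\cR(\Delta)$. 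The analogous statement with $\Delta$ and $\Delta^c$ swapped is handled by localizing $\rho$ in $\Delta^c$ instead.

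The proof now reduces to showing that $\rho_\Delta$ is an automorphism of $\cR(\Delta)$ if and only if $\rho$ is abelian. For (1)$\Rightarrow$(2), we have $\rho\otimes\rho^\vee=\rho^{\Lambda_a}\circ\rho^\vee=\mathds 1$. Restricting to $\cR(\Delta)$ (with $\Lambda_a$ chosen disjoint from $\Delta$) gives $\rho_\Delta\circ\rho^\vee_\Delta=\id$, hence surjectivity. Here the equality $\rho\otimes\rho^\vee=\mathds 1$ is read as literal equality of the localized endomorphisms, as the statement says. For (2)$\Rightarrow$(1), if $\rho_\Delta$ is surjective it is an automorphism of $\cR(\Delta)$, since it is injective. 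Define $\rho^\vee:=\rho_\Delta^{-1}\circ\pi_0$ on $\fA(\Delta)$, extended by $\pi_0$ on $\fA(\Delta^c)$. This is consistent because $\rho_\Delta^{-1}$ commutes appropriately with $\cR(\Delta^c)$. One must check that $\rho^\vee$ is a well-defined representation of $\fA$: it agrees with $\pi_0$ outside $\Delta$, and its image of $\fA(\Delta)$ lies in $\cR(\Delta)$, which commutes with $\cR(\Delta^c)$, so the two pieces generate a representation of the tensor product. Transportability of $\rho^\vee$ also has to be checked, and follows from transportability of $\rho$ by conjugating with the transporting unitaries. Then $\rho\otimes\rho^\vee=\rho^\vee\otimes\rho=\mathds 1$ by construction.

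The main obstacle is this last step. One must verify that the inverse endomorphism defines a genuine localized, transportable sector on the whole of $\fA$, and that the construction is independent of the choice of $\Delta$ and compatible with the auxiliary-algebra definition of $\otimes$. A second, subtler point is the reduction step for cones in general position. If Haag duality of $\cR_\rho$ fails for a single cone, we want it to fail for a cone containing $\Lambda$, so that the invariant statement is "$\rho_\Delta$ is onto for all cones $\Delta$ in which $\rho$ is localized". If this reduction is problematic, I would fall back on the dimension argument: $\rho_\Delta(\cR(\Delta))'\cap\cR(\Delta)\cong\End(\rho)$ together with the index $d_\rho^2$ of $\rho_\Delta(\cR(\Delta))\subseteq\cR(\Delta)$. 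That index equals $1$ exactly when $\rho$ is abelian.
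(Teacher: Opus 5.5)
Your skeleton matches the paper's. You identify $\cR_\rho(\Delta)$ with $\rho_\Delta(\cR(\Delta))$ and reduce Haag duality at a cone containing the localization region to $\rho_\Delta(\cR(\Delta))=\cR(\Delta)$. You get (1)$\Rightarrow$(2) from $\rho_\Delta\circ\rho^\vee_\Delta=\id$, and (2)$\Rightarrow$(1) by inverting $\rho_\Lambda$. Your piecewise definition of $\rho^\vee$ on $\fA(\Lambda)\otimes\fA(\Lambda^c)$ is a legitimate and more concrete substitute for the paper's $(\rho^{\Lambda_a})^{-1}|_{\fA}$; taking $\Delta=\Lambda$ gives localization in $\Lambda$ for free and spares you the independence-of-$\Lambda_a$ zig-zag.

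The genuine gap is the step you yourself call the main obstacle, which is exactly where the paper's proof does its work. Statement (1) requires $\rho^\vee$ to be a superselection sector, i.e.\ transportable, and ``conjugating with the transporting unitaries'' does not deliver this. If $\hat\rho=\Ad(V)\circ\rho$ is localized in a cone $\widehat\Lambda$, the identity $V\rho(y)V^*=\pi_0(y)$ for $y\in\fA(\widehat\Lambda^c)$ is a property of $\rho$, not of $\rho^\vee$. So for $\widehat\Lambda$ far from $\Lambda$ there is no reason for $\Ad(V)\circ\rho^\vee$ or $\Ad(V^*)\circ\rho^\vee$ to be localized in $\widehat\Lambda$.

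The missing idea is that the transporter of $\rho^\vee$ is the image of $V^*$ under the new endomorphism. Suppose $\rho$ and $\hat\rho=\Ad(V)\circ\rho$ are both localized in a cone $\widehat\Lambda$. Then:
\begin{itemize}
\item $V\in\cR(\widehat\Lambda)$ by vacuum Haag duality;
\item $\hat\rho_{\widehat\Lambda}=\Ad(V)\circ\rho_{\widehat\Lambda}$ by uniqueness of normal extensions;
\item both are automorphisms of $\cR(\widehat\Lambda)$ by (2), recalling that $\cR_{\hat\rho}=V\cR_\rho V^*$.
\end{itemize}
Hence $\hat\rho_{\widehat\Lambda}^{-1}=\Ad(W)\circ\rho_{\widehat\Lambda}^{-1}$ with the unitary $W:=\rho_{\widehat\Lambda}^{-1}(V^*)\in\cR(\widehat\Lambda)$. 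Since $W$ commutes with $\pi_0(\fA(\widehat\Lambda^c))$, your construction gives $\hat\rho^\vee=\Ad(W)\circ\rho^\vee$ on all of $\fA$.

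This also uses that building $\rho^\vee$ from a larger cone $\widehat\Lambda\supseteq\Lambda$ gives the same map as building it from $\Lambda$. That holds because $\rho_{\widehat\Lambda}$ restricts to $\rho_\Lambda$ on $\cR(\Lambda)$ and to the identity on $\pi_0(\fA(\widehat\Lambda\setminus\Lambda))$. A zig-zag of cones then reaches any target cone. This is precisely the paper's identity $(\hat\rho^\vee)^{\Lambda_a}=\Ad[(\rho^\vee)^{\Lambda_a}(V^*)]\circ(\rho^\vee)^{\Lambda_a}$.

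The same device settles your ``without loss of generality'' in (1)$\Rightarrow$(2) for cones not containing $\Lambda$. The transported sector is again abelian in the literal sense of (1), with conjugate $\Ad(\rho^\vee_{\widehat\Lambda}(V^*))\circ\rho^\vee$; the paper also leaves this point implicit. The index-theoretic fallback is unnecessary, and it would presuppose that $\rho$ is dualizable, which the theorem does not assume.
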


We begin by clarifying the relationship between $\cR_\rho(\Delta)$ and $\cR(\Delta)$. 
Recall from the previous section that if $\rho \colon \fA \to B(\cH)$ is a superselection sector, then for any cone $\Delta$, we have that $\rho|_{\fA(\Delta)}$ has a unique, normal and injective extension $\rho_\Delta$ to $\cR(\Delta)$, where $\rho_\Delta = \Ad(V_\Delta)$ for some unitary $V_\Delta \in B(\cH)$. 
Since $\rho$ is irreducible, $(\rho, \Omega,\cH)$ is also a GNS triple for $\omega_\rho$. By the uniqueness of the GNS representation, there exists a unitary $U \colon \cH \to \cH_\rho$ such that
\begin{align}
		\cR_\rho(\Delta) = U\rho(\fA(\Delta))''U^* = U\rho_\Delta(\cR(\Delta))U^*
\end{align}
for any cone $\Delta$. 
Thus Haag duality for the net $\cR_\rho(\Delta)$ is equivalent to Haag duality for the net $\rho_\Delta(\cR(\Delta))$.

To show \cref{thm:dhr}, we adapt the argument in \cite[Lem.~2.2]{MR297259} to the quantum spin system setting, using the setup originally formulated in \cite{MR660538}. 

\begin{proof}[Proof of \cref{item:dhr2} $\Rightarrow$ \cref{item:dhr1}]
For any cone $\Delta$ with $\Lambda \subseteq \Delta$, we have that $\rho$ is localized in $\Delta$ and thus
\begin{equation}
\rho_\Delta(\cR(\Delta))
=
\rho_{\Delta^c}(\cR(\Delta^c))'
=
\cR(\Delta^c)'
=
\cR(\Delta).
\end{equation}
Therefore, $\rho_\Delta$ defines an automorphism of $\cR(\Delta)$. 
Note that $\rho_\Delta^{-1} = \Ad(V_\Delta^*)$, so $\rho_\Delta^{-1}$ is also normal. 

Let $\Lambda_a$ be a cone disjoint from $\Lambda$.
By the same argument used to define $\rho^{\Lambda_a}$ from $\rho$, the collection of automorphisms $\rho_\Delta^{-1} \colon \cR(\Delta) \to \cR(\Delta)$ extends uniquely to a norm-continuous endomorphism $(\rho^\vee)^{\Lambda_a}$ of $\fA^{\Lambda_a}$.
For any cone $\Delta$ such that $\Delta \Cap \Lambda_a = \emptyset$, we have that 
\begin{gather}
(\rho^\vee)^{\Lambda_a} \circ \rho^{\Lambda_a}|_{\cR(\Delta)}
=
\rho_\Delta^{-1} \circ \rho_\Delta
=
\id_{\cR(\Delta)},
\\
\rho^{\Lambda_a} \circ (\rho^\vee)^{\Lambda_a}|_{\cR(\Delta)}
=
\rho_\Delta \circ \rho_\Delta^{-1}
=
\id_{\cR(\Delta)}.
\end{gather}
Therefore, since $\fA^{\Lambda_a} \coloneqq \overline{\bigcup_{\Delta : \Delta \Cap \Lambda_a = \emptyset} \cR(\Delta)}^{\| \cdot \|}$, we have that $(\rho^\vee)^{\Lambda_a} = (\rho^{\Lambda_a})^{-1}$.

We now define $\rho^\vee \coloneqq (\rho^\vee)^{\Lambda_a}|_{\fA}$. 
We claim that $\rho^\vee \colon \fA \to B(\cH)$ is a superselection sector. 
If this is the case, $\rho^\vee$ will necessarily be the desired superselection sector, since $\rho \otimes \rho^\vee = \rho^{\Lambda_a} \circ \rho^\vee$ and $\rho^\vee \otimes \rho = (\rho^\vee)^{\Lambda_a} \circ \rho$. 

The proof that $\rho^\vee$ is a superselection sector proceeds similarly to \cite[Thm.~4.2]{MR660538}.
We first show that $\rho^\vee$ is independent of the choice of $\Lambda_a$. 
First consider the case of two cones $\Lambda_a$ and $\widehat{\Lambda}_a$ such that $\Lambda_a \subseteq \widehat{\Lambda}_a$.
We then have that $\fA^{\widehat{\Lambda}_a} \subseteq \fA^{\Lambda_a}$, so $(\rho^\vee)^{\widehat{\Lambda}_a} = (\rho^\vee)^{\Lambda_a}|_{\fA^{\widehat{\Lambda}_a}}$. 
	Since $\fA \subset \fA^{\Lambda_a} \cap \fA^{\widehat{\Lambda_a}}$ it thus follows that $(\rho^\vee)^{\widehat{\Lambda}_a}|_{\fA} = (\rho^\vee)^{\Lambda_a}|_{\fA}$.

	We now extend the argument to two arbitrary cones $\Lambda_a$ and $\widehat{\Lambda}_a$ satisfying $\Lambda_a \cap \Lambda = \emptyset = \widehat{\Lambda}_a \cap \Lambda$. Observe that in this case there exists a sequence of cones $((\Lambda_a)_1, (\Delta_a)_1, \dots, (\Lambda_a)_{n - 1}, (\Delta_a)_{n - 1}, (\Lambda_a)_n)$ such that $(\Lambda_a)_1 = \Lambda_a$, $(\Lambda_a)_n = \widehat{\Lambda}_a$, and for all $i$, we have $(\Lambda_a)_i, (\Lambda_a)_{i + 1} \subseteq (\Delta_a)_i$. 
We call such a sequence a \emph{zig-zag} following \cite{MR4927814}, but such a sequence is termed an ``interpolating sequence" in \cite{MR660538}.
We further require that $(\Delta_a)_i \cap \Lambda = \emptyset$ for all $i$.
By the preceeding argument, we have
	\begin{align}
		(\rho^\vee)^{(\Lambda_a)_i}|_{\fA} = (\rho^\vee)^{(\Delta_a)_i}|_{\fA} = (\rho^\vee)^{(\Lambda_a)_{i+1}}|_{\fA}.
	\end{align}
	It thus follows that $(\rho^\vee)^{\Lambda_a}|_{\fA} = (\rho^\vee)^{\widehat{\Lambda}_a}|_{\fA}$.
 	Hence $\rho^\vee$ is independent of the choice of auxiliary cone $\Lambda_a$.

We now show that $\rho^\vee$ is actually a superselection sector. 
Let $\widehat{\Lambda}$ be another cone. 
Since $\rho$ is a superselection sector, there exists a superselection sector $\hat \rho \colon \fA \to B(\cH)$ localized in $\widehat{\Lambda}$ such that $\rho \cong \hat \rho$, where $\cong$ denotes unitary equivalence. 
Now, since $\rho \cong \hat \rho$, the net of von Neumann algebras $\hat \rho_\Delta(\cR(\Delta))$ also satisfy Haag duality, so we can repeat the prior analysis for $\hat \rho$ and obtain a map $\hat \rho^\vee \colon \fA \to B(\cH)$. 
We claim that $\rho^\vee \cong \hat \rho^\vee$. 
Note that there is a zig-zag from $\Lambda$ to $\widehat{\Lambda}$, so as before, it suffices to consider the case where $\Lambda \subseteq \widehat{\Lambda}$. 
Let $V \colon \rho \to \hat \rho$ be a unitary intertwiner. 
Then since the original net $\cR(\Lambda)$ satisfies Haag duality and $\rho, \hat \rho$ are both localized in $\widehat{\Lambda}$, we have that $V \in \cR(\widehat\Lambda)$. 
Now, let $\Lambda_a$ be a cone such that $\Lambda_a \cap \widehat{\Lambda} = \emptyset$. 
Then $V \in \fA^{\Lambda_a}$. 
Since $V \colon \rho \to \hat \rho$, we have that $V \colon \rho^{\Lambda_a} \to \hat \rho^{\Lambda_a}$, and thus $\hat \rho^{\Lambda_a} = \Ad(V) \circ \rho^{\Lambda^a}$. 
Therefore, we have that 
\begin{equation}
(\hat \rho^\vee)^{\Lambda_a}
=
(\hat \rho^{\Lambda_a})^{-1}
=
(\rho^{\Lambda_a})^{-1} \circ \Ad(V^*)
=
(\rho^\vee)^{\Lambda_a} \circ \Ad(V^*)
=
\Ad [(\rho^\vee)^{\Lambda_a}(V^*)] \circ (\rho^\vee)^{\Lambda_a},
\end{equation}
where in the last step we used that $V^* \in \fA^{\Lambda_a}$.
Since $(\rho^\vee)^{\Lambda_a}$ is an endomorphism of $\fA^{\Lambda_a}$ and $V^*$ is unitary, it follows that $(\rho^\vee)^{\Lambda_a}(V^*)$ is also unitary. 
Restricting to $\fA$, we obtain that $\hat \rho^\vee \cong \rho^\vee$, as desired. 
\end{proof}

\begin{rem}
Note that there is a formulation of superselection theory using nets of von Neumann algebras that is more general than the approach used here and does not require the auxiliary algebra \cite{MR4927814}. 
In the quantum spin system setting, these approaches are equivalent \cite[Cor.~6.7]{MR4927814}. 
We expect that this argument can be adapted to the more general setting of \cite{MR4927814}; however, since this paper focuses on quantum spin systems, this is beyond the scope of our work. 
\end{rem}

The converse of our main result similarly follows from the proof of \cite[Lem.~2.2]{MR297259} adapted to our setting, but we spell it out here for completeness.
\begin{proof}[Proof of \cref{item:dhr1} $\Rightarrow$ \cref{item:dhr2}]
	Suppose $\rho \colon \fA \to B(\cH)$ is an \emph{abelian} anyon localized in the cone $\Lambda$. 
We show that the net of von Neumann algebras $\rho_\Delta(\cR(\Delta))$ satisfy Haag duality in this case. 
Since $\rho$ is an abelian anyon, there exists an anyon sector $\rho^\vee \colon \fA \to B(\cH)$ localized in $\Lambda$ such that $\rho \otimes \rho^\vee = \mathds{1}$ and $\rho^\vee \otimes \rho = \mathds{1}$. 
In particular, for any cone $\Lambda_a$ satisfying $\Lambda_a \Cap \Lambda = \emptyset$, we have that $\rho^{\Lambda_a}$ defines an automorphism of the auxiliary algebra $\fA^{\Lambda_a}$, whose inverse is $(\rho^\vee)^{\Lambda_a}$. 
Since $\rho$ and $\rho^\vee$ are both localized in $\Lambda$, we have that $\rho_\Lambda$ and $(\rho^\vee)_{\Lambda}$ are both endomorphisms of $\cR(\Lambda)$. 
It follows that $(\rho^\vee)_\Lambda = (\rho_\Lambda)^{-1}$ since $(\rho^\vee)^{\Lambda_a} = (\rho^{\Lambda_a})^{-1}$ and $\cR(\Lambda) \subseteq \fA^{\Lambda_a}$.
In particular, $\rho_\Lambda \colon \cR(\Lambda) \to \cR(\Lambda)$ is an automorphism, so $\rho_\Lambda(\cR(\Lambda)) = \cR(\Lambda)$. 
Since $\rho$ is localized in $\Lambda$, we have that $\rho_{\Lambda^c} = \id_{\cR(\Lambda^c)}$, so we have that 
\begin{equation}
\rho_{\Lambda^c}(\cR(\Lambda^c))'
=
\cR(\Lambda^c)'
=
\cR(\Lambda)
=
\rho_{\Lambda}(\cR(\Lambda)).
\end{equation}

To see that Haag duality holds for all cones, note that for any other cone $\Delta$, we can transport $\rho$ to obtain a unitarily equivalent sector $\hat \rho \colon \fA \to B(\cH)$ localized in $\Delta$. 
We then have $\hat \rho_\Delta(\cR(\Delta^c))' = \hat \rho_\Delta(\cR(\Delta))$. 
But letting $V \colon \hat \rho \to \rho$ be a unitary witnessing the equivalence, we have that 
\begin{equation}
\rho_{\Delta^c}(\cR(\Delta^c))'
=
(V\hat \rho_{\Delta^c}(\cR(\Delta^c))V^*)'
=
V\hat \rho_{\Delta^c}(\cR(\Delta^c))'V^*
=
V\hat \rho_{\Delta}(\cR(\Delta))V^*
=
\rho_\Delta(\cR(\Delta)).
\end{equation}
Hence the net of algebras $\rho_\Delta(\cR(\Delta))$ satisfies Haag duality. 
\end{proof}
\section{Rigorous version of physical argument for string-net models}
\label{sec:LWRigorousHeuristic}

In this section, we provide a rigorous interpretation of our physical argument for Levin--Wen string-net models \cite{PhysRevB.71.045110}. 
In sections \ref{subsec:LWBasics}--\ref{subsec:LWAnyonSectors}, we summarize the setup of \cite{2511.21521}, which classifies anyon sectors for the Levin--Wen model in the infinite volume. 
In section \ref{subsec:HeuristicMadeRigorous}, we use this setup to show that the heuristic holds in this mathematically rigorous framework. 

\subsection{The Levin--Wen model}
\label{subsec:LWBasics}
We use the setup described in \cite{2511.21521} for the Levin--Wen string-net models \cite{PhysRevB.71.045110}.
Let $\cC$ be a unitary fusion category, and let $\Irr(\cC)$ denote a set of representative simple objects for $\cC$. 
Define $\chi \coloneqq \bigoplus_{c \in \Irr(\cC)} c$. 
For $c \in \cC$ we let $d_c$ denote the quantum dimension of $c$. 
We consider a regular square lattice $\Gamma$, whose vertices, edges, and faces we denote by $V(\Gamma)$, $E(\Gamma)$, and $F(\Gamma)$ respectively.
To each vertex $v \in V(\Gamma)$, we associate the Hilbert space $\cH_v \coloneqq \End_\cC(\chi \otimes \chi)= \bigoplus_{a, b, c, d \in \Irr(\cC)} \cC(a \otimes b \to c \otimes d)$. 
We graphically depict a vector of the form $f \colon a \otimes b \to c \otimes d$ by 
\begin{equation}
\tikzmath{
\draw (0, 0) -- (1.5, 0);
\draw (.75, -.75) -- (.75, .75);
\draw[cyan, dashed] (.25, .5) -- (1.25, -.5);
\roundNbox{fill=white}{(.75, 0)}{.3}{0}{0}{$f$};
\node at (-.1, 0) {$\scriptstyle a$};
\node at (.75, -.9) {$\scriptstyle b$};
\node at (.75, .85) {$\scriptstyle c$};
\node at (1.6, 0) {$\scriptstyle d$};
}
\end{equation}
where we read the diagram from left to right and bottom to top as indicated by the dashed cyan line. 
To make $\cH_v$ into a Hilbert space, we endow it with the \emph{skein module inner product}, that is, for $f\colon a_1 \otimes b_1 \to c_1 \otimes d_1$ and $g \colon a_2 \otimes b_2 \to c_2 \otimes d_2$, we define
\begin{equation}
\label{eq:HvSkeinModuleInnerProduct}
\langle g, f \rangle
\coloneqq
\delta_{a_1 = a_2} \delta_{b_1 = b_2} \delta_{c_1 = c_2} \delta_{d_1 = d_2}
\frac{1}{\sqrt{d_{a_1}d_{b_1} d_{c_1} d_{d_1}}}
\tikzmath{
\draw (-.1, -.2) -- (-.1, 1.45) arc(180:0:.5) -- (.9, -.2) arc(0:-180:.5);
\draw (.1, -.2) -- (.1, 1.45) arc(180:0:.3) -- (.7, -.2) arc(0:-180:.3);
\roundNbox{fill=white}{(0, 0)}{.3}{0}{0}{$f$};
\roundNbox{fill=white}{(0, 1.25)}{.3}{0}{0}{$g^\dag$};
} \, .
\end{equation}
We note for later use that the \emph{trace inner product} $(g,f)$ is obtained when the normalization by the square-roots of quantum dimensions is omitted, i.e., if $g,f \in \cC(a\to b)$, then $(g,f) = \tr(g^\dagger \circ f)$, where $\tr$ is the spherical trace fulfilling $\tr(\id_a) = d_a$.

For $S \subseteq V(\Gamma)$ finite, we define $\cH_S \coloneqq \bigotimes_{v \in S} \cH_v$ and $\fA(S) \coloneqq B(\cH_S) = \bigotimes_{v \in S} B(\cH_v)$.
Our quasilocal algebra is then 
\begin{equation}
\fA 
\coloneqq 
\overline{\bigotimes_{v \in V(\Gamma)} B(\cH_v)}^{\| \cdot \|}
=
\overline{\bigcup_{\substack{S \subseteq V(\Gamma) \\ \text{finite}}} \fA(S)}^{\| \cdot\|}
\end{equation}
We also define $\fA(\Lambda) \coloneqq \overline{\bigotimes_{v \in \Lambda} B(\cH_v)}^{\| \cdot \|}$ for arbitrary $\Lambda \subseteq V(\Gamma)$.

For the Levin--Wen model, we define two types of Hamiltonian terms. 
For each edge $e \in E(\Gamma)$, we define the projection $A_e$ on simple tensors by 
\begin{equation}
A_e \left( \tikzmath{
\draw (0, 0) -- (1.5, 0);
\draw (.75, -.75) -- (.75, .75);
\draw[dashed, cyan] (1.85, 0) -- (2.35, 0);
\draw (2.75, 0) -- (4.25, 0);
\draw (3.5, -.75) -- (3.5, .75);
\roundNbox{fill=white}{(.75, 0)}{.3}{0}{0}{$f$};
\roundNbox{fill=white}{(3.5, 0)}{.3}{0}{0}{$g$};
\node at (-.15, 0) {$\scriptstyle a_1$};
\node at (.75, -.9) {$\scriptstyle b_1$};
\node at (.75, .85) {$\scriptstyle c_1$};
\node at (1.7, 0) {$\scriptstyle d_1$};
\node at (2.575, 0) {$\scriptstyle a_2$};
\node at (3.5, -.9) {$\scriptstyle b_1$};
\node at (3.5, .85) {$\scriptstyle c_1$};
\node at (4.45, 0) {$\scriptstyle d_2$};
\node[cyan] at (2.1, .2) {$\scriptstyle e$};
} \right)
=
\delta_{d_1 = a_2} 
\tikzmath{
\draw (0, 0) -- (3.25, 0);
\draw (.75, -.75) -- (.75, .75);
\draw (2.5, -.75) -- (2.5, .75);
\roundNbox{fill=white}{(.75, 0)}{.3}{0}{0}{$f$};
\roundNbox{fill=white}{(2.5, 0)}{.3}{0}{0}{$g$};
\node at (-.15, 0) {$\scriptstyle a_1$};
\node at (.75, -.9) {$\scriptstyle b_1$};
\node at (.75, .85) {$\scriptstyle c_1$};
\node at (2.5, -.9) {$\scriptstyle b_1$};
\node at (2.5, .85) {$\scriptstyle c_1$};
\node at (3.45, 0) {$\scriptstyle d_2$};
}
\end{equation}
where we connect the edge in the image of $A_e$ to show that the 1-skeleton has been condensed. 
We now define the projection $B_f$ for $f \in F(\Gamma)$. 
The projection $B_f$ acts on $\prod_{e \in f} A_e \cH_f$. 
Here, we write $e \in f$ to denote that the edge $e$ is adjacent to the face $f$, and we have $\cH_f \coloneqq \bigotimes_{v \in f} \cH_v$, where $v \in f$ means that $v$ is adjacent to an edge adjacent to $f$. 
Specifically, $B_f$ acts on simple tensors as follows: 
\begin{equation}
B_f
\left( \tikzmath{
\draw (-.75, .5) -- (.75, .5);
\draw (-.5, .75) -- (-.5, -.75);
\draw (-.75, -.5) -- (.75, -.5);
\draw (.5, .75) -- (.5, -.75);
\node[gray] at (0, 0) {$\scriptstyle \times$};
}\right)
=
\frac{1}{D_\cC^2} \sum_{c \in \Irr(\cC)} d_c
\tikzmath{
\draw (-.75, .5) -- (.75, .5);
\draw (-.5, .75) -- (-.5, -.75);
\draw (-.75, -.5) -- (.75, -.5);
\draw (.5, .75) -- (.5, -.75);
\node[gray] at (0, 0) {$\scriptstyle \times$};
\draw[rufous] (0, 0) circle (.4);
\node[rufous] at (.275, 0) {$\scriptstyle c$};
}\, ,
\end{equation}
where $D_\cC$ denotes the global dimension of $\cC$.
To interpret the diagram on the right, we use the fusion relation 
\begin{equation}
\label{eq:FusionRelation}
\tikzmath{
\draw (0, 0) -- (0, 1);
\draw (.25, 0) -- (.25, 1);
\node at (0, -.15) {$\scriptstyle a$};
\node at (.25, -.15) {$\scriptstyle b$};
}
=
\sum_{c \in \Irr(\cC)}
\frac{\sqrt{d_c}}{\sqrt{d_a d_b}}
\tikzmath{
\draw (0, 0) -- (.125, .25) -- (.25, 0);
\draw[rufous] (.125, .25) -- (.125, .75);
\draw (0, 1) -- (.125, .75) -- (.25, 1);
\filldraw[cyan] (.125, .25) circle (0.04cm);
\filldraw[cyan] (.125, .75) circle (0.04cm);
\node at (0, -.15) {$\scriptstyle a$};
\node at (.25, -.15) {$\scriptstyle b$};
\node[rufous] at (.225, .5) {$\scriptstyle c$};
}
\end{equation}
where the colored vertices denote summing over an orthonormal basis for the skein module (c.f.~\eqref{eq:HvSkeinModuleInnerProduct}) and its dual. 
It is well known that the operators $B_f$ are projections and that for any two $f_1, f_2 \in F(\Gamma)$, we have $[B_{f_1}, B_{f_2}] = 0$ \cite{0907.2204, PhysRevB.103.195155, 2305.14068}. 
We define the Levin--Wen Hamiltonian $H$ as follows: for a finite region $R \subseteq \Gamma$, we define 
\begin{equation}
H_R
\coloneqq
\sum_{e \subset R} (\mathds{1} - A_e) + \sum_{f \subset R} (\mathds{1} - B_f).
\end{equation}
For a finite region $R \subseteq \Gamma$, we define $p_R \coloneqq \prod_{f \in R} B_f \prod_{e \in R} A_e$, which is the projection onto the ground state space for the region $R$. 
There is a unique frustration-free ground state $\omega_0$ for the Levin--Wen Hamiltonian, which is a pure state \cite{MR4945955, 2511.21521}.
In particular, $\omega_0$ is the unique state satisfying that $\omega_0(p_R) = 1$ for all finite $R \subseteq \Gamma$. 
Furthermore, the state $\omega_0$ is known to satisfy Haag duality for cones \cite{2509.23734, 2605.10693}.
We let $(\pi_0, \cH, \Omega)$ denote the GNS representation of $\omega_0$. 

We now characterize the Hilbert space $p_R \cH_R$ for a finite, connected region $R$. 
We first define the \emph{skein module} $\cS(R)$. 
We use $\partial R$ to denote the edges in $E(\Gamma)$ transverse to the boundary of $R$. 
We define $\cS(R)$ to be the vector space of string diagrams for $\cC$ in $R$ with marked boundary points along $\partial R$, up to isotopy and relations in $\cC$. 
Each marked boundary point is labeled with the object $\chi = \bigoplus_{c \in \Irr(\cC)} c$. 
Equivalently, we can think of elements in $\cS(R)$ as sums of elements where each marked boundary point is labeled with an object $c \in \Irr(\cC)$. 
The space $\cS(R)$ can be endowed with an inner product (see \cite[\textsection 3.6]{2511.21521}). 
Furthermore, there is a unitary isomorphism $\sigma_R \colon p_R \cH_R \to \cS(R)$ \cite[Prop.~4.4]{2511.21521}.\footnote{In the case where $R$ is simply connected, a proof of this result is given in \cite[Thm.~2.9]{2305.14068}. The result was stated without proof in \cite{MR3204497}.}
For our purposes, we will not require the specific formula for this unitary isomorphism, nor the actual inner product on $\cS(R)$. 
However, the formulas we present later will be written to ensure that all actions on these Hilbert spaces are $*$-preserving.

\subsection{Tube algebras and skein modules}
Consider a cylinder $A = S^1 \times I$ where we decorate $S^1 \times \{0\}$ and $S^1 \times \{1\}$ with a boundary configuration $\cB$, i.e., a collection of marked points labeled by $\chi = \bigoplus_{c \in \Irr(\cC)} c$.
We can then define the skein module $\cS(A)$ as above as the vector space of string diagrams on $A$ compatible with the boundary configuration, up to isotopy and relations in $\cC$. 
As before, we can alternatively think of elements of $\cS(A)$ as sums of elements where each marked boundary point is labeled by some $c \in \Irr(\cC)$ instead of by $\chi$. 
We can turn $\cS(A)$ into an algebra where the composition is induced from stacking the cylinders:
\begin{equation}
\tikzmath{
\draw[very thick] (0, -.25) ellipse (1 and .25);
\draw[very thick] (1, -1.75) arc (0:-180:1 and .25);
\draw (1, -1) arc (0:-180:1 and .25);
\draw[dotted, black!50] (-1, -1) arc(180:0:1 and .25);
\draw (1, -1.75) -- (1, -.25);
\draw (-1, -1.75) -- (-1, -.25);
\draw (0, -2) -- (0, -.5);
\roundNbox{fill=white}{(0, -1.25)}{.3}{0}{0}{$g$}
\node at (.2, -1.75) {$\scriptstyle \vec{a}_3$};
\node at (.2, -.7) {$\scriptstyle \vec{a}_4$};
\node at (-.75, -1) {$\scriptstyle c$};
}
\cdot
\tikzmath{
\draw[very thick] (0, -.25) ellipse (1 and .25);
\draw[very thick] (1, -1.75) arc (0:-180:1 and .25);
\draw (1, -1) arc (0:-180:1 and .25);
\draw[dotted, black!50] (-1, -1) arc(180:0:1 and .25);
\draw (1, -1.75) -- (1, -.25);
\draw (-1, -1.75) -- (-1, -.25);
\draw (0, -2) -- (0, -.5);
\roundNbox{fill=white}{(0, -1.25)}{.3}{0}{0}{$f$}
\node at (.2, -1.75) {$\scriptstyle \vec{a}_1$};
\node at (.2, -.7) {$\scriptstyle \vec{a}_2$};
\node at (-.75, -1) {$\scriptstyle d$};
}
=
\delta_{\vec{a}_2 = \vec{a}_3}
\tikzmath{
\draw[very thick] (0, -.25) ellipse (1 and .25);
\draw (1, -1) arc (0:-180:1 and .25);
\draw[dotted, black!50] (-1, -1) arc(180:0:1 and .25);
\draw (1, -1.75) -- (1, -.25);
\draw (-1, -1.75) -- (-1, -.25);
\draw (0, -2) -- (0, -.5);
\draw[very thick] (1, -2.75) arc (0:-180:1 and .25);
\draw (1, -2) arc (0:-180:1 and .25);
\draw[dotted, black!50] (-1, -2) arc(180:0:1 and .25);
\draw (1, -2.75) -- (1, -1.25);
\draw (-1, -2.75) -- (-1, -1.25);
\draw (0, -3) -- (0, -1.5);
\roundNbox{fill=white}{(0, -1.25)}{.3}{0}{0}{$g$}
\node at (.2, -.7) {$\scriptstyle \vec{a}_4$};
\node at (-.75, -1) {$\scriptstyle c$};
\roundNbox{fill=white}{(0, -2.25)}{.3}{0}{0}{$f$}
\node at (.2, -2.75) {$\scriptstyle \vec{a}_1$};
\node at (.2, -1.75) {$\scriptstyle \vec{a}_2$};
\node at (-.75, -2) {$\scriptstyle d$};
},
\end{equation}
In the above equation, each label corresponds to an object in $\Irr(\cC)$. 
Note that we can simplify the two parallel $c$ and $d$ strands by applying \eqref{eq:FusionRelation}.

This algebra is called the \emph{tube algebra} \cite{MR1782145, MR1832764, MR1966525}, and we denote it by $\Tube_\cB$. 
Observe that for any finite connected region $R \subseteq \Gamma$ with decorated boundary components $\cB_1, \dots, \cB_n$, we have an action of $\Tube_{\cB_k}$ on $\cS(R)$ by gluing along the boundary component $\cB_k$. 
This action can be transported to an action of $\Tube_{\cB_k}$ on $p_R\cH_R$ via the isomorphism $\sigma_R$. 
For $a \in \Tube_{\cB_k}$, we write $\mathfrak{t}_{\cB_k}(a) \in B(p_R\cH_R)$ to denote the action of $a$ along the boundary component $\cB_k$. 

We now construct projections in $\Tube_\cB$ that correspond to anyon types. 
Recall that anyon types for the Levin--Wen model correspond to simple objects in $Z(\cC)$, the Drinfeld center of $\cC$ \cite{PhysRevB.71.045110,kitaevModelsGappedBoundaries2012,PhysRevB.103.195155, 2305.14068}.
This physical picture was made mathematically rigorous for Levin--Wen models in \cite{2511.21521}. 
The objects of the Drinfeld center $Z(\cC)$ consist of pairs $(X,\gamma_{X, \placeholder})$ of an object $X\in \cC$ and a \emph{half-braiding} $\gamma$. The half-braiding associates to every $Y\in \cC$ an isomorphism $\gamma_{X, Y} \colon X\otimes Y \to Y\otimes X$ that is natural in $Y$ and fulfills the consistency condition $\gamma_{X, Y\otimes Z} = (\id_Y \otimes \gamma_{X, Z})\circ(\gamma_{X, Y} \otimes \id_Z)$.
Morphisms of $Z(\cC)$ are morphisms of the underlying objects in $\cC$ that are compatible with the half-braiding, i.e.,
a morphism $f\colon (X,\gamma_{X,\placeholder}) \to (X',\gamma_{X',\placeholder})$ in $Z(\cC)$ is a morphism $f \colon X\to X'$ such that
\begin{align}
	\gamma_{X',Y}\circ (f\otimes \id_Y) = (\id_Y \otimes f)\circ \gamma_{X,Y}.
\end{align}
If we use string diagrams to depict calculations in tensor categories, we express the half-braiding simply as crossing of lines:
\begin{equation}
\gamma_{X, Y}
=
\tikzmath{
\draw[rounded corners] (.5, 0) -- (.5, .25) -- (0, .75) -- (0, 1);
\draw[very thick, red, mid>, rounded corners, knot] (0, 0) -- (0, .25) -- (.5, .75) -- (.5, 1);
\node[red] at (0, -.15) {$\scriptstyle X$};
\node at (.5, -.15) {$\scriptstyle Y$};
}
\end{equation}

We let $\Irr Z(\cC)$ denote a set of representative simple objects for $Z(\cC)$.
Note that there is a forgetful functor from $Z(\cC)$ to $\cC$, i.e., we may view the objects of $Z(\cC)$ as objects in $\cC$. 
In particular it makes sense to consider the space $\cC(X \to a)$ for $X\in Z(\cC)$ and $a\in \cC$.
However a simple object in $Z(\cC)$ is generally not simple in $\cC$.

We follow the conventions in \cite[\textsection 3.5]{2511.21521} in the following discussion. 
Given a labeling $\vec{a} = (a_1, \dots, a_n)$ of the marked points in $\cB$ by objects in $\Irr(\cC)$ and an  object $X \in \Irr Z(\cC)$, we let $\{w^{X, \vec{a}}_i\}$ denote an orthonormal basis of $\cC(X \to a_1 \otimes \dots \otimes a_n)$ with respect to the \emph{trace inner product}, that is,
\begin{equation}
\tikzmath{
\draw (0, -.7) -- (0, .7);
\draw[very thick, red, mid>] (0, 1) arc(180:0:.5) -- (1, -1) arc (0:-180:.5);
\roundNbox{fill=white}{(0, -.7)}{.3}{.25}{.25}{$w^{X, \vec{a}}_i$}
\roundNbox{fill=white}{(0, .7)}{.35}{.4}{.4}{$(w^{X, \vec{a}}_j)^\dag$}
\node at (-.15, 0) {$\scriptstyle \vec{a}$};
\node[red] at (-.11, -1.25) {$\scriptstyle X$};
}
=
\delta_{i,j}.
\end{equation}
By \cite[Prop.~3.4]{2511.21521}, we have that the operator
\begin{equation}
\label{eq:SpecialMinimalProjectionsTypeX}
p^{X, \vec{a}}_i
\coloneqq
\tikzmath{
\draw[very thick] (0, -.25) ellipse (1 and .25);
\draw[very thick] (1, -2.75) arc (0:-180:1 and .25);
\draw[thick, dashed] (1, -1.5) arc (0:-180:1 and .25);
\draw[thick, dashed, black!50] (-1, -1.5) arc(180:0:1 and .25);
\draw (1, -2.75) -- (1, -.25);
\draw (-1, -2.75) -- (-1, -.25);
\draw (0, -3) -- (0, -2.25);
\draw[very thick, red, mid>, knot] (0, -2.25) -- (0, -1.25);
\draw (0, -1.25) -- (0, -.5);
\roundNbox{fill=white}{(0, -2.25)}{.3}{.4}{.4}{$(w^{X, \vec{a}}_i)^\dag$}
\roundNbox{fill=white}{(0, -1.25)}{.3}{.25}{.25}{$w^{X, \vec{a}}_i$}
\node at (0.15, -2.75) {$\scriptstyle \vec{a}$};
\node at (0.15, -.7) {$\scriptstyle \vec{a}$};
}
\in \Tube_\cB
\end{equation}
is a minimal projection. 
In the above diagram, the strand labeled $\vec{a}$ denotes $a_1 \otimes \dots \otimes a_n$, and the red oriented strand denotes $X$. 
Furthermore, we adopt the convention that 
\begin{equation}
\tikzmath{
\draw[dashed] (0, 0) -- (0, 1);
}\,
=
\frac{1}{D_\cC^2} \sum_{c \in \Irr(\cC)} d_c \,
\tikzmath{
\draw (0, 0) -- (0, 1);
\node at (.15, .5) {$\scriptstyle c$};
}.
\end{equation}
Since the red strand corresponds to the object $X \in Z(\cC)$, the crossing in \eqref{eq:SpecialMinimalProjectionsTypeX} represents the half-braiding associated to $X$. 
Additionally, we have that the projection 
\begin{equation}
P^X
\coloneqq
\sum_{\vec{a}, i} p^{X, \vec{a}}_i
\end{equation}
is a minimal central projection in $\Tube_\cB$. 
One can obtain any minimal projection $p \leq P^X$ by replacing $w^{X, \vec{a}}_i$ with an appropriate unit vector $w \in \cC(X \to \chi^{\otimes n})$ in \eqref{eq:SpecialMinimalProjectionsTypeX}.

Now, let $R \subseteq \Gamma$ be a finite, connected region with boundary components $\cB_1, \dots, \cB_n$. 
By \cite[Prop.~3.5]{2511.21521}, we have that $\cS(R)$ (which is isomorphic to $p_R \cH_R$) is isomorphic to the space
\begin{equation}
\label{eq:WhatIsTheSkeinModule}
\bigoplus_{X_1, \dots, X_n \in \Irr Z(\cC)} Z(\cC)(\mathds{1} \to X_1 \otimes \dots \otimes X_n) \otimes \cC(X_1 \to \chi^{\otimes \#\cB_1}) \otimes \dots \otimes \cC(X_n \to \chi^{\otimes \#\cB_n}),
\end{equation}
where $\#\cB_k$ denotes the number of marked points in $\cB_k$ and, as before, $\chi = \bigoplus_{c \in \Irr(\cC)} c$.
The isomorphism $\Phi_R$ from the space in \eqref{eq:WhatIsTheSkeinModule} to $\cS(R)$ given in \cite[Prop.~3.5]{2511.21521} is defined as follows: 
For $\alpha \in Z(\cC)(\mathds{1} \to X_1 \otimes \dots \otimes X_n)$ and $w_k \in \cC(X_k \to a_k^{(1)} \otimes \dots \otimes a_k^{(\# \cB_k)})$ for $\vec{a}_k \coloneqq (a_k^{(1)}, \dots, a_k^{(\# \cB_k)}) \in \Irr(\cC)^{\# \cB_k}$, we have that
\begin{equation}
\label{eq:IsomorphismToSkeinModule}
\Phi_R(\alpha \otimes w_1 \otimes \dots \otimes w_n)
=
\left( \prod_{k = 1}^n d_{X_k} \right)^{1/2} D_\cC^{n - 1}
\tikzmath{
\filldraw[gray!25] (-5, -1) rectangle (4, 2.75);
\draw[thick, dashed, rounded corners] (-3.5, 0) -- (-3.5, -.5) -- (3.5, -.5) -- (3.5, 2.25) -- (-3.5, 2.25) -- (-3.5, 0);
\draw[very thick, red, mid>, gray knot] (0, 0) -- (-4, 0);
\node[red] at (-1.6, -.2) {$\scriptstyle X_1$};
\draw (-4, 0) -- (-5, 0);
\node at (-4.65, -.15) {$\scriptstyle \vec{a}_1$};
\roundNbox{fill=yellow!25}{(-4, 0)}{.3}{0}{0}{$w_1$};
\draw[thick, dashed] (-2.25, 1.25) ellipse (1 and .5);
\filldraw[fill=white] (-3, 1) rectangle (-2.5, 1.5);
\draw[very thick, red, mid>, gray knot] (-.4, 0) -- (-.4, .75) arc(0:90:.5) -- (-1.75, 1.25);
\node[red] at (-.9, 1.05) {$\scriptstyle X_2$};
\draw (-1.75, 1.25) -- (-2.5, 1.25);
\node at (-2.25, 1.1) {$\scriptstyle \vec{a}_2$};
\roundNbox{fill=yellow!25}{(-1.75, 1.25)}{.3}{0}{0}{$w_2$};
\node[red] at (.025, .875) {$\dots$};
\draw[thick, dashed] (2.25, 1.25) ellipse (1 and .5);
\filldraw[fill=white] (3, 1) rectangle (2.5, 1.5);
\draw[very thick, red, mid>, gray knot] (.4, 0) -- (.4, .75) arc(0:-90:-.5) -- (1.75, 1.25);
\node[red] at (.9, 1.05) {$\scriptstyle X_n$};
\draw (1.75, 1.25) -- (2.5, 1.25);
\node at (2.27, 1.1) {$\scriptstyle \vec{a}_n$};
\roundNbox{fill=yellow!25}{(1.75, 1.25)}{.3}{0}{0}{$w_n$};
\roundNbox{red, fill=red!10}{(0, 0)}{.3}{.3}{.3}{$\alpha$}
},
\end{equation}
where $R$ corresponds to the grey region on the right-hand side.
As before, the crossing denotes the half-braiding for $X \in Z(\cC)$, and $D_\cC$ denotes the global dimension of $\cC$.
The projection $\mathfrak{t}_{\cB_k}(P^{X_k})$ acting on $p_R\cH_R \cong \cS(R)$ projects onto the summand corresponding to $X_k$ in \eqref{eq:WhatIsTheSkeinModule}. 
In addition, for a minimal projection $p \leq P^{X_k}$ corresponding to a unit vector $w \in \cC(X_k \to \chi^{\otimes \# \cB_k})$, the projection $\mathfrak{t}_{\cB_k}(p)$ further projects onto the 1-dimensional subspace of $\cC(X_k \to \chi^{\otimes \# \cB_k})$ spanned by $w$. 

We will often considered regions $R$ that are disks with $m$ punctures, where each puncture corresponds to removing an edge $e$ and its two adjacent faces. 
In this case, we will use $\mathfrak{t}_e$ to denote the action of the tube algebra $\Tube_{\cB_e}$ along the boundary component at $e$, in a slight abuse of notation. 
Furthermore, for a region $R \subseteq \Gamma$, we will write $R \setminus e$ to denote the region with the edge $e$ and its two adjacent faces removed. 

\subsection{Drinfeld insertions and hopping operators}
\label{subsec:DrinfeldInsertions}
We now define the operators we will use in defining the anyon sectors \cite[\textsection 6]{2511.21521}.
We continue to follow the conventions in \cite{2511.21521}. 
For each $X \in \Irr Z(\cC)$, we will fix a minimal projection $p^X \leq P^X \in \Tube_{\cB_e}$, corresponding to the unit vector $w^X_e \in \cC(X \to \chi \otimes \chi)$. 
In the case where $X = \mathds{1}$, we specifically take $p^{\mathds{1}}$ to correspond to the unit vector 
\begin{equation}
w^{\mathds{1}}_e
\coloneqq
\frac{1}{D_\cC}
\sum_{c \in \Irr(\cC)}
d_c 
\tikzmath{
\draw[mid<] (0, .5) arc(-180:0:.5);
\node at (0, .65) {$\scriptstyle \bar c$};
\node at (1, .65) {$\scriptstyle c$};
}
\in \cC(\mathds{1} \to \chi \otimes \chi),
\end{equation}
where $\bar c \in \Irr(\cC)$ is the object isomorphic to the dual $c^\vee$ of $c$. 
This choice ensures that $\mathfrak{t}_e(p^{\mathds{1}}) = B_{f_1} B_{f_2}$, where $f_1$ and $f_2$ are the two faces adjacent to $e$ \cite[Lem.~5.4]{2511.21521}. 

Following \cite[Convention 6.1]{2511.21521}, we will denote
\begin{equation}
\tikzmath{
\filldraw[gray!25] (0, 0) rectangle (4, 1.5);
\filldraw[fill=white] (.5, .5) rectangle (1.5, 1);
\draw[dotted] (1, .5) -- (1, 1);
\node at (1.15, .75) {$\scriptstyle e$};
\draw[very thick, red, mid>] (4, .75) -- (1.5, .75);
\node[red] at (3.2, .95) {$\scriptstyle X$};
}
\coloneqq
\tikzmath{
\filldraw[gray!25] (0, 0) rectangle (4, 1.5);
\filldraw[fill=white] (.5, .5) rectangle (1.5, 1);
\draw[dotted] (1, .5) -- (1, 1);
\node at (1.15, .75) {$\scriptstyle e$};
\draw[thick, dashed] (1.6, .75) ellipse (1.4 and .6);
\draw[very thick, red, mid>, gray knot] (4, .75) -- (2.5, .75);
\draw (2, .75) -- (1.5, .75);
\roundNbox{fill=yellow!25}{(2.25, .75)}{.3}{.1}{.1}{$w^X_e$}
\node[red] at (3.6, .95) {$\scriptstyle X$};
},
\end{equation}
where we view this diagram as a piece of a larger diagram as shown in \eqref{eq:IsomorphismToSkeinModule}.
Similarly, for an arbitrary boundary component $\cB$ with $n$ marked points and an arbitrary unit vector $w \in \cC(X \to \chi^{\otimes n})$, we define 
\begin{equation}
\tikzmath{
\filldraw[gray!25] (0, 0) rectangle (4, 1.5);
\filldraw[yellow!25](1.5, .625) rectangle (1.75, .875);
\filldraw[fill=white] (.5, .5) rectangle (1.5, 1);
\draw[very thick, red, mid>] (4, .75) -- (1.75, .75);
\node[red] at (3.2, .95) {$\scriptstyle X$};
\node at (1.625, .75) {\tiny $w$};
}
\coloneqq
\tikzmath{
\filldraw[gray!25] (0, 0) rectangle (4, 1.5);
\filldraw[fill=white] (.5, .5) rectangle (1.5, 1);
\draw[thick, dashed] (1.6, .75) ellipse (1.4 and .6);
\draw[very thick, red, mid>, gray knot] (4, .75) -- (2.5, .75);
\draw (2, .75) -- (1.5, .75);
\roundNbox{fill=yellow!25}{(2.25, .75)}{.25}{0}{0}{$w$}
\node[red] at (3.6, .95) {$\scriptstyle X$};
}.
\end{equation}
The boundary component $\cB$ in the above equation will usually be an external boundary component, so the picture will more closely resemble the external boundary component of \eqref{eq:IsomorphismToSkeinModule}. 

We now write down the \emph{Drinfeld insertion operators} defined in \cite[\textsection 6.1]{2511.21521}. 
Let $R \subseteq \Gamma$ be a finite connected region that is a disk with $n$ punctures, and suppose further that each puncture corresponds to removing an edge $e_n$ and its two adjacent faces. 
Let $X_1, \dots, X_n, Y_1, \dots, Y_n \in \Irr Z(\cC)$, and let $\beta \colon X_1 \otimes \dots \otimes X_n \to Y_1 \otimes \dots \otimes Y_n$ be a morphism in $Z(\cC)$. 
We define an operator $\Dr_R(\beta) \in B(\cH_R)$ that only acts nontrivially on $\prod \mathfrak{t}_{e_k}(p^{X_k}) p_R \cH_R$. 
Note that by \cite[Prop.~4.7]{2511.21521}, the subspace $\prod \mathfrak{t}_{e_k}(p^{X_k}) p_R \cH_R$ is isomorphic to the space spanned by diagrams of the form 
\begin{equation}
\tikzmath{
\filldraw[gray!25] (-3, -1) rectangle (3, 2.75);
\draw[very thick, red, mid>] (0, 0) -- (-3, 0);
\node[red] at (-1.6, -.2) {$\scriptstyle X_0$};
\filldraw[yellow!25] (-3, -.125) rectangle (-2.75, .125);
\node at (-2.875, 0) {\tiny $w$};
\filldraw[fill=white] (-2.5, 1) rectangle (-1.5, 1.5);
\draw[dotted] (-2, 1) -- (-2, 1.5);
\node at (-1.8, 1.25) {$\scriptstyle e_1$};
\draw[very thick, red, mid>] (-.4, 0) -- (-.4, .75) arc(0:90:.5) -- (-1.5, 1.25);
\node[red] at (-1.1, 1.05) {$\scriptstyle X_1$};
\filldraw[fill=white] (-1.7, 1.75) rectangle (-.7, 2.25);
\draw[dotted] (-1.2, 1.75) -- (-1.2, 2.25);
\draw[very thick, red, mid>] (-.2, 0) -- (-.2, 1.5) arc(0:90:.5);
\node[red] at (-.45, 1.45) {$\scriptstyle X_2$};
\node[red] at (.025, 1.875) {$\dots$};
\filldraw[fill=white] (1.7, 1.75) rectangle (.7, 2.25);
\draw[dotted] (1.2, 1.75) -- (1.2, 2.25);
\draw[very thick, red, mid>] (.2, 0) -- (.2, 1.5) arc(0:-90:-.5);
\node[red] at (.65, 1.45) {$\scriptstyle X_{n - 1}$};
\filldraw[fill=white] (2.5, 1) rectangle (1.5, 1.5);
\draw[dotted] (2, 1) -- (2, 1.5);
\node at (2.2, 1.25) {$\scriptstyle e_n$};
\draw[very thick, red, mid>] (.4, 0) -- (.4, .75) arc(0:-90:-.5) -- (1.5, 1.25);
\node[red] at (1.1, 1.05) {$\scriptstyle X_n$};
\roundNbox{red, fill=red!10}{(0, 0)}{.3}{.3}{.3}{$\alpha$}
}
\end{equation}
The map $\Dr_R(\beta)$ is then defined as follows: 
\begin{equation}
\label{eq:DrinfeldInsertions}
\tikzmath{
\filldraw[gray!25] (-3, -1) rectangle (3, 2.75);
\draw[very thick, red, mid>] (0, 0) -- (-3, 0);
\node[red] at (-1.6, -.2) {$\scriptstyle X_0$};
\filldraw[yellow!25] (-3, -.125) rectangle (-2.75, .125);
\node at (-2.875, 0) {\tiny $w$};
\filldraw[fill=white] (-2.5, 1) rectangle (-1.5, 1.5);
\draw[dotted] (-2, 1) -- (-2, 1.5);
\node at (-1.8, 1.25) {$\scriptstyle e_1$};
\draw[very thick, red, mid>] (-.4, 0) -- (-.4, .75) arc(0:90:.5) -- (-1.5, 1.25);
\node[red] at (-1.1, 1.05) {$\scriptstyle X_1$};
\filldraw[fill=white] (-1.7, 1.75) rectangle (-.7, 2.25);
\draw[dotted] (-1.2, 1.75) -- (-1.2, 2.25);
\draw[very thick, red, mid>] (-.2, 0) -- (-.2, 1.5) arc(0:90:.5);
\node[red] at (-.45, 1.45) {$\scriptstyle X_2$};
\node[red] at (.025, 1.875) {$\dots$};
\filldraw[fill=white] (1.7, 1.75) rectangle (.7, 2.25);
\draw[dotted] (1.2, 1.75) -- (1.2, 2.25);
\draw[very thick, red, mid>] (.2, 0) -- (.2, 1.5) arc(0:-90:-.5);
\node[red] at (.65, 1.45) {$\scriptstyle X_{n - 1}$};
\filldraw[fill=white] (2.5, 1) rectangle (1.5, 1.5);
\draw[dotted] (2, 1) -- (2, 1.5);
\node at (2.2, 1.25) {$\scriptstyle e_n$};
\draw[very thick, red, mid>] (.4, 0) -- (.4, .75) arc(0:-90:-.5) -- (1.5, 1.25);
\node[red] at (1.1, 1.05) {$\scriptstyle X_n$};
\roundNbox{red, fill=red!10}{(0, 0)}{.3}{.3}{.3}{$\alpha$}
}
\mapsto
\left( \prod_{k = 1}^n \frac{d_{Y_k}}{d_{X_k}}\right)^{1/2}
\tikzmath{
\filldraw[gray!25] (-3, -2.5) rectangle (3, 2.75);
\draw[very thick, red, mid>] (0, -1.5) -- (-3, -1.5);
\node[red] at (-1.6, -1.7) {$\scriptstyle X_0$};
\filldraw[yellow!25] (-3, -1.625) rectangle (-2.75, -1.375);
\node at (-2.875, -1.5) {\tiny $w$};
\filldraw[fill=white] (-2.5, 1) rectangle (-1.5, 1.5);
\draw[dotted] (-2, 1) -- (-2, 1.5);
\node at (-1.8, 1.25) {$\scriptstyle e_1$};
\draw[very thick, red, mid>] (-.4, 0) -- (-.4, .75) arc(0:90:.5) -- (-1.5, 1.25);
\node[red] at (-1.1, 1.05) {$\scriptstyle Y_1$};
\draw[very thick, red, mid>] (-.4, -1.5) -- (-.4, 0);
\node[red] at (-.6, -.6) {$\scriptstyle X_1$};
\filldraw[fill=white] (-1.7, 1.75) rectangle (-.7, 2.25);
\draw[dotted] (-1.2, 1.75) -- (-1.2, 2.25);
\draw[very thick, red, mid>] (-.2, 0) -- (-.2, 1.5) arc(0:90:.5);
\node[red] at (-.45, 1.45) {$\scriptstyle Y_2$};
\draw[very thick, red, mid>] (-.2, -1.5) -- (-.2, 0);
\node[red] at (.025, 1.875) {$\dots$};
\node[red] at (0.025, -.6) {$\scriptstyle \dots$};
\filldraw[fill=white] (1.7, 1.75) rectangle (.7, 2.25);
\draw[dotted] (1.2, 1.75) -- (1.2, 2.25);
\draw[very thick, red, mid>] (.2, 0) -- (.2, 1.5) arc(0:-90:-.5);
\node[red] at (.65, 1.45) {$\scriptstyle Y_{n - 1}$};
\draw[very thick, red, mid>] (.2, -1.5) -- (.2, 0);
\filldraw[fill=white] (2.5, 1) rectangle (1.5, 1.5);
\draw[dotted] (2, 1) -- (2, 1.5);
\node at (2.2, 1.25) {$\scriptstyle e_n$};
\draw[very thick, red, mid>] (.4, 0) -- (.4, .75) arc(0:-90:-.5) -- (1.5, 1.25);
\node[red] at (1.1, 1.05) {$\scriptstyle Y_n$};
\draw[very thick, red, mid>] (.4, -1.5) -- (.4, 0);
\node[red] at (.65, -.6) {$\scriptstyle X_n$};
\roundNbox{red, fill=red!10}{(0, -1.5)}{.3}{.3}{.3}{$\alpha$}
\roundNbox{red, fill=red!10}{(0, 0)}{.3}{.3}{.3}{$\beta$}
}
\end{equation}
The normalization present ensures that $\Dr_R(\beta)^* = \Dr_R(\beta^\dag)$. 

We will often consider Drinfeld insertion operators along links. 
Following \cite[\textsection 6.3]{2511.21521}, a \emph{link} is a finite dual path (i.e., a path of neighbouring faces) $L = (f_0, \dots, f_n)$ satisfying the following conditions: 
\begin{itemize}
\item
$f_i$ and $f_j$ share an edge if and only if $i = j\pm 1$, and 
\item
the first four faces of $L$ form a straight line, as do the last four faces of $L$.
\end{itemize}
Note that the first condition ensures that $L$ is self-avoiding. 
We define $\partial_iL$ to be the edge between $f_0$ and $f_1$, and we define $\partial_fL$ to be the edge between $f_{n - 1}$ and $f_n$. 
To a link $L$, we associate a region $R^L$ that consists of the union of all edges and faces comprising $L$, \emph{except} for the faces $f_0, f_1, f_{n - 1}, f_n$ and the edges $\partial_iL, \partial_fL$. 
Note that $R^L$ is homeomorphic to a disk with two punctures, and the two punctures both correspond to an edge and its two adjacent faces, see the following illustration: 
\begin{equation}
\tikzmath{
\filldraw[gray!25] (-.2, -.2) rectangle (6.2, 1.2);
\filldraw[white] (.2, .2) rectangle (1.8, .8);
\filldraw[white] (4.2, .2) rectangle (5.8, .8);
\draw (0, 0) rectangle (6, 1);
\draw[dotted] (1, 0) -- (1, 1);
\draw[dotted] (5, 0) -- (5, 1);
\foreach \n in {2, ..., 4}
{
\draw (\n, 0) -- (\n, 1);
}
\node at (2.5, .5) {$\scriptstyle L$};
\node[gray] at (3.5, .5) {$\scriptstyle R^L$};
}
\end{equation}

Finally, we define Drinfeld insertion operators that create/remove anyon pairs and move anyons from one puncture to another. 
Let $X \in \Irr Z(\cC)$, and let $\bar X \in \Irr Z(\cC)$ be the representative corresponding to $X^\vee$, the dual of $X$. 
We then have the following operators \cite[Eq.~49-51]{2511.21521}, 
\begin{gather}
\Dr_L^{(\mathds{1} \mathds{1}\to X \bar X)}
\coloneqq
d_X^{-1/2} \Dr_{R^L} \left[
\tikzmath{
\draw[mid<, very thick, red] (0, .75) arc(-180:0:.25);
\draw[thick, dotted] (0, 0) -- (0, .25);
\draw[thick, dotted] (.5, 0) -- (.5, .25);
\node[red] at (0, .9) {$\scriptstyle X$};
\node[red] at (.5, .9) {$\scriptstyle \bar X$};
} \right],
\qquad
\Dr_L^{(X \bar X \to \mathds{1} \mathds{1})}
\coloneqq
d_X^{-1/2} \Dr_{R^L} \left[
\tikzmath{
\draw[mid>, very thick, red] (0, 0) arc(180:0:.25);
\draw[thick, dotted] (0, .5) -- (0, .75);
\draw[thick, dotted] (.5, .5) -- (.5, .75);
\node[red] at (0, -.15) {$\scriptstyle X$};
\node[red] at (.5, -.15) {$\scriptstyle \bar X$};
} \right],
\\
\Dr_L^{(X\mathds{1} \to \mathds{1} X)}
\coloneqq
\Dr_{R^L} \left[ 
\tikzmath{
\draw[very thick, red, rounded corners, mid>] (0, 0) -- (0, .25) -- (.5, .5) -- (.5, .75);
\draw[thick, dotted] (0, .5) -- (0, .75);
\draw[thick, dotted] (.5, 0) -- (.5, .25);
} \right],
\qquad
\Dr_L^{(\mathds{1}X \to X\mathds{1})}
\coloneqq
\Dr_{R^L} \left[ 
\tikzmath{
\draw[very thick, red, rounded corners, mid>] (.5, 0) -- (.5, .25) -- (0, .5) -- (0, .75);
\draw[thick, dotted] (.5, .5) -- (.5, .75);
\draw[thick, dotted] (0, 0) -- (0, .25);
} \right],
\end{gather}
where we omit the (non-canonical) fixed unitary $X^\vee \to \bar X$. 
Note that $(\Dr_L^{(\mathds{1} \mathds{1} \to X \bar X)})^* = \Dr_L^{(X \bar X \to \mathds{1} \mathds{1})}$ and $(\Dr_L^{(X \mathds{1} \to \mathds{1} X)})^* = \Dr_L^{(\mathds{1} X \to X \mathds{1} )}$. 
By \cite[Lem.~6.8]{2511.21521}, these four operators are partial isometries with the following source and range projections: 
\begin{align}
P_L^{\mathds{1} \mathds{1}}
&\coloneqq
\mathfrak{t}_1^L(p^{\mathds{1}}) \mathfrak{t}_2^L(p^{\mathds{1}}) p_{R^L}
=
\Dr_L^{(X \bar X \to \mathds{1} \mathds{1})} \Dr_L^{(\mathds{1} \mathds{1} \to X \bar X)}
\\
\label{eq:ProjectionForAnyonsAnnihilating}
P_L^{X \bar X}
&\coloneqq
\mathfrak{t}_0^L(P^\mathds{1}) \mathfrak{t}_1(p^{X}) \mathfrak{t}_2(p^{\bar X}) p_{R^L}
=
\Dr_L^{(\mathds{1} \mathds{1} \to X \bar X)} \Dr_L^{(X \bar X \to \mathds{1} \mathds{1})}
\\
P_L^{X \mathds{1}}
&\coloneqq
\mathfrak{t}_1^L(p^{X}) \mathfrak{t}_2^L(p^{\mathds{1}}) p_{R^L}
=
\Dr_L^{(\mathds{1} X \to X \mathds{1} )} \Dr_L^{(X \mathds{1} \to \mathds{1} X)}
\\
P_L^{\mathds{1} X}
&\coloneqq
\mathfrak{t}_1^L(p^{\mathds{1}}) \mathfrak{t}_2^L(p^{X}) p_{R^L}
=
\Dr_L^{(X \mathds{1} \to \mathds{1} X)} \Dr_L^{(\mathds{1} X \to X \mathds{1} )}
\end{align}
Here $\mathfrak{t}_1^L$ and $\mathfrak{t}_2^L$ denote the tube algebra action at $\partial_i L$ and $\partial_f L$ respectively, and $\mathfrak{t}_0^L$ denotes the tube algebra action along the external boundary component of $R^L$. 

Note that the four projections $P_L^{\mathds{1} \mathds{1}}, P_L^{X \bar X}, P_L^{X \mathds{1}}, P_L^{\mathds{1} X}$ are all disjoint. 
Following \cite{2511.21521}, we define $P_L^{??} \coloneqq P_L^{\mathds{1} \mathds{1}} + P_L^{X \bar X} + P_L^{X \mathds{1}} + P_L^{\mathds{1} X}$.
We then define the following unitary along $L$ \cite[Eq.~52]{2511.21521}:
\begin{equation}
\label{eq:UnitaryChargeTransporter}
u^X_L
\coloneqq
\Dr_L^{(\mathds{1} \mathds{1} \to X \bar X)} + \Dr_L^{(X \bar X \to \mathds{1} \mathds{1})} + \Dr_L^{(X \mathds{1} \to \mathds{1} X)} + \Dr_L^{(\mathds{1} X \to X \mathds{1} )} + (\mathds{1} - P_L^{??}).
\end{equation}

\subsection{Anyon sectors}
\label{subsec:LWAnyonSectors}
We define a chain $\scrC$ to be a sequence of links $(L_n)_{n \geq 1}$ satisfying that $\partial_iL_n = \partial_f L_{n + 1}$ and that for every $m \leq n$, the links $L_m, \dots, L_n$ can be concatenated to form a well-defined link $L_{n \to m}$. 
We fix $X \in \Irr Z(\cC)$. 
For $m \leq n$, we define $U_{\scrC[n \to m]}^X \coloneqq u_{L_n}^X \cdots u_{L_m}^X$, and we define the endomorphism $\rho^X_{\scrC} \colon \fA \to \fA$ by 
\begin{equation}
\rho^X_{\scrC}(x)
=
\lim_{n \to \infty} (U_{\scrC[n \to 1]}^X)^* x U_{\scrC[n \to 1]}^X.
\end{equation}
Note that this limit is well-defined and gives a unital $*$-endomorphism of $\fA$ \cite[Lem.~6.10]{2511.21521}.
Let $e_0 \coloneqq \partial_f L_1$. 
We let $\omega^X$ denote the unique (pure) state satisfying the following \cite[Prop.~5.2]{2511.21521} 
\begin{equation}\label{eq:anyonstate}
\omega^X(\mathfrak{t}_{e_0}(p^{\bar X})) = 1,
\qquad
\omega^X(A_e) = 1
\quad
\forall e \neq e_0,
\qquad
\omega^X(B_f) = 1 
\quad
\forall f \text{ such that } e \notin f.
\end{equation}
(Note that the first constraint uses $p^{\bar X}$, not $p^X$!)
Recalling that $(\pi_0, \cH, \Omega)$ is the GNS representation for $\omega_0$, the unique frustration-free ground state for the Levin--Wen Hamiltonian, we have that $(\pi_0 \circ \rho^X_{\scrC}, \cH, \Omega)$ is a GNS representation for $\omega^X$ \cite[Prop.~7.2]{2511.21521}. 
In particular, we have that 
\begin{equation}
\label{eq:AnyonStateIsAnyonState}
\omega^X(-) = \bra{\Omega} \pi_0 \circ \rho^X_{\scrC}(-) \ket{\Omega}.
\end{equation}
Furthermore, $\pi_0 \circ \rho^X_{\scrC}$ is an anyon sector of type $X$ \cite[Prop.~7.10]{2511.21521}, and if $\scrC$ is contained in the cone $\Lambda$, that is, every link in $\scrC$ is contained in $\Lambda$, then $\pi_0 \circ \rho^X_{\scrC}$ is localized in $\Lambda$. 

Since $\fA$ is a UHF algebra, it is simple, so $\pi_0 \colon \fA \to B(\cH)$ is faithful. 
We may therefore identify $\pi_0(\fA)$ with $\fA$. 
In particular, we will identify $\pi_0 \circ \rho^X_{\scrC}$ with $\rho^X_{\scrC}$. 

\subsection{Rigorous interpretation of heuristic}
\label{subsec:HeuristicMadeRigorous}
We are finally able to provide a rigorous interpretation of the physical argument in the main text. 
We fix a nonabelian anyon type $X \in \Irr Z(\cC)$. 
We consider a chain $\scrC$ that is contained in $(\Lambda^{+1})^c$, where $\Lambda^{+1}$ is the region consisting of all points distance at most $1$ from $\Lambda$. 
To ease the notation, we will write $\rho^X$ for $\rho^X_{\scrC}$. 
For each link $L_n$ we let $e_n \coloneqq \partial_i L_n$ and $e_{n - 1} \coloneqq \partial_f L_n$. 

Consider the first link $L_1$ of $\scrC$, and consider a site $s$ in the bulk of $\Lambda$ that consists of an edge and its two adjacent faces. 
We require that $s \in ((\Lambda^c)^{+1})^c$.
Note that $s$ is disjoint from the link $L_1$. 
Let $R$ be a thrice-punctured disk, whose first puncture is at $\partial_i L_1$, whose second puncture is at $\partial_f L_1$, and whose third puncture is at $s$. 
We consider an isometry $\varsigma_a \colon \mathds{1} \to X \otimes \bar X \otimes \bar a$, where $a \leq X \otimes \bar X$ and $a \neq \mathds{1}$. 
Such an $a$ exists since $X$ is a nonabelian anyon. 

In this section, we prove the following theorem. 

\begin{thm}
\label{thm:LevinWenViolatesUoP}
The states 
\begin{equation}
\omega^X(-) = \bra{\Omega} \rho^X(-) \ket{\Omega},
\qquad\qquad
\omega^X_a(-) \coloneqq  \bra{\Omega} \Dr_{R}(\varsigma_a)^\dag \rho^X(-) \Dr_{R}(\varsigma_a) \ket{\Omega}
\end{equation}
violate the uniqueness of purifications criterion in \cite{MR5026085} with respect to the cone $\Lambda$. 
In more detail, the following hold:
\begin{itemize}
\item
	For all $x \in \rho^X_{\Lambda^c}(\cR(\Lambda^c))$, we have 
\begin{equation}
\bra{\Omega} x \ket{\Omega}
=
\bra{\Omega} \Dr_{R}(\varsigma_a)^\dag x \Dr_{R}(\varsigma_a) \ket{\Omega},
\end{equation}
\item
	For any $u \in \rho^X_\Lambda(\cR(\Lambda)) = \cR(\Lambda)$, we have 
\begin{equation}
	\bra{\Omega} u \Dr_{R}(\varsigma_a) \ket{\Omega} = 0.
\end{equation}
\end{itemize}
\end{thm}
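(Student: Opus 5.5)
The two bullet points are proved separately. The main tools are the explicit form of $\rho^X$ as a limit of conjugations by the charge transporters $U^X_{\scrC[n\to 1]}$, the characterization \eqref{eq:anyonstate} of $\omega^X$ by local constraints, and the tube algebra projections $\mathfrak{t}_s(P^{\bar a})$ at the puncture $s\subset\Lambda$.

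\emph{First bullet.} Since $\scrC\subseteq(\Lambda^{+1})^c$, the sector $\rho^X$ is localized in $\Lambda^c$. Hence $\rho^X_\Lambda=\id$ and $\rho^X_{\Lambda^c}(\cR(\Lambda^c))=\rho^X(\fA(\Lambda^c))''$. By normality of both vector states it suffices to check the identity for $x=\rho^X(y)$ with $y\in\fA(\Lambda^c)$ local. The strategy is to show that the state
\[
y\mapsto \bra{\Omega}\Dr_R(\varsigma_a)^\dag\rho^X(y)\Dr_R(\varsigma_a)\ket{\Omega},
\]
restricted to $\fA(\Lambda^c)$, satisfies every constraint in \eqref{eq:anyonstate} that is supported in $\Lambda^c$. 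Concretely, this means $\mathfrak{t}_{e_0}(p^{\bar X})$ at the original anyon position, and all $A_e,B_f$ away from $e_0$ that lie in $\Lambda^c$.

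To verify these, push $\Dr_R(\varsigma_a)$ through the limit defining $\rho^X$. On the image of $\Dr_R(\varsigma_a)$ the first link $L_1$ carries the pair $X,\bar X$ at its two punctures and $\bar a$ sits at $s$. The partial isometries in $u^X_{L_n}$ then move $X$ out to infinity exactly as in the vacuum case. This leaves $\bar X$ at $e_0$, all other plaquettes and edges in $\Lambda^c$ satisfied, and the only remaining violation at $s\in\Lambda$.

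Then invoke uniqueness on the region $\Lambda^c$. The key fact needed is an LTQO-type statement: any state on $\fA(\Lambda^c)$ satisfying those constraints agrees with $\omega^X|_{\fA(\Lambda^c)}$. I would obtain this by the argument of \cite[Prop.~5.2]{2511.21521}. For finite regions $S\subset\Lambda^c$, one shows that the compressed local algebra $p\,\fA(S)\,p$ is one-dimensional on the skein module of a disk in $\Lambda^c$ with a single puncture carrying a fixed minimal projection $p^{\bar X}$. This uses \eqref{eq:WhatIsTheSkeinModule}: with one internal puncture of type $\bar X$, the relevant morphism space $Z(\cC)(\mathds 1\to \bar X\otimes Y)$ forces the outer label.

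I expect this step to be the main obstacle. The region $\Lambda^c$ is infinite and not a disk, so uniqueness must be derived from finite disks $S\subseteq\Lambda^c$ whose outer boundary is unconstrained. One must check that the outer-boundary label, which is $\bar a\otimes\bar X$-type in the new state, is invisible to operators in $\fA(S)$. The way to handle this is that the tube action $\mathfrak t_0$ on the outer boundary commutes with $\fA(S)$ compressed by the interior ground state projection, and the interior multiplicity space is one-dimensional.

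\emph{Second bullet.} Since $\rho^X_\Lambda=\id$, take $u\in\cR(\Lambda)$; by density we may assume $u\in\fA(S)$ with $S\subset\Lambda$ finite. Choose a thick annulus $T\subset\Lambda$ surrounding $S\cup s$ and contained in $\Lambda$ away from $\scrC$, and let $Q$ be the tube projection $\mathfrak{t}_{\partial}(P^{\mathds 1})\,p_T$ measuring trivial total charge inside $T$. This operator is supported on $T$, so it commutes with $u$.

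Two facts about $Q$ are needed. First, $Q\Omega=\Omega$, because $\omega_0$ is frustration free, so $p_T\Omega=\Omega$ and the enclosed charge is trivial. Second, $Q\,\Dr_R(\varsigma_a)\Omega=0$. By the skein description \eqref{eq:IsomorphismToSkeinModule}, inside $T$ the state $\Dr_R(\varsigma_a)\Omega$ has the single charge $\bar a$, with $\bar a\neq\mathds 1$ and the $X,\bar X$ pair lying outside $T$. Hence $\mathfrak t_\partial(P^{\bar a})$ acts as the identity on this vector, and $P^{\bar a}P^{\mathds 1}=0$.

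Combining these,
\[
\bra\Omega u\Dr_R(\varsigma_a)\ket\Omega=\bra\Omega Qu\Dr_R(\varsigma_a)\ket\Omega=\bra\Omega uQ\Dr_R(\varsigma_a)\ket\Omega=0.
\]
The extension to all $u\in\cR(\Lambda)$ then follows by weak continuity.
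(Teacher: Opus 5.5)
\textbf{First bullet.} This is the paper's argument (Lemmas~\ref{lem:SplitNonAbelianAnyonStateProjections}--\ref{lem:TwoStatesAgreeOutsideCone}), but three statements in your sketch are imprecise.

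(i) The transport is not ``exactly as in the vacuum case''. Here $u^X_{L_1}$ must act as the identity. It does so only because the exterior charge of $L_1$ is $\bar a\neq\mathds{1}$, which puts $\Dr_R(\varsigma_a)\ket{\Omega}$ outside $P^{X\bar X}_{L_1}$; its puncture data at $e_0,e_1$ otherwise match that projection, and $u^X_{L_1}$ would annihilate the pair. After that, $u^X_{L_{n\to 2}}$ moves $X$ from $e_1$ to $e_n$.

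(ii) Constraints supported in $\Lambda^c$ do not determine a state on $\fA(\Lambda^c)$, because half-edges pointing into $\Lambda$ are unconstrained. The LTQO identity $p^X_{S^{+1}}\fA(S)p^X_{S^{+1}}=\bbC p^X_{S^{+1}}$ needs the constraints on $S^{+1}$, which leaves $\Lambda^c$. Your observation that only $s\in((\Lambda^c)^{+1})^c$ remains violated supplies exactly this.

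(iii) The charge through $\partial S$ is not ``$\bar a\otimes\bar X$-type''. As you note yourself, it is forced to be $X$ by $p^{\bar X}$ at $e_0$, in both states. Only the boundary vector in $\cC(X\to\chi^{\otimes\#\partial S})$ can differ, and LTQO makes that invisible.

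\textbf{Second bullet.} Your argument is correct, but it uses a different distinguishing projection from the paper.

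The paper works in the anyon representation. There $\rho^X(\mathfrak{t}_{\partial S}(P^a))$ fixes $\Dr_R(\varsigma_a)\ket{\Omega}$, because $\omega^X_a$ carries charge $a$ in $S$; this needs Lemma~\ref{lem:SplitNonAbelianAnyonStateProjections}, i.e.\ the limit over $U^X_n$. It annihilates $\ket{\Omega}$, because $\omega^X$ is unexcited on $S$. Together with $\rho^X(u)=u$, this is a literal version of the $P_a$ argument of Section~\ref{sec:PhysicalArgument}.

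You instead use the trivial-charge projection in the vacuum representation. It fixes $\Omega$ by frustration-freeness of $\omega_0$. It kills the finite-volume vector $\pi_0(\Dr_R(\varsigma_a))\Omega$, which has $p^{\bar a}$ at $s$. Since $\bra{\Omega}u\Dr_R(\varsigma_a)\ket{\Omega}$ is just an inner product in $\cH$, this is legitimate. It proves the second bullet without Lemma~\ref{lem:SplitNonAbelianAnyonStateProjections} or any information about $\rho^X$.

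The price is that in the vacuum picture the pair at $e_0,e_1$ is also visible. So your $T$ must separate $\supp(u)\cup\{s\}$ from both punctures of $L_1$, whereas the paper's disk need only avoid $e_0$.

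You should also drop the requirements ``$T\subset\Lambda$'' and ``away from $\scrC$''. If $\supp(u)$ meets $\partial\Lambda$, no annulus inside $\Lambda$ surrounds it. You only need $T$ disjoint from $\supp(u)$ and from the punctures $s,e_0,e_1$. Then $p_T$ holds for $\pi_0(\Dr_R(\varsigma_a))\Omega$ even where $T$ crosses $R$, because $p_{S'}\le p_T$ for any region $S'\supseteq R\cup T$ punctured at $s,e_0,e_1$. Such a $T$ exists thanks to the same buffer $\scrC\subseteq(\Lambda^{+1})^c$ that the paper uses to choose its $S$.
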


We prove this theorem in parts. 
First, we characterize the state $\omega^X_a$ in Lemma \ref{lem:SplitNonAbelianAnyonStateProjections}.
(Note that the state $\omega^X$ is already characterized by \cref{eq:anyonstate}.)
We then show in Lemma \ref{lem:NonAbelianLTQO} that the state $\omega^X$ satisfies an LTQO property \cite{MR3077916}, which was implicitly proven in \cite{2511.21521}. 
After these preliminaries, we prove the two parts of Theorem \ref{thm:LevinWenViolatesUoP}.
In Lemma \ref{lem:TwoStatesAgreeOutsideCone}, we show that the two states agree on $\rho^X_\Lambda(\cR(\Lambda^c))$, and in Lemma \ref{lem:SplitAnyonCannotBeRemovedByUnitary} we show that the states cannot be connected by a unitary in $\rho^X_\Lambda(\cR(\Lambda)) = \cR(\Lambda)$.

\begin{lem}
\label{lem:SplitNonAbelianAnyonStateProjections}
We have that
\begin{equation}
\omega^X_a(\mathfrak{t}_{s}(p^{\bar a}))
=
\omega^X_a(\mathfrak{t}_{e_0}(p^{\bar X}))
=
\omega^X_a(A_e)
=
\omega^X_a(B_f)
=
1,
\end{equation}
where $e$ and $f$ range over all edges and faces that are disjoint from the punctures at $s$ and $e_0$. 
\end{lem}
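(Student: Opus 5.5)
The strategy is to reduce every expectation in $\omega^X_a$ to an expectation in $\omega^X$, using two facts. First, $\omega^X = \bra{\Omega}\rho^X(\cdot)\ket{\Omega}$. Second, $\rho^X$ acts trivially on operators supported away from the chain.

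Since $\Dr_R(\varsigma_a)$ lives on $R$, I will write $\Dr_R(\varsigma_a)^\dag \rho^X(y)\Dr_R(\varsigma_a)$ in terms of $\rho^X$ of a new local observable. The chain $\scrC$ starts at $L_1$, whose support overlaps $R$. I will therefore use the explicit form $\rho^X(y) = \lim_n (U^X_{\scrC[n\to 1]})^* y\, U^X_{\scrC[n\to1]}$. For local $y$ this is eventually $(U^X_{\scrC[N\to1]})^* y\, U^X_{\scrC[N\to1]}$ for large $N$. Equivalently, I will work directly with the vector $\xi \coloneqq \Dr_R(\varsigma_a)\Omega$ and the finite-depth description of the anyon state.

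The cleaner route, which I would try first, is to verify each projection identity as an eigenvector statement. For a projection $P$ with $\omega^X_a(P)=1$, it suffices to show $\rho^X(P)\xi = \xi$.

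\textbf{Step 1: operators away from $R$.} Take $A_e$ or $B_f$ with $e,f$ disjoint from $R$ (and from the punctures). Then $\rho^X(A_e)$ commutes with $\Dr_R(\varsigma_a)$ whenever its support misses $R$. This holds because $\rho^X$ preserves locality away from the chain and $\Dr_R(\varsigma_a)$ is supported in $R$. The identity therefore reduces to $\omega^X(A_e)=1$, which is \eqref{eq:anyonstate}.

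\textbf{Step 2: operators near or inside $R$, via the skein picture.} By \eqref{eq:anyonstate}, in the GNS space of $\omega^X$ the restriction of $\Omega$ (under $\rho^X$) to the region $R$ lies in
\begin{equation}
\mathfrak{t}_{e_0}(p^{\bar X})\,\mathfrak{t}_{e_1}(p^{X})\,\mathfrak{t}_s(p^{\mathds 1})\,p_R\cH_R .
\end{equation}
The puncture at $e_1=\partial_iL_1$ carries $X$ in the $\rho^X$-picture, since the $X$ has been moved off along the chain. The Drinfeld insertion $\Dr_R(\varsigma_a)$ maps this subspace into the subspace where the three punctures carry $\bar X$, $X$ and $\bar a$ respectively. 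The image is explicitly given by the diagram \eqref{eq:DrinfeldInsertions}, and it stays inside $p_R$. Consequently all $A_e$, $B_f$ inside $R$ act as $1$, $\mathfrak{t}_s(p^{\bar a})$ acts as $1$ (with $w^{\bar a}_s$ chosen to match $\varsigma_a$'s leg), and $\mathfrak{t}_{e_0}(p^{\bar X})$ acts as $1$. The last point holds because the insertion only appends a morphism at the center and does not touch the boundary vector $w$ at $e_0$.

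\textbf{Step 3: operators straddling $\partial R$.} Here I would enlarge $R$ to a bigger punctured disk $R'$ containing the support. I would then repeat Step 2 there, using that $\Dr_R(\varsigma_a)$ agrees with $\Dr_{R'}(\varsigma_a)$ on $p_{R'}$ (compatibility of Drinfeld insertions under enlargement, as in \cite{2511.21521}).

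The main obstacle is making precise the passage between $\rho^X$ on the GNS vector and the skein description of $p_R\cH_R$. The puncture at $e_1$ is "killed" in the $\omega^X$ description, and one must track how the charge transporters $u^X_{L_n}$ interact with $\Dr_R(\varsigma_a)$. I expect to handle this by using that the $u^X_{L_n}$ for $n\ge 2$ commute with $\Dr_R(\varsigma_a)$, and that $u^X_{L_1}$ acts on the $X\mathds 1$ or $\mathds 1 X$ subspace in a controlled way, namely \cite[Lem.~6.8]{2511.21521}.
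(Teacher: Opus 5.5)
There is a genuine gap, located in Step 2 and in your proposed resolution of the ``main obstacle.'' The vector you must analyze is $\xi=\Dr_R(\varsigma_a)\Omega$, where $\Dr_R(\varsigma_a)$ acts on the \emph{vacuum} vector. There all three punctures carry $\mathds{1}$, which is the source of $\varsigma_a\colon\mathds{1}\to X\otimes\bar X\otimes\bar a$. So $\xi$ has $X$ at $e_1$, $\bar X$ at $e_0$ and $\bar a$ at $s$, and $\rho^X$ is applied only to the observable. Step 2 instead inserts $\varsigma_a$ into the $\omega^X$-picture, and it also describes that picture incorrectly. By \eqref{eq:anyonstate} the faces at $e_1$ satisfy $B_f=1$ in $\omega^X$, so $e_1$ carries $\mathds{1}$ (the $X$ sits at infinity); ``$\bar X$ at $e_0$, $X$ at $e_1$'' is the configuration of $u^X_{L_1}\Omega$. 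Moreover, on any vector with $\bar X$ at $e_0$ the operator $\Dr_R(\varsigma_a)$ acts as zero, since its source label at $e_0$ is $\mathds{1}$. In particular $\rho^X(\Dr_R(\varsigma_a))\Omega=0$. Hence Step 2, and Step 3, which repeats it on $R'$, do not compute $\omega^X_a$.

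The tools you propose for the obstacle also fail. The $u^X_{L_n}$ with $n\ge 2$ do not commute with $\Dr_R(\varsigma_a)$. $L_2$ ends at the puncture $e_1=\partial_fL_2$, where $\Dr_R(\varsigma_a)$ creates the $X$. Indeed $u^X_{L_2}\xi$ has the $X$ moved to $e_2$, whereas $\Dr_R(\varsigma_a)u^X_{L_2}\Omega=0$. Also, $\xi$ is not in the $X\mathds{1}$ or $\mathds{1}X$ subspace of $L_1$.

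The missing idea is that $u^X_{L_1}$ acts \emph{trivially} on $\xi$. Its punctures look like an $X\bar X$ pair, but the $a$-string leaves $R^{L_1}$ towards $s$. So the exterior boundary of $L_1$ carries charge $\bar a\neq\mathds{1}$. Thus $\xi$ fails the $\mathfrak{t}_0^L(P^{\mathds{1}})$ condition in $P^{X\bar X}_{L_1}$ and the puncture conditions of the other three projections. Hence $\xi\perp P^{??}_{L_1}$ and $u^X_{L_1}\xi=\xi$. This is exactly where $a\neq\mathds{1}$ enters; for $a=\mathds{1}$, $u^X_{L_1}$ would annihilate the pair.

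Next, $\xi\in P^{\mathds{1}X}_{L_{n\to2}}\cH$, so \cite[Lem.~6.9]{2511.21521} gives $U^X_n\xi=\Dr^{(\mathds{1}X\to X\mathds{1})}_{L_{n\to2}}\xi$. This vector has $\bar a$ at $s$, $\bar X$ at $e_0$ and $X$ at $e_n$, and all other $A_e$, $B_f$ are satisfied. Since $\omega^X_a(y)=\lim_n\langle U^X_n\xi,\,y\,U^X_n\xi\rangle$ for local $y$, all four identities follow at once. This argument also covers edges and faces along the chain, which your Step 1 does not reach. For $e$ near $L_m$, $\rho^X(A_e)$ involves $u^X_{L_1},\dots,u^X_{L_m}$, so you have not shown that its support misses $R$.
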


In the proof of this lemma, we will identify vectors in $\cH$ with vectors in the ``skein module," depicted graphically. 
These statements do not type check, since the skein module is only defined for finite regions.
Nonetheless, these pictures act as a useful bookkeeping device.

\begin{proof}[Proof of Lemma \ref{lem:SplitNonAbelianAnyonStateProjections}]
We first observe that $\Dr_R(\varsigma_a) \ket{\Omega}$ corresponds to the following vector in the skein module:
\begin{equation}
d_a^{1/2} d_X
\tikzmath{
\filldraw[gray!25](-5, -.5) rectangle (6, 1);
\filldraw[fill=white] (-3.75, 0) rectangle (-2.75, .5);
\draw[dotted] (-3.25, 0) -- (-3.25, .5);
\node at (-3.1, .25) {\tiny $s$};
\draw[mid>, very thick, blue, rounded corners] (-2.75, .25) -- (0, -.25) -- (1.625, -.25) -- (1.625, .25);
\node[blue] at (-1.95, -.05) {$\scriptstyle a$};
\filldraw[fill=white] (.25, 0) rectangle (1.25, .5);
\draw[dotted] (.75, 0) -- (.75, .5);
\node at (.91, .25) {\tiny $e_0$};
\draw[mid>, very thick, red] (1.25, .25) -- (1.75, .25);
\node[red] at (1.5, .5) {$\scriptstyle X$};
\filldraw[fill=white] (1.75, 0) rectangle (2.75, .5);
\draw[dotted] (2.25, 0) -- (2.25, .5);
\node at (2.41, .25) {\tiny $ e_{1}$};
\filldraw[cyan] (1.625, .25) circle (0.05cm);
\draw[thick, dashed] (-5, 1.25) -- (-3.25, -1) -- (-1.5, 1.25);
\node at (-3.25, 1.25) {$\scriptstyle \Lambda$};
}.
\end{equation}
	In the above picture, the cyan dot represents $\varsigma_a$.
In particular, we have that 
\begin{equation}
\Dr_R(\varsigma_a)\ket{\Omega} 
\in 
\mathfrak{t}_{s}(p^{\bar a}) \mathfrak{t}_{e_0}(p^{\bar X}) \mathfrak{t}_{e_1}(p^X)p_{S} \cH,
\end{equation}
where $S$ is any region with punctures at $s$, $e_0$, and $e_1$ such that $R \subseteq S$. 

Now, recall that $\rho^X(\cdot) = \lim_{n \to \infty} (U_n^X)^* \cdot U_n^X$, where $U_n^X = u^X_{L_n} \cdots u^X_{L_1}$. 
Since $s$ is disjoint from $L_1$, we have that 
\begin{equation}
\Dr_R(\varsigma_a)\ket{\Omega} 
\in 
\mathfrak{t}_{\partial_{\mathrm{ext}}L_1}(P^{\bar a}) \mathfrak{t}_{e_0}(p^{\bar X}) \mathfrak{t}_{e_1}(p^X)p_{R^{L_1}} \cH,
\end{equation}
where $\partial_{\mathrm{ext}}L_1$ denotes the exterior boundary of $L_1$. 
Therefore, we have that $\Dr_R(\varsigma_a)\ket{\Omega} \in P^{??}_{L_1}\cH^\perp$, so we have that $u_{L_1} \Dr_R(\varsigma_a)\ket{\Omega} = \Dr_R(\varsigma_a)\ket{\Omega}$. 
Now, letting $L_{n \to 2}$ be the concatenation of the links $L_2, \dots, L_n$, we observe that $\Dr_R(\varsigma_a)\ket{\Omega} \in P^{\mathds{1} X}_{L_{n \to 2}} \cH$, so by \cite[Lem.~6.9]{2511.21521}, we have that 
\begin{align}
U_n^X\Dr_R(\varsigma_a)\ket{\Omega} 
&=
u_{L_n}^X \cdots u_{L_2}^X u_{L_1}^X \Dr_R(\varsigma_a)\ket{\Omega} 
=
u_{L_n}^X \cdots u_{L_2}^X \Dr_R(\varsigma_a)\ket{\Omega} 
\\&=
u_{L_{n \to 2}}^X \Dr_R(\varsigma_a)\ket{\Omega}
=
\Dr_{L_{n \to 2}}^{(\mathds{1} X \to X \mathds{1})} \Dr_R(\varsigma_a)\ket{\Omega}.
\end{align}
The last step in the above equation follows from the definition of $u_{L_{n \to 2}}^X$ since $\Dr_R(\varsigma_a)\ket{\Omega} \in P^{\mathds{1} X}_{L_{n \to 2}} \cH$.
Therefore, $U_n^X\Dr_R(\varsigma_a)\ket{\Omega}$ corresponds to the following vector in the skein module: 
\begin{equation}
\label{eq:StateWithXOutsideCone+aInsideCone-nthStage}
d_a^{1/2} d_X
\tikzmath{
\filldraw[gray!25](-5, -.5) rectangle (6, 1);
\filldraw[fill=white] (-3.75, 0) rectangle (-2.75, .5);
\draw[dotted] (-3.25, 0) -- (-3.25, .5);
\node at (-3.1, .25) {\tiny $s$};
\draw[mid>, very thick, blue, rounded corners] (-2.75, .25) -- (0, -.25) -- (3.6875, -.25) -- (3.6875, .25);
\node[blue] at (-1.95, -.05) {$\scriptstyle a$};
\filldraw[fill=white] (.25, 0) rectangle (1.25, .5);
\draw[dotted] (.75, 0) -- (.75, .5);
\node at (.91, .25) {\tiny $e_0$};
\draw[mid>, very thick, red] (1.25, .25) -- (4.5, .25);
\node[red] at (2.25, .45) {$\scriptstyle X$};
\filldraw[fill=white] (4.5, 0) rectangle (5.5, .5);
\draw[dotted] (5, 0) -- (5, .5);
\node at (5.16, .25) {\tiny $ e_{n}$};
\filldraw[cyan] (3.6875, .25) circle (0.05cm);
\draw[thick, dashed] (-5, 1.25) -- (-3.25, -1) -- (-1.5, 1.25);
\node at (-3.25, 1.25) {$\scriptstyle \Lambda$};
}.
\end{equation}
In particular, we have that 
\begin{equation}
U_n^X\Dr_R(\varsigma_a)\ket{\Omega}
\in 
\mathfrak{t}_{s}(p^{\bar a}) \mathfrak{t}_{e_0}(p^{\bar X}) \mathfrak{t}_{e_n}(p^X)p_{S} \cH,
\end{equation}
where $S$ is any thrice-punctured disk with punctures at $s$, $e_0$, and $e_n$. 

Thus, since $\omega^X_a(-) = \bra{\Omega} \Dr_{R}(\varsigma_a)^\dag \rho^X(-) \Dr_{R}(\varsigma_a) \ket{\Omega}$, we have that 
\begin{equation}
\omega^X_a(\mathfrak{t}_{s}(p^{\bar a}))
=
\omega^X_a(\mathfrak{t}_{e_0}(p^{\bar X}))
=
\omega^X_a(A_e)
=
\omega^X_a(B_f)
=
1,
\end{equation}
where $e$ and $f$ range over all edges and faces that are disjoint from the punctures at $s$ and $e_0$. 
\end{proof}

We now observe that the state $\omega^X$ satisfies an LTQO property \cite{MR3077916}. 
Let $S$ be a finite simply connected region containing the puncture at $e_0$ such that $S \setminus e_0$ is an annulus, where $S \setminus e_0$ is the region obtained by removing $e_0$ and its two adjacent faces from $S$. 
We then define $p_S^X \coloneqq \mathfrak{t}_{e_0}(p^{\bar X}) p_{S \setminus e_0}$.
\begin{lem}
\label{lem:NonAbelianLTQO}
Let $S$ be a simply connected region containing the puncture at $e_0$ such that $S \setminus e_0$ is an annulus. 
We then have that $p_{S^{+1}}^X \fA(S) p_{S^{+1}}^X = \bbC p_{S^{+1}}^X$, where $S^{+1}$ denotes all points distance at most 1 away from $S$ as before.
In particular, the LTQO condition of \cite{MR3077916} is satisfied. 
\end{lem}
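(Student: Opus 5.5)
The plan is to reduce the claim to the skein-module description of the local ground-state space. On the range of $p_{S^{+1}}^X$, the projections $A_e,B_f$ hold on $S^{+1}\setminus e_0$ and the tube algebra projects onto $p^{\bar X}$ at $e_0$. We will show that, after compressing, every $x\in\fA(S)$ acts on this range as a scalar. The first step is to identify $p_{S^{+1}\setminus e_0}\cH_{S^{+1}\setminus e_0}$ with $\cS(S^{+1}\setminus e_0)$ via $\sigma$ (\cite[Prop.~4.4]{2511.21521}). Since $S^{+1}\setminus e_0$ is an annulus with an inner boundary $\cB_{e_0}$ and an outer boundary $\cB_{\mathrm{ext}}$, the decomposition \eqref{eq:WhatIsTheSkeinModule} gives
\begin{equation}
\cS(S^{+1}\setminus e_0)\cong\bigoplus_{Y_1,Y_2} Z(\cC)(\mathds 1\to Y_1\otimes Y_2)\otimes\cC(Y_1\to\chi\otimes\chi)\otimes\cC(Y_2\to\chi^{\otimes\#\cB_{\mathrm{ext}}}).
\end{equation}
Applying $\mathfrak t_{e_0}(p^{\bar X})$ forces $Y_1=\bar X$, picks out the fixed unit vector $w^{\bar X}_{e_0}$, and forces $Y_2\cong X$ with $Z(\cC)(\mathds 1\to\bar X\otimes X)$ one-dimensional. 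Hence the range of $p^X_{S^{+1}}$ is, on the region $S^{+1}$, of the form $\cK_{\partial}\otimes(\text{rest})$. Here $\cK_\partial\cong\cC(X\to\chi^{\otimes\#\cB_{\mathrm{ext}}})$ records only the data on the outer boundary strip $S^{+1}\setminus S$ (more precisely, on the sites near the outer boundary), and the interior is determined uniquely.

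The second step is the key localization argument, in the spirit of the standard LTQO proof for Levin--Wen. A vector in the range of $p^X_{S^{+1}}$ is a skein-diagram, and the data on the sites of $S$ is pinned down by the data on the collar $S^{+1}\setminus S$. In particular, the reduced density matrix on $S$ of any such vector is independent of the vector. To show this, we will exhibit the range as $\cK\otimes\xi$-type vectors. Concretely, we write the compression $p\,x\,p$ for $x\in\fA(S)$ as an operator on the skein module that commutes with the tube algebra action $\mathfrak t_{\mathrm{ext}}(\Tube_{\cB_{\mathrm{ext}}})$ on the outer boundary (which acts on the collar, hence commutes with $x$ and with $p$). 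Since $\cK_\partial$ is an irreducible module for the $P^X$ block of $\Tube_{\cB_{\mathrm{ext}}}$ (minimal central projection, full matrix block), Schur's lemma forces $p x p\in\bbC p$. This is exactly where the extra layer $S^{+1}$ is needed: the tube action along the outer boundary of $S^{+1}$ must be supported outside $S$.

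The main obstacle is making the Schur argument honest. We must check two things: that the tube-algebra operators along the outer boundary really are localized in $S^{+1}\setminus S$ (this uses the explicit local form from \cite{2511.21521}), and that the $P^X$-block acts irreducibly with multiplicity exactly one on the compressed space. The multiplicity claim comes from the one-dimensionality of $Z(\cC)(\mathds 1\to\bar X\otimes X)$ and the fixed minimal projection at $e_0$. If the localization is delicate, the fallback is the direct route of \cite{2305.14068}: show that any two vectors in the range have equal reduced states on $S$ by using the local "fusion" moves. Finally, the LTQO condition of \cite{MR3077916} follows immediately, since $\omega^X$ is supported under $p^X_{S^{+1}}$ by \eqref{eq:anyonstate}.
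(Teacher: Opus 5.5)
\paragraph{Different route, with a gap at the key step.} The paper does not use representation theory here. It imports from \cite[Lem.~4.9 \& 4.11]{2511.21521} that $\Tr_{S^{+1}\setminus S}(\rho)$ is the same for every density matrix $\rho$ supported on $p^X_{S^{+1}}\cH_{S^{+1}}$, and then applies the standard polarization argument \cite[Prop.~1]{2404.05867}. Your ``fallback'' is exactly this route.

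The representation-theoretic half of your Schur argument is fine. Set $p\coloneqq p^X_{S^{+1}}$. The range $p\cH_{S^{+1}}\cong Z(\cC)(\mathds 1\to\bar X\otimes X)\otimes\bbC w^{\bar X}_{e_0}\otimes\cC(X\to\chi^{\otimes\#\cB_{\mathrm{ext}}})$ is an irreducible module, with multiplicity one, for the $P^X$-block of $\Tube_{\cB_{\mathrm{ext}}}$.

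The gap is the parenthetical ``(which acts on the collar, hence commutes with $x$ and with $p$)''. It fails for two reasons.
\begin{itemize}
\item The operators $\mathfrak t_{\cB_{\mathrm{ext}}}(t)$ are only defined on $p_R\cH_R$, through $\sigma_R$. You first need lifts $T_t\in\fA(S^{+1}\setminus S)$.
\item Support in the collar gives $[T_t,x]=0$ but not $[T_t,p]=0$. The projection $p$ contains the $A_e$ and $B_f$ of the outermost layer of edges and faces, which act on collar vertices and typically straddle $\partial S$.
\end{itemize}
Commutation with $p$ is indispensable. If you only know $T_tp=pT_tp$, then
\[
pxp\,T_tp-T_tp\,pxp=pT_t(\mathds 1-p)xp,
\]
which has no reason to vanish, so Schur's lemma does not apply. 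All the content of the lemma therefore sits in the unproven claim that the $P^X$-block is realized on $p\cH_{S^{+1}}$ by operators in $\fA(S^{+1}\setminus S)\cap\{p\}'$.

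Relatedly, your description of the range as $\cK_\partial\otimes(\text{rest})$ with the interior ``determined uniquely'' is false as a factorization over $\cH_S\otimes\cH_{S^{+1}\setminus S}$. The labels on edges crossing $\partial S$ and the associated fusion spaces entangle the two sides; this is why the reduced state on $S$ is mixed.

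\paragraph{How to close the gap.} Once closed, your argument is a genuinely different and more conceptual proof. One way to do it:
\begin{itemize}
\item Work on the ring of outermost vertices of $S^{+1}$; these must lie outside $S$, which is what the buffer is for. On the subspace where the tangential edges are consistent, the configuration space is the graph-basis skein module of a ribbon annulus.
\item Let $T_t$ glue $t$ onto the outer boundary of that annulus and re-expand in the graph basis, with $T_t=0$ on the complement.
\item $T_t$ leaves the labels on radial edges untouched, so it commutes with every $A_e$.
\item On edge-consistent configurations, $B_f$ and $T_t$ are tube-algebra actions at two different boundary components (the hole $f$ and the outer boundary) of a ribbon-graph skein module, so they commute.
\item $T_t$ commutes with $\mathfrak t_{e_0}(p^{\bar X})$ because their supports are disjoint.
\end{itemize}
After checking that $T_tp$ corresponds to gluing under $\sigma_{S^{+1}\setminus e_0}$, Schur's lemma gives $pxp\in\bbC p$.

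Completed this way, your route explains why a scalar appears: the outer boundary sector is irreducible with multiplicity $\dim Z(\cC)(\mathds 1\to\bar X\otimes X)=1$. It also shows why the collar is needed. The paper's route is shorter but outsources exactly this lattice work to \cite{2511.21521}.
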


\begin{proof}
By \cite[Lem.~4.9 \& 4.11]{2511.21521}, for any density matrix $\rho \in B(\cH_{S^{+1}})$ supported on $p_{S^{+1}}^X \cH_{S^{+1}}$, we have that $\Tr_{S^{+1} \setminus S}(\rho)$ does not depend on the specific density matrix $\rho$. 
Therefore, by a standard argument (see for instance \cite[Prop.~1]{2404.05867}), we have that $p_{S^{+1}}^X \fA(S) p_{S^{+1}}^X = \bbC p_{S^{+1}}^X$.
\end{proof}

\begin{lem}
\label{lem:TwoStatesAgreeOutsideCone}
For all $x \in \rho^X_\Lambda(\cR(\Lambda^c))$, we have that 
\(
\bra{\Omega} x \ket{\Omega}
=
\bra{\Omega}\Dr_{R}(\varsigma_a)^\dag x \Dr_{R}(\varsigma_a) \ket{\Omega}.
\)
\end{lem}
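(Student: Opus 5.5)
Since $\rho^X$ is localized in $\Lambda^c$ (the chain $\scrC$ lies in $(\Lambda^{+1})^c$), the set $\rho^X_{\Lambda^c}(\cR(\Lambda^c))$ is the weak closure of $\rho^X(\fA(\Lambda^c))$. Both sides of the claimed identity are normal functionals in $x$, because they are vector states. It therefore suffices to prove
\[
\omega^X(y) = \omega^X_a(y)\qquad \text{for all } y\in \fA(\Lambda^c),
\]
and by norm continuity it is enough to take $y\in\fA(T)$ with $T\subseteq \Lambda^c$ finite. The identity then follows from the GNS formula \eqref{eq:AnyonStateIsAnyonState} and the definition of $\omega^X_a$.

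Fix such a $y$. Choose a finite simply connected region $S$ with the following properties: $S\supseteq T\cup\{e_0\}$, $S\setminus e_0$ is an annulus, and $S^{+1}$ avoids the puncture at $s$. This is possible because $s\in((\Lambda^c)^{+1})^c$ sits in the bulk of $\Lambda$ while $T\subseteq\Lambda^c$. Concretely, I would take a large ball in $\Lambda^c$ around $e_0$ and $T$, thickened slightly, and remove a neighbourhood of $s$ only if it were needed; the distance condition on $s$ makes sure it is not needed.

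Lemma~\ref{lem:SplitNonAbelianAnyonStateProjections} gives $\omega^X_a(\mathfrak{t}_{e_0}(p^{\bar X}))=1$, together with $\omega^X_a(A_e)=\omega^X_a(B_f)=1$ for all edges and faces away from $s$ and $e_0$. Since $p^X_{S^{+1}}=\mathfrak{t}_{e_0}(p^{\bar X})p_{S^{+1}\setminus e_0}$ is a product of commuting projections, each of which has expectation $1$, we get $\omega^X_a(p^X_{S^{+1}})=1$. The same holds for $\omega^X$ by \eqref{eq:anyonstate}. For either state $\omega$, Cauchy--Schwarz then gives
\[
\omega(y)=\omega(p^X_{S^{+1}}\,y\,p^X_{S^{+1}}).
\]
Lemma~\ref{lem:NonAbelianLTQO} gives $p^X_{S^{+1}}\,y\,p^X_{S^{+1}}=c_y\,p^X_{S^{+1}}$ for a scalar $c_y$ that depends only on $y$. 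Hence both states evaluate $y$ to $c_y$, which proves the claim.

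The main obstacle is the geometric bookkeeping. One has to choose $S$ with $S\setminus e_0$ an annulus and $S^{+1}$ avoiding $s$, while also guaranteeing that every $A_e$ and $B_f$ in $p_{S^{+1}\setminus e_0}$ is among those covered by Lemma~\ref{lem:SplitNonAbelianAnyonStateProjections}. This is where the margins $\Lambda^{+1}$ and $(\Lambda^c)^{+1}$ in the setup get used. A second point to keep track of is that the tube action $\mathfrak{t}_{e_0}$ is applied inside the region $S^{+1}$ rather than inside $R$. This is consistent because tube projections at a puncture are local to a neighbourhood of $e_0$.
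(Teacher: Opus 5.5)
Your proposal is correct and follows the paper's proof essentially step for step. Both reduce by normality to local $y\in\fA(\Lambda^c)$, pick $S$ with $S\setminus e_0$ an annulus and $S^{+1}$ avoiding $s$, and combine $\omega^X(p^X_{S^{+1}})=\omega^X_a(p^X_{S^{+1}})=1$ with the LTQO identity $p^X_{S^{+1}}\,y\,p^X_{S^{+1}}=\lambda\, p^X_{S^{+1}}$.

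The only thing to tighten is the construction of $S$: take $S\subseteq\Lambda^c$ itself, with no thickening. Then $S^{+1}\subseteq(\Lambda^c)^{+1}$ avoids $s$ directly from the margin assumption $s\in((\Lambda^c)^{+1})^c$, as in the paper; thickening $S$ past $\Lambda^c$ could use up that margin.
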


\begin{proof}
	Writing $x = \rho^X_\Lambda(y)$, we wish to show that $$\bra{\Omega} \rho^X_\Lambda(y) \ket{\Omega} = \bra{\Omega} \Dr_{R}(\varsigma_a)^\dag \rho^X_\Lambda(y)  \Dr_{R}(\varsigma_a) \ket{\Omega}$$ for all $y \in \cR(\Lambda^c)$. 
By normality, it suffices to show that $\omega^X(y) = \omega^X_a(y)$ for all $y \in \fA(\Lambda^c)_{\loc}$. 
Let $y \in \fA(\Lambda^c)_{\loc}$, and let $S \subseteq \Lambda^c$ be a simply connected region containing the support of $y$ and the puncture at $e_0$, such that $S \setminus e_0$ is an annulus. 
Then $y \in \fA(S)$, so $p_{S^{+1}}^X y p_{S^{+1}}^X = \lambda p_{S^{+1}}^X$ for some $\lambda \in \bbC$. 
Now, $S^{+1} \subseteq (\Lambda^c)^{+1}$, so it is disjoint from the site $s$. 
Therefore, we have that 
\begin{equation}
\omega^X_a(p_{S^{+1}}^X)
=
1
=
\omega^X(p_{S^{+1}}^X).
\end{equation}
Hence, applying the standard result found in \cite[\textsection 2.1.1]{MR2345476}, we have that 
\begin{equation}
\omega^X_a(y)
=
\omega^X_a(p_{S^{+1}}^X y p_{S^{+1}}^X)
=
\lambda \omega^X_a(p_{S^{+1}}^X)
=
\lambda
=
\lambda \omega^X(p_{S^{+1}}^X)
=
\omega^X(p_{S^{+1}}^X y p_{S^{+1}}^X)
=
\omega^X(y),
\end{equation}
as desired.
\end{proof}

\begin{lem}
\label{lem:SplitAnyonCannotBeRemovedByUnitary}
For every $u \in \rho^X_\Lambda(\cR(\Lambda)) = \cR(\Lambda)$, we have that $\bra{\Omega} u \Dr_{R}(\varsigma_a) \ket{\Omega} = 0$. 
\end{lem}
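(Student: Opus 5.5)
We plan to implement the physical argument of \cref{sec:PhysicalArgument} via a density reduction plus a topological-charge projection. First, since $\rho^X$ is localized in $\Lambda^c$ (the chain $\scrC$ lies in $(\Lambda^{+1})^c$), $\rho^X_\Lambda$ is the identity on $\cR(\Lambda)$, which gives $\rho^X_\Lambda(\cR(\Lambda))=\cR(\Lambda)$. The map $u\mapsto \bra{\Omega} u \Dr_R(\varsigma_a)\ket{\Omega}$ is weak-operator continuous, and by Kaplansky density $\pi_0(\fA(\Lambda)_{\loc})$ is WOT-dense in $\cR(\Lambda)$. It therefore suffices to prove the claim for $u\in\fA(T)$, where $T\subseteq\Lambda$ is a finite region; by linearity $u$ need not be unitary.

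Next, fix such a $u\in\fA(T)$ and choose a finite simply connected region $S\subseteq \Lambda$ with $T\cup\{s\}\subseteq S$, where $S$ stays at distance at least $2$ from $\Lambda^c$. This keeps $S^{+1}$ away from $e_0$ and from the whole chain $\scrC$. Let $\cB$ denote the boundary component of the annulus $S^{+1}\setminus S$, and set $Q\coloneqq \mathfrak{t}_{\cB}(P^{\mathds 1})\,p_{S^{+1}\setminus S}$. This operator is supported in $S^{+1}\setminus S$, so $Q$ commutes with $u$; it is the projection measuring trivial total charge inside $S$.

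We then show two facts. First, $Q\ket{\Omega}=\ket{\Omega}$, because $\omega_0$ is the frustration-free vacuum: in \eqref{eq:WhatIsTheSkeinModule} the disk $S^{+1}$ carries no puncture, so only $X=\mathds 1$ appears on its boundary. Second, $Q\Dr_R(\varsigma_a)\ket{\Omega}=0$. Here we use the skein picture from the proof of Lemma~\ref{lem:SplitNonAbelianAnyonStateProjections}. The region inside $\cB$ contains only the puncture at $s$, which carries charge $\bar a$ (Lemma~\ref{lem:SplitNonAbelianAnyonStateProjections} gives $\mathfrak{t}_s(p^{\bar a})$ with probability one). By \eqref{eq:WhatIsTheSkeinModule} applied to the annulus between $s$ and $\cB$, the charge on $\cB$ lies in the summand $X_{\cB}$ with $Z(\cC)(\mathds 1\to X_{\cB}\otimes \bar a)\neq 0$, i.e.\ $X_{\cB}\cong a\neq\mathds 1$. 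Hence $\mathfrak{t}_\cB(P^{\mathds 1})$ annihilates the vector.

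With these two facts, we compute
\[
\bra{\Omega}u\Dr_R(\varsigma_a)\ket{\Omega}=\bra{\Omega}Qu\Dr_R(\varsigma_a)\ket{\Omega}=\bra{\Omega}uQ\Dr_R(\varsigma_a)\ket{\Omega}=0.
\]
Note that $\rho^X$ does not enter this computation: we pair $\Omega$ with $\Dr_R(\varsigma_a)\Omega$ directly, since $u$ acts in the representation $\rho^X$, which coincides with $\pi_0$ on $\Lambda$.

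The main obstacle is the second fact, $Q\Dr_R(\varsigma_a)\ket{\Omega}=0$. It requires transporting the tube-algebra charge on $s$ to the outer boundary $\cB$ of a larger annulus. Concretely, we must check that in the skein-module decomposition the vector $\Dr_R(\varsigma_a)\Omega$ restricted to $S^{+1}$ sits in the $a$-summand, since the $a$-strand exits $S^{+1}$ toward $e_0$. We would handle this by viewing the relevant skein region as a disk with punctures at $s$ and $e_0$ (and at $\partial_fL_1$, which lies outside $S^{+1}$). We would then use that $\Dr_R(\varsigma_a)$ implements the $Z(\cC)$-morphism $\varsigma_a$, so that cutting along $\cB$ factorizes through $a$ by isotopy.
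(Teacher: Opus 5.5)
Your argument is correct and is essentially the paper's proof: a charge projection on a loop enclosing $\supp(u)$ and $s$ but not $e_0$ commutes with $u$ and separates $\Omega$ from $\Dr_R(\varsigma_a)\Omega$. The paper uses $\mathfrak{t}_{\partial S}(P^a)$, which fixes $\Dr_R(\varsigma_a)\Omega$ and kills $\Omega$ (phrased via $\omega^X$ and $\omega^X_a$), where you use the complementary $P^{\mathds 1}$. Your ``main obstacle'' is settled exactly as in the paper: the reduced state of $\Dr_R(\varsigma_a)\Omega$ is supported on $\mathfrak{t}_s(p^{\bar a})p_{S^{+1}\setminus s}\cH_{S^{+1}\setminus s}$, and \cite[Prop.~4.7]{2511.21521} then forces charge $a$ on the outer boundary, so no isotopy argument is needed. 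One small fix: drop the requirement that $S$ be at distance at least $2$ from $\Lambda^c$, which cannot hold when $T$ touches the boundary of $\Lambda$; $S\subseteq\Lambda$ already keeps $S^{+1}$ away from $e_0$ and the chain, since $\scrC\subseteq(\Lambda^{+1})^c$.
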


\begin{proof}
By continuity, it suffices to prove the result for $u \in \fA(\Lambda)_{\loc}$. 
Let $u \in \fA(\Lambda)_{\loc}$. 
Since the chain $\scrC$ is localized in $(\Lambda^{+1})^c$ and $s \in \Lambda$, there exists a disk $S$ containing $\supp(u)^{+1}$ and the site $s$ but disjoint from the puncture at $e_0$. 
We further assume that $S \setminus s$ is an annulus, which is possible since $s \in ((\Lambda^c)^{+1})^c$. 
Since $S$ contains $s$ and is disjoint from the puncture at $e_0$, the density matrix $\rho^X_{a, S \setminus s}$ for $\omega^X_a|_{\fA(S \setminus s)}$ is supported on $\mathfrak{t}_s(p^{\bar a})p_{S \setminus s} \cH_{S \setminus s}$. 
	By \cite[Prop.~4.7]{2511.21521}, we have that $\mathfrak{t}_s(p^{\bar a})p_{S \setminus s} \cH_{S \setminus s} = \mathfrak{t}_{\partial S}(P^a) \mathfrak{t}_s(p^{\bar a})p_{S \setminus s} \cH_{S \setminus s}$. 
Indeed, by \cite[Prop.~4.7]{2511.21521}, we have that $p_{S \setminus s} \cH_{S \setminus s}$ is isomorphic as $\Tube_s$ and $\Tube_{\partial S}$ modules to the space 
\begin{gather}
\bigoplus_{X_1, X_2 \in \Irr(Z(\cC))} Z(\cC)(\mathds{1} \to X_1 \otimes X_2)  \otimes \cC(X_1 \to \chi^{\otimes 2}) \otimes \cC(X_2 \to \chi^{\#\partial S})
\\
= \bigoplus_{X_1 \in \Irr(Z(\cC))} Z(\cC)(\mathds{1} \to X_1 \otimes \overline{X_1})  \otimes \cC(X_1 \to \chi^{\otimes 2}) \otimes \cC(\overline{X_1} \to \chi^{\#\partial S}).
\end{gather}
The projection $\mathfrak{t}_s(p^{\bar a})$ projects onto a 1-dimensional subspace of $\bigoplus_{X_1 \in \Irr(Z(\cC))}\cC(X_1 \to \chi^{\otimes 2})$ for which $X_1 = \bar a$.
Therefore, $\mathfrak{t}_{\partial S}(P^{a})$, which projects onto the subspace corresponding to $X_1 = \bar a$, fixes $\mathfrak{t}_s(p^{\bar a})p_{S \setminus s} \cH_{S \setminus s}$.
	We thus have that $\omega^X_a(\mathfrak{t}_{\partial S}(P^a)) = 1$.Therefore,
\begin{equation}
\bra{\Omega} \Dr_R(\varsigma_a)^\dag \rho^X(\mathfrak{t}_{\partial S}(P^a)) \Dr_R(\varsigma_a) \ket{\Omega}
=
\omega^X_a(\mathfrak{t}_{\partial S}(P^a))
=
1,
\end{equation}
	and hence $\rho^X(\mathfrak{t}_{\partial S}(P^a)) \Dr_R(\varsigma_a) \ket{\Omega} = \Dr_R(\varsigma_a) \ket{\Omega}$. 

On the other hand, since $S$ is disjoint from the puncture at $e_0$, the density matrix $\rho^X_S$ for $\omega^X|_{\fA(S)}$ is supported on $p_S\cH_S$. 
	But $\mathfrak{t}_{\partial S}(P^a)p_S\cH_S = 0$, which implies $\omega^X(\mathfrak{t}_{\partial S}(P^a)) = 0$ and thus
\begin{equation}
\bra{ \Omega} \rho^X(\mathfrak{t}_{\partial S}(P^a)) \ket{\Omega}
=
\omega^X(\mathfrak{t}_{\partial S}(P^a))
=
0.
\end{equation}
Hence $\rho^X(\mathfrak{t}_{\partial S}(P^a)) \ket{\Omega} = 0$.

Now, since $u \in \fA(\Lambda)_{\loc}$ and $\rho^X$ is localized in $\Lambda^c$, we have $\rho^X(u) = u$. 
Furthermore, since $\supp(u)^{+1} \subseteq S$, we have that $u$ and $\mathfrak{t}_{\partial S}(P^a)$ are supported on disjoint regions. 
	Hence they commute and we get 
\begin{align}
\bra{\Omega} u \Dr_R(\varsigma_a) \ket{\Omega}
&= 
\bra{ \Omega}\rho^X(u) \Dr_R(\varsigma_a) \ket{\Omega}
\\&=
\bra{ \Omega} \rho^X(u)\rho^X( \mathfrak{t}_{\partial S}(P^a)) \Dr_R(\varsigma_a) \ket{\Omega}
\\&=
\bra{ \Omega} \rho^X(u \mathfrak{t}_{\partial S}(P^a)) \Dr_R(\varsigma_a) \ket{\Omega}
\\&=
\bra{ \Omega} \rho^X(\mathfrak{t}_{\partial S}(P^a)  u) \Dr_R(\varsigma_a) \ket{\Omega}
\\&=
\bra{\Omega} \rho^X(\mathfrak{t}_{\partial S}(P^a))  \rho^X(u) \Dr_R(\varsigma_a) \ket{\Omega}
\\&=
0,
\end{align}
which completes the proof.
\end{proof}

\begin{rem}
The paper \cite{2605.10693} provides a different physical perspective for Haag duality based in the local topological order axioms of \cite{MR4945955}. 
This paper provides an axiom termed LTO-HD, based on the approach in \cite{2509.23734}, that is a sufficient condition for Haag duality (when combined with a technical assumption). 
It is not hard to directly verify that LTO-HD fails to be satisfied for a nonabelian anyon state for the Levin--Wen model, but a rigorous treatment is beyond the scope of this paper. 
\end{rem}

\section{Failure of steering for nonabelian anyon state}

We now prove the failure of steering for nonabelian anyon states in the Levin--Wen model.
Recall from the introduction that for a bipartite system in a state $\omega$ with local, commuting von Neumann algebras $M_A$ and $M_B$, associated to parties $A$ and $B$, respectively, we say that $B$ can steer $A$ if for any normal functional $\psi$ on $M_A$ such that $0\leq \psi \leq \omega|_A$ there exists an operator $b\in M_B$ such that
\begin{align}
	\psi(a) = \omega(a b) 
\end{align}
for all $a\in M_A$.
In our case, we are interested in the nonabelian anyon state $\omega$ associated to a nonabelian anyon sector $\rho$.
As before we work in the vacuum representation, so that $\omega = \bra\Omega \cdot \ket\Omega$, but $M_A = \rho_\Lambda(\cR(\Lambda))$ and $M_B = \rho_{\Lambda^c}(\cR(\Lambda^c))$. 

\begin{thm}
\label{thm:LWFailureOfSteering}
	Let $\rho \colon \fA \to B(\cH)$ be a nonabelian anyon sector in the Levin--Wen model localized in the cone $\Lambda$. Set $A=\Lambda$ and $B = \Lambda^c$,  $M_A = \rho_\Lambda(\cR(\Lambda))$ and $M_B = \rho_{\Lambda^c}(\cR(\Lambda^c))$. 
 Then $B$ can steer $A$ relative to the state $\omega = \bra{\Omega} \cdot \ket{\Omega}$ if and only if $\rho$ is abelian.
\end{thm}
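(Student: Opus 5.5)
The plan is to reduce the statement to the criterion of \cite{luijkQuantumSteeringEquivalent2026} recalled in the introduction. By that criterion, $B$ steers $A$ if and only if there is a conditional expectation $E\colon M_B'\to M_A$ with $\omega\circ E=\omega$ on $M_B'$.

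\emph{Identifying the algebras.} Since $\rho$ is localized in $\Lambda$, we have $\rho_{\Lambda^c}=\id$, so $M_B=\cR(\Lambda^c)$. Haag duality of the Levin--Wen vacuum for cones \cite{2509.23734,2605.10693} then gives $M_B'=\cR(\Lambda)$. Also $M_A=\rho_\Lambda(\cR(\Lambda))\subseteq\cR(\Lambda)$. So the question becomes whether there is an $\omega$-preserving conditional expectation from $\cR(\Lambda)$ onto the subfactor $\rho_\Lambda(\cR(\Lambda))$.

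\emph{Abelian direction.} If $\rho$ is abelian, the proof of \cref{item:dhr1} $\Rightarrow$ \cref{item:dhr2} in \cref{thm:dhr} shows that $\rho_\Lambda$ is an automorphism of $\cR(\Lambda)$. Hence $M_A=M_B'$, and $E=\id$ works trivially.

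\emph{Nonabelian direction: structure of $E$.} Suppose $\rho$ is nonabelian and $E$ exists. The composition $\phi:=\rho_\Lambda^{-1}\circ E$ is then a normal left inverse of $\rho_\Lambda$ on $\cR(\Lambda)$, and the invariance condition reads
\[
\omega^X\circ\phi=\omega_0 \quad\text{on } \cR(\Lambda),
\]
because $\omega\circ\rho_\Lambda=\omega^X$ by \eqref{eq:AnyonStateIsAnyonState}. For an irreducible dualizable sector, any normal left inverse has the form $\phi(x)=R^*\bar\rho_\Lambda(x)R$ for some isometry $R\colon \mathds 1\to\bar\rho\otimes\rho$. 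Since $\rho$ is irreducible, $\mathds 1$ appears in $\bar\rho\otimes\rho$ with multiplicity one, so $R$ is unique up to phase and $\phi$ is the standard minimal left inverse. With $\phi$ pinned down, the task is to exhibit one $x\in\cR(\Lambda)$ with $\omega^X(\phi(x))\neq\omega_0(x)$.

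\emph{Nonabelian direction: the test operator.} I would take for $x$ a local charge projection. Choose a disk $S\subseteq\Lambda$ enclosing the anyon's location $e_0$ and the first links of the chain. Natural candidates are the loop projections $x=\mathfrak t_{\partial S}(P^{Y})$ for $Y=\mathds 1$ or $Y=\bar X$. On the vacuum side, $\omega_0(\mathfrak t_{\partial S}(P^{\mathds 1}))=1$ since $\omega_0(p_S)=1$. On the other side, $\phi(x)=R^*\bar\rho_\Lambda(x)R$ measures the charge of $\bar X\otimes X$ restricted to the channel picked out by $R$. I would compute this with the Drinfeld-insertion calculus of Section~\ref{sec:LWRigorousHeuristic}. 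I expect the result to be a ratio of quantum dimensions, something like $d_X^{-2}$, rather than $1$. That gap is exactly where $d_X>1$, i.e.\ nonabelianness, enters. An equivalent formulation of the same obstruction is the Pimsner--Popa bound $E\ge d_X^{-2}\,\id$, which is sharp on such projections.

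\emph{Main obstacle.} The hard step is computing $\bar\rho_\Lambda$ and $R$ concretely in the Levin--Wen sector of Section~\ref{subsec:LWAnyonSectors}, to evaluate $\omega^X(R^*\bar\rho(x)R)$. My first attempt would build $\bar\rho$ as $\rho^{\bar X}_{\scrC}$ on the same chain. I would take $R$ to be the limit of the pair-annihilation operators $\Dr^{(X\bar X\to\mathds 1\mathds 1)}$ along $\scrC$, and then use the LTQO property (\cref{lem:NonAbelianLTQO}) to reduce the evaluation to a single skein-module computation in $S$. There the fusion relation \eqref{eq:FusionRelation} produces the factor $\sqrt{d_c}/(d_X)$ weights.

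If this direct computation proves too unwieldy, the fallback is to argue abstractly. If $\omega$ is faithful on $\cR(\Lambda)$, Takesaki's theorem says that $\omega$-invariance of $E$ forces $\rho_\Lambda(\cR(\Lambda))$ to be globally invariant under the modular group of $(\cR(\Lambda),\omega)$. I would then aim to contradict this using the failure of Haag duality from \cref{thm:dhr}.
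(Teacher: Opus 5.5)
Your reduction matches the paper's: the criterion of \cite{luijkQuantumSteeringEquivalent2026}, $M_B'=\cR(\Lambda)$ by Haag duality, $\cE=\id$ in the abelian case, and uniqueness of $\cE$. Your uniqueness of the left inverse of the irreducible $\rho_\Lambda$ is equivalent to Lemma~\ref{lem:AnyonsGiveIrrSubfactors}.

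The gap is the nonabelian direction. You never produce an $x$ with $\omega(\cE(x))\neq\omega(x)$; you only expect one. For a local charge projection, the hard side is exactly the one you postpone. Normalize $R^*R=\bar R^*\bar R=d_X$, and set $P=d_X^{-1}\bar R\bar R^*$ and $\xi\coloneqq(\mathds 1-P)\rho_\Lambda(R)\ket\Omega$. The zig-zag identity $\bar R^*\rho_\Lambda(R)=\mathds 1$, together with the fact that $P$ commutes with $\rho_\Lambda\bar\rho_\Lambda(\cR(\Lambda))$, gives
\begin{equation*}
\bra\Omega\cE(x)\ket\Omega=d_X^{-2}\bra\Omega x\ket\Omega+d_X^{-1}\bra{\xi}\rho_\Lambda\bar\rho_\Lambda(x)\ket{\xi},\qquad x\in\cR(\Lambda).
\end{equation*}

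So for $x=\mathfrak t_{\partial S}(P^{\mathds 1})$ you must show $\rho_\Lambda\bar\rho_\Lambda(x)\xi\neq\xi$. That is, the non-vacuum fusion channels of the specific vector $\rho_\Lambda(R)\ket\Omega$ must carry nontrivial charge inside $S$; your guess $d_X^{-2}$ amounts to $\rho_\Lambda\bar\rho_\Lambda(x)\xi=0$. This may well be true, but nothing you invoke proves it:
\begin{itemize}
\item Sector theory allows a vector in a sector $Z\neq\mathds 1$ to have trivial charge in $S$.
\item Lemma~\ref{lem:NonAbelianLTQO} does not apply, because $\phi(x)=d_X^{-1}R^*\bar\rho_\Lambda(x)R$ is not localized in $S$. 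Here $R$ is an SOT-limit of operators supported along the whole chain.
\end{itemize}
The option $Y=\bar X$ is worse. There $\omega(x)=0$, and a violation needs a $\bar X$-charged part of $\xi$ in $S$. One should not expect that when $\bar X\not\subset X\otimes\bar X$, e.g.\ for $\sigma$ in an Ising-type center.

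Your fallbacks do not close the gap either. The state $\omega$ is not faithful on $\cR(\Lambda)$: frustration-freeness gives $\omega(\mathds 1-A_e)=0$ for edges $e$ in $\Lambda$, so Takesaki's theorem is unavailable. Pimsner--Popa only gives the lower bound $\omega\circ\cE\geq d_X^{-2}\omega$ on positive elements, which cannot exhibit a violation.

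The missing idea is to choose the test operator so that the abstract side is free: take the Jones projection $P$ itself. The zig-zag equations alone give $\cE(P)=d_X^{-2}\mathds 1$ (Lemma~\ref{lem:JonesProjectionRelation}). All model dependence then sits in $\omega(P)=1$, i.e.\ $P\ket\Omega=\ket\Omega$.

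The paper gets this from the explicit approximants of Lemma~\ref{lem:ConjugateMaps-LevinWen}. There $P$ is the SOT-limit of $(U^X_{n+1})^*(U^{\bar X}_n)^*P^{X\bar X}_{L_{n+1}}U^{\bar X}_nU^X_{n+1}$. The string-operator identities of \cite{2511.21521} show that $U^{\bar X}_nU^X_{n+1}\ket\Omega$ is the pair-creation vector on $L_{n+1}$, which $P^{X\bar X}_{L_{n+1}}$ fixes (Lemma~\ref{lem:JonesProjectionFixesState}). Hence $\omega(P)=1\neq d_X^{-2}=\omega(\cE(P))$ whenever $d_X>1$.
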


Recall that $B$ being able to steer $A$ steering is equivalent to the existence of a (normal) conditional expectation $\cE\colon M_B' \to M_A$ such that $\omega \circ \cE = \omega$ on $M_B'$ \cite[Thm.~B]{luijkQuantumSteeringEquivalent2026}. In our case, since the vacuum fulfills Haag duality and $\rho$ is localized in $\Lambda$, we have $\rho_{\Lambda^c}(\cR(\Lambda^c))' = \cR(\Lambda)$. Therefore we need a conditional expectation $\cE \colon \cR(\Lambda) \to \rho_\Lambda(\cR(\Lambda))$ such that $\bra{\Omega} x \ket{\Omega} = \bra{\Omega} \cE(x) \ket{\Omega}$ for all $x \in \cR(\Lambda)$.
By the following standard lemma, $\rho_\Lambda(\cR(\Lambda) \subseteq \rho_{\Lambda^c}(\cR(\Lambda^c))'$ is an irreducible subfactor, so there is at most one conditional expectation $\cE \colon \rho_{\Lambda^c}(\cR(\Lambda^c))' \to \rho_\Lambda(\cR(\Lambda))$ and it is faithful if it exists \cite[App. A]{kosakiLectureNotes}.\footnote{Faithfulness is immediate since the support projection $e$ of a conditional expectation $\cE \colon M\to N$ is in the relative commutant $N'\cap M$: $e$ is the smallest projection in $M$ such that $\cE(e) = 1$. But for any unitary $u \in N$ bimodularity implies $\cE(u e u^*) = uu^* = 1$. Hence $e$ has to commute with all unitaries in $N$.} 

\begin{lem}
\label{lem:AnyonsGiveIrrSubfactors}
Let $\rho \colon \fA \to B(\cH)$ be a nonabelian (irreducible) anyon sector localized in the cone $\Lambda$.
Then $\rho_\Lambda(\cR(\Lambda)) \subseteq \rho_{\Lambda^c}(\cR(\Lambda^c))'$ is an irreducible subfactor. 
\end{lem}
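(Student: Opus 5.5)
Using Haag duality of the vacuum net, I first identify the larger algebra: since $\rho$ is localized in $\Lambda$, $\rho_{\Lambda^c}=\id_{\cR(\Lambda^c)}$, so $\rho_{\Lambda^c}(\cR(\Lambda^c))'=\cR(\Lambda^c)'=\cR(\Lambda)$. The inclusion in question is therefore $N:=\rho_\Lambda(\cR(\Lambda))\subseteq M:=\cR(\Lambda)$. By the facts recalled in Section~\ref{sec:OABasics}, $\rho_\Lambda$ maps $\cR(\Lambda)$ into itself, so $N\subseteq M$ holds. I must show that $N$ and $M$ are factors and that the relative commutant $N'\cap M$ is trivial.

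For the factor property I use that $\rho_\Lambda=\Ad(V_\Lambda)$ is spatially implemented. Then $\rho_\Lambda$ is a normal injective $*$-isomorphism of $\cR(\Lambda)$ onto $N$, so $N$ is a factor exactly when $\cR(\Lambda)$ is one. $\cR(\Lambda)$ is a factor because its center is contained in $\cR(\Lambda)\cap\cR(\Lambda)'=\cR(\Lambda)\cap\cR(\Lambda^c)\subseteq(\cR(\Lambda)\vee\cR(\Lambda^c))'=\pi_0(\fA)'=\bbC$. The last equality holds because $\omega_0$ is pure, so $\pi_0$ is irreducible.

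The main step is irreducibility. Take $T\in N'\cap M$. Since $T\in\cR(\Lambda)$ and $\rho$ is localized in $\Lambda$, $T$ commutes with $\pi_0(\fA(\Lambda^c))=\rho(\fA(\Lambda^c))$. Since $T\in N'$, it commutes with $\rho_\Lambda(\cR(\Lambda))\supseteq\rho(\fA(\Lambda))$. Thus $T$ commutes with $\rho(\fA(\Lambda))$ and $\rho(\fA(\Lambda^c))$, hence with the C$^*$-algebra they generate, which is all of $\rho(\fA)$ by norm density of local operators. In categorical language, $T$ is a self-intertwiner $\rho\to\rho$. Because $\rho$ is an anyon sector, it is irreducible, so $\rho(\fA)'=\bbC$ and $T\in\bbC$.

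The only place I expect care is the density step: $\fA$ is the norm closure of the algebra generated by $\fA(\Lambda)$ and $\fA(\Lambda^c)$, because every finite region splits into its parts in $\Lambda$ and in $\Lambda^c$. Commutation with $\rho$ of that generated $*$-algebra then passes to the norm closure. Nonabelianness is not actually needed for irreducibility. It only makes the subfactor proper, since in the abelian case $N=M$ by Theorem~\ref{thm:dhr}. I would remark that the lemma holds for any anyon sector, with the nonabelian hypothesis ensuring the inclusion is strict.
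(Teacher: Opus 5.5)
Your proof is correct and is essentially the paper's argument: the paper also computes $\rho_\Lambda(\cR(\Lambda))' \cap \rho_{\Lambda^c}(\cR(\Lambda^c))' = (\rho_\Lambda(\cR(\Lambda)) \vee \rho_{\Lambda^c}(\cR(\Lambda^c)))' = \rho(\fA)' = \bbC$, which is your element-wise commutation step. Your use of Haag duality to identify the ambient algebra with $\cR(\Lambda)$, and your separate factoriality check, are harmless but not needed: locality already gives the inclusion, and a trivial relative commutant by itself forces both algebras to be factors.
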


\begin{proof}
Since $\rho \colon \fA \to B(\cH)$ is an irreducible representation, we have that $\rho(\fA)'' = B(\cH)$, and thus we have 
\begin{equation}
\rho_\Lambda(\cR(\Lambda))' \cap \rho_{\Lambda^c}(\cR(\Lambda^c))'
=
(\rho_\Lambda(\cR(\Lambda)) \vee \rho_{\Lambda^c}(\cR(\Lambda^c)))'
=
\rho(\fA)'
=
\bbC.
\end{equation}
Hence $\rho_\Lambda(\cR(\Lambda)) \subseteq \rho_{\Lambda^c}(\cR(\Lambda^c))'$ is an irreducible subfactor by definition.
\end{proof}

In section \ref{sec:ConditionalExpectation}, we write down the unique conditional expectation when $\rho$ is an anyon sector with conjugate sector $\rho^\vee$. 
In section \ref{sec:LWFailureOfSteering}, we then show that this conditional expectation preserves $\bra{\Omega} \cdot \ket{\Omega}$ on $\cR(\Lambda)$ if and only if $\rho$ is abelian. 

\subsection{Conditional expectation}
\label{sec:ConditionalExpectation}

Let $\rho \colon \fA \to B(\cH)$ be a nonabelian anyon sector with conjugate sector $\rho^\vee \colon \fA \to B(\cH)$. 
Suppose $\rho$ and $\rho^\vee$ are localized in the cone $\Lambda$. 
Since $\rho^\vee$ is a conjugate sector, we have maps $R \colon \mathds{1} \to \rho^\vee \otimes \rho$ and $\bar R \colon \mathds{1} \to \rho \otimes \rho^\vee$ that satisfy the zig-zag equations \eqref{eq:zig-zag}.
Now, since the net $\cR(\Lambda)$ satisfies Haag duality, we have that $R, \bar R \in \cR(\Lambda)$. 
We have that $R \colon \mathds{1} \to (\rho^\vee)_\Lambda \otimes \rho_\Lambda$ and $\bar R \colon \mathds{1} \to \rho_\Lambda \otimes (\rho^\vee)_\Lambda$, so $(\rho^\vee)_\Lambda \in \End(\cR(\Lambda))$ is a dual object for $\rho_\Lambda \in \End(\cR(\Lambda))$. 
By \cite{MR1444286}, there exists a conditional expectation $\cE \colon \cR(\Lambda) \to \rho_\Lambda(\cR(\Lambda))$ given by 
\begin{equation}
\label{eq:ConditionalExpectation}
\cE(x)
\coloneqq
d_\rho^{-1} \rho_\Lambda(R^*(\rho^\vee)_\Lambda(x)R).
\end{equation}
Indeed, this is easily seen by noting that the range of $\cE$ is contained in $\rho(\cR(\Lambda))$ and the map $x\mapsto \phi_\rho(x) = d_\rho^{-1} R^*(\rho^\vee)_\Lambda(x) R$ is a left-inverse to $\rho_\Lambda$: 
Since $R\colon \mathds{1}\to \rho^\vee \otimes \rho$, we have
\begin{align}
(\phi_\rho \circ \rho_\Lambda)(x)
=
d_\rho^{-1} R^* (\rho^\vee)_\Lambda(\rho_\Lambda(x))R = d_\rho^{-1} x R^* R = x
\end{align}
for $x \in \cR(\Lambda)$. 
Note that this implies $\cE = \rho_\Lambda \circ \phi_\rho$.

In the case where $x \colon \rho \otimes \sigma \to \rho \otimes \tau$ for some sectors $\sigma, \tau$, we can depict $\cE(x)$ graphically as the following: 
\begin{equation}
\cE(x)
=
d_\rho^{-1}\,
\tikzmath{
\draw[mid>] (0, 0) -- (0, 1.5);
\draw[mid<] (.75, .75) -- (.75, .5) arc(0:-180:.25) -- (.25, 1) arc(180:0:.25) -- (.75, .75);
\draw[red] (1, 0) -- (1, .75);
\draw[orange] (1, .75) -- (1, 1.5);
\roundNbox{fill=white}{(.875, .75)}{.3}{0}{0}{$x$}
\node at (0, -.15) {$\scriptstyle \rho$};
\node[red] at (1, -.15) {$\scriptstyle \sigma$};
\node[orange] at (1, 1.65) {$\scriptstyle \tau$};
}
\end{equation}

We now make the following observation, which will be useful later. 

\begin{lem}
\label{lem:JonesProjectionRelation}
Let $P \coloneqq d_\rho^{-1} \bar R \bar R^* \colon \rho \otimes \rho^\vee \to \rho \otimes \rho^\vee$. 
Then $P$ is a projection in $\cR(\Lambda)$ satisfying that $\cE(P) = d_\rho^{-2} \mathds{1}$. 
\end{lem}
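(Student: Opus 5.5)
Our plan is to verify the projection property directly from the normalization of $\bar R$, and then compute $\cE(P)$ from the formula \eqref{eq:ConditionalExpectation} together with the zig-zag equations \eqref{eq:zig-zag}.

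\textbf{$P$ is a projection in $\cR(\Lambda)$.} $P$ is self-adjoint by construction. Since $\bar R \in \cR(\Lambda)$ by Haag duality, $P\in\cR(\Lambda)$. Idempotence reduces to $\bar R^*\bar R = d_\rho \mathds 1$, since then $P^2 = d_\rho^{-2}\bar R(\bar R^*\bar R)\bar R^* = P$. This normalization is the standard one implicit in the definition of $\cE$: we used $R^*R = d_\rho$ to see that $\phi_\rho$ is a left inverse of $\rho_\Lambda$. We will assume, as is standard (and possible for irreducible $\rho$ after rescaling), that the solutions of the conjugate equations are normalized so that $R^*R = \bar R^*\bar R = d_\rho$. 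Because $\rho$ is irreducible, $\bar R^*\bar R\in \mathrm{End}(\mathds 1)=\bbC$, so this is just a scalar normalization.

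\textbf{Computing $\cE(P)$.} Write
\[
\cE(P) = d_\rho^{-2}\,\rho_\Lambda\bigl(R^*(\rho^\vee)_\Lambda(\bar R\bar R^*)R\bigr).
\]
The key step is to simplify $y \coloneqq R^*(\rho^\vee)_\Lambda(\bar R)(\rho^\vee)_\Lambda(\bar R^*)R$. By the first zig-zag equation, $R^*(\rho^\vee)_\Lambda(\bar R) = \mathds 1$. Taking adjoints gives $(\rho^\vee)_\Lambda(\bar R^*)R = \mathds 1$. Hence
\[
y = \bigl(R^*(\rho^\vee)_\Lambda(\bar R)\bigr)\bigl((\rho^\vee)_\Lambda(\bar R^*)R\bigr) = \mathds 1 .
\]
Applying $\rho_\Lambda$, which is unital, gives $\cE(P) = d_\rho^{-1}\cdot d_\rho^{-1}\rho_\Lambda(\mathds 1) = d_\rho^{-2}\mathds 1$. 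Graphically, this is the statement that capping the $\rho^\vee$ strand of the cup-cap $\bar R\bar R^*$ with $R$ and $R^*$ straightens both zig-zags, leaving only the identity on $\rho$ in front of an empty diagram.

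\textbf{Main obstacle.} The only delicate point is bookkeeping: one has to check that the operator products match the tensor-product conventions, namely $\id_\sigma\otimes S = \sigma_\Lambda(S)$ and $T\otimes\id = T$. In particular, one must confirm that $(\rho^\vee)_\Lambda$ applied to $\bar R\bar R^*$ is the correct realization of $\id_{\rho^\vee}\otimes P$, so that the zig-zag identity can be applied in exactly the form \eqref{eq:zig-zag}. The normalization $\bar R^*\bar R = d_\rho$ must also be the one in force. If the paper's $R,\bar R$ were only normalized up to a scalar, we would rescale them, which leaves $\cE$ unchanged once $R^*R=d_\rho$ is fixed, and then the zig-zag equations fix $\bar R^*\bar R=d_\rho$ for the standard solution.
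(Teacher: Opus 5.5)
Your proof is correct and follows the paper's route: idempotence comes from the normalization $\bar R^*\bar R = d_\rho \mathds{1}$, membership in $\cR(\Lambda)$ from Haag duality, and $\cE(P)=d_\rho^{-2}\mathds{1}$ from straightening the two zig-zags, which the paper does graphically and you do algebraically via $R^*(\rho^\vee)_\Lambda(\bar R)\,(\rho^\vee)_\Lambda(\bar R^*)R = \mathds{1}$. One small misattribution: $\bar R^*\bar R \in \End(\mathds{1}) = \bbC$ holds because the vacuum representation is irreducible (since $\omega_0$ is pure), not because $\rho$ is; irreducibility of $\rho$ is instead what makes the rescaling to a standard solution possible.
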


\begin{proof}
We first observe that $P$ is a projection since $\bar R^* \bar R = d_\rho I$. 
Furthermore, by Haag duality, $P \in \cR(\Lambda)$.
Graphically, we can represent $P$ as 
\begin{equation}
\label{eq:JonesProjection-Graphical}
P
=
d_\rho^{-1} \,
\tikzmath{
\draw[mid>] (0, 0) arc(180:0:.25);
\draw[mid<] (0, .75) arc(-180:0:.25);
}.
\end{equation}
Using the graphical calculus, we obtain that 
\begin{equation}
\cE(P)
=
d_\rho^{-2} \, 
\tikzmath{
\draw[mid>] (0, 0) -- (0, 1.75);
\draw[mid<] (1.25, 0) -- (1.25, .5) arc(0:180:.25) arc(0:-180:.25) -- (.25, 1.25) arc(180:0:.25) arc(-180:0:.25) -- (1.25, 1.75);
}
=
d_\rho^{-2} \, 
\tikzmath{
\draw[mid>] (0, 0) -- (0, 1.75);
\draw[mid<] (.5, 0) -- (.5, 1.75);
},
\end{equation}
which is exactly the desired equation.
\end{proof}

	As a consequence of the proof in Theorem~\ref{thm:dhr}, $\cE$ is non-trivial if and only if $\rho$ is nonabelian:	
\begin{lem}\label{lem:projection} The following are equivalent:
		\begin{enumerate}
				\item $\cE(P) = P$,
				\item $\rho$ is abelian ($d_\rho = 1$),
				\item $P = \mathds{1}$,
				\item $\cE = \id$.
		\end{enumerate}
\end{lem}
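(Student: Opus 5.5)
The plan is to prove the cycle (1) $\Rightarrow$ (2) $\Rightarrow$ (3) $\Rightarrow$ (4) $\Rightarrow$ (1), using mainly Lemma~\ref{lem:JonesProjectionRelation} and the basic facts about the statistical dimension $d_\rho = R^*R = \bar R^*\bar R \geq 1$.

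\emph{(1) $\Rightarrow$ (2).} Suppose $\cE(P) = P$. By Lemma~\ref{lem:JonesProjectionRelation}, $P = d_\rho^{-2}\mathds{1}$. Since $P$ is a projection, $P^2 = P$ forces $d_\rho^{-4} = d_\rho^{-2}$, and $P \neq 0$ because $\bar R \neq 0$. Hence $d_\rho = 1$. To identify $d_\rho = 1$ with abelianness, the plan is as follows. If $d_\rho = 1$, then $\bar R$ is a unitary (not just an isometry) intertwiner $\mathds{1} \to \rho\otimes\rho^\vee$. Indeed, $\bar R^*\bar R = 1$ and $\bar R\bar R^* = P = d_\rho^{-2}\mathds{1} = \mathds{1}$. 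Similarly, $R$ is unitary, because the same computation applies to $d_\rho^{-1}RR^*$ by symmetry. Therefore $\rho\otimes\rho^\vee \cong \mathds{1} \cong \rho^\vee\otimes\rho$. After replacing $\rho^\vee$ by a unitarily equivalent sector localized in $\Lambda$, namely $\Ad(\bar R^*)$ applied appropriately, and using the intertwiner in $\cR(\Lambda)$ given by Haag duality, we obtain equality $\rho\otimes\rho^\vee = \rho^\vee\otimes\rho = \mathds{1}$ as in Theorem~\ref{thm:dhr}\,(1). Conversely, if $\rho$ is abelian, the fusion rule $\rho\otimes\rho^\vee = \mathds{1}$ with irreducible $\mathds{1}$ forces $d_\rho^2 = d_{\rho\otimes\rho^\vee} = 1$.

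\emph{(2) $\Rightarrow$ (3) and (3) $\Rightarrow$ (4).} For (2) $\Rightarrow$ (3): $P$ is a subprojection of $\id_{\rho\otimes\rho^\vee}$, and when $d_\rho=1$, $\rho\otimes\rho^\vee$ is irreducible, or more directly $\bar R$ is unitary as above, so $P = \bar R\bar R^* = \mathds{1}$. For (3) $\Rightarrow$ (4): if $P = \mathds{1}$, then Lemma~\ref{lem:JonesProjectionRelation} gives $\mathds{1} = \cE(\mathds{1}) = d_\rho^{-2}\mathds{1}$, so $d_\rho = 1$. Then $\bar R$ and $R$ are unitaries, and by the argument of Theorem~\ref{thm:dhr}, $\rho_\Lambda$ is an automorphism of $\cR(\Lambda)$ with inverse $(\rho^\vee)_\Lambda$ conjugated by $R$. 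Consequently the range $\rho_\Lambda(\cR(\Lambda))$ of $\cE$ is all of $\cR(\Lambda)$. A conditional expectation onto the whole algebra is the identity, since $\cE$ restricts to the identity on its range. Explicitly, $\cE(x) = \rho_\Lambda(R^*(\rho^\vee)_\Lambda(x)R)$, and $\cE$ fixes every element of $\rho_\Lambda(\cR(\Lambda)) = \cR(\Lambda)$.

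\emph{(4) $\Rightarrow$ (1).} This is immediate, since $\cE = \id$ gives $\cE(P) = P$.

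The main obstacle is the identification "$d_\rho = 1$ iff abelian" in the sense of Theorem~\ref{thm:dhr}\,(1). That condition requires literal equalities of endomorphisms, not just unitary equivalences, so the conjugate must be chosen carefully. The plan is to show $\rho_\Lambda(\cR(\Lambda)) = \cR(\Lambda)$ directly from unitarity of $R$. For any $y \in \cR(\Lambda)$, set $x = (\rho^\vee)_\Lambda(y)$ conjugated by $R$; then the intertwining relation $R\,y = (\rho^\vee\otimes\rho)_\Lambda(y)R$ exhibits $y = \rho_\Lambda(\cdot)$ of something. This yields Haag duality for the net $\rho_\Delta(\cR(\Delta))$ on $\Lambda$, and the proof of \cref{item:dhr2} $\Rightarrow$ \cref{item:dhr1} in Theorem~\ref{thm:dhr} then produces a genuine inverse sector.
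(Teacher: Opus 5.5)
Your cycle is sound in outline, but it follows a different route from the paper.

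\emph{How the paper argues.} It gets (1)$\Leftrightarrow$(3) directly from $\cE(P)=d_\rho^{-2}\mathds{1}$. It proves $d_\rho=1\Rightarrow P=\mathds{1}$ by noting $\cE(\mathds{1}-P)=0$ and using faithfulness of $\cE$, which follows from the irreducibility in Lemma~\ref{lem:AnyonsGiveIrrSubfactors}. It handles (4) by observing that $\cE=\id$ iff $\rho_\Lambda$ is an automorphism of $\cR(\Lambda)$, and ties this to abelianness through the proof of Theorem~\ref{thm:dhr}.

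\emph{How you argue.} You replace faithfulness with the rigid-category fact that $d_\rho=1$ forces $\rho\otimes\rho^\vee$ to be irreducible (by multiplicativity of dimension and $d\geq 1$). That is an external input the paper avoids. In exchange, you give a concrete argument for (3)$\Rightarrow$(4) from unitarity of $\bar R$. You also take seriously that ``abelian'' in Theorem~\ref{thm:dhr}(1) means a literal inverse sector, which the paper passes over with ``i.e.''.

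\emph{Points that need repair.}

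First, in (2)$\Rightarrow$(3), the alternative ``$\bar R$ is unitary as above'' is circular. Above, $\bar R\bar R^*=\mathds{1}$ came from hypothesis (1). Rely on the irreducibility claim alone, or on the paper's one-line faithfulness argument.

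Second, ``$R$ is unitary by symmetry'' does not follow from the same computation, because you have no hypothesis on $d_\rho^{-1}RR^*$. Use the zig-zag equation instead: $R^*(\rho^\vee)_\Lambda(\bar R)=\mathds{1}$, and $(\rho^\vee)_\Lambda(\bar R)$ is unitary, so $R=(\rho^\vee)_\Lambda(\bar R)$.

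Third, in your last paragraph, the relation $Ry=(\rho^\vee)_\Lambda(\rho_\Lambda(y))R$ shows that $(\rho^\vee)_\Lambda$ is onto, not $\rho_\Lambda$. Surjectivity of $\rho_\Lambda$ comes from $\bar R$, via $\rho_\Lambda((\rho^\vee)_\Lambda(\bar R^*y\bar R))=y$.

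\emph{A shortcut.} No detour through Theorem~\ref{thm:dhr} is needed. Once $d_\rho=1$ and $\bar R$ is unitary, the second zig-zag equation gives $\cE(x)=\rho_\Lambda(R^*)\bar R\,x\,\bar R^*\rho_\Lambda(R)=x$. The literal inverse you were looking for is $\tilde\rho^\vee:=\Ad(R^*)\circ\rho^\vee$, not anything built from $\Ad(\bar R^*)$. It is localized in $\Lambda$ because $R\in\cR(\Lambda)$. It satisfies $\tilde\rho^\vee\otimes\rho=\Ad(R^*)\circ\Ad(R)=\id$ and $\rho\otimes\tilde\rho^\vee=\Ad(\rho_\Lambda(R^*)\bar R)=\id$.
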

\begin{proof}
	Since $\cE(P) = d_\rho^{-2}\mathds{1}$, we clearly have $\cE(P) = P$ if and only if $P = \mathds{1}$.
	Next we show that $P=\mathds{1}$ if and only if $d_\rho = 1$, i.e., $\rho$ is abelian: First suppose $d_\rho = 1$ and assume that $P = \mathds{1}- Q$ for a projection $Q$. Since $\cE(P) = \mathds{1}$, it follows that $\cE(Q) = 0$. 
	Since the inclusion $\rho_\Lambda(\cR(\Lambda))\subset \cR(\Lambda)$ is irreducible, $\cE$ is faithful and it follows that $Q=0$, hence $P = \mathds{1}$. 
	Conversely, suppose $P= \mathds{1}$. Then $\mathds{1} = \cE(\mathds{1}) = \cE(P) = d_\rho^{-1}\mathds{1}$, hence $d_\rho = 1$.
	Finally, $\rho_\Lambda(\cR(\Lambda)) = \cR(\Lambda)$ (i.e. $\cE=\id$) if and only if $\rho_\Lambda$ is an automorphism on $\cR(\Lambda)$, which is the case if and only if $\rho$ is abelian by the proof of Theorem~\ref{thm:dhr}. 
\end{proof}

\subsection{Failure of steering for Levin--Wen nonabelian anyons}
\label{sec:LWFailureOfSteering}
We now recall the setup for the Levin--Wen models described in section \ref{sec:LWRigorousHeuristic}. 
Let $X \in \Irr Z(\cC)$ be a nonabelian anyon type, where $\cC$ is a unitary fusion category.  
Let $\scrC \coloneqq (L_n)_{n \geq 1}$ be a chain contained in the cone $\Lambda$. 
Recall that $\rho^X_{\scrC} \colon \fA \to \fA$ defines a nonabelian anyon sector of type $X$ localized in $\Lambda$. 
We then have that $\rho^{\bar X}_{\scrC} \colon \fA \to \fA$ is a conjugate sector for $\rho^X_{\scrC}$ \cite[Thm.~1.1]{2603.01936}. 
For ease of notation, we will write $\rho^X$ and $\rho^{\bar X}$ to denote $\rho^X_{\scrC}$ and $\rho^{\bar X}_{\scrC}$ respectively. 
We let $\cE \colon \cR(\Lambda) \to \rho^X_\Lambda(\cR(\Lambda))$ be the conditional expectation defined in section \ref{sec:ConditionalExpectation}. 
We show that $\bra{\Omega} \cdot \ket{\Omega} \neq \bra{\Omega} \cE(\cdot) \ket{\Omega}$ if and only if $\rho^X$ is nonabelian.
To do so, we show that $P\ket{\Omega} = \ket{\Omega}$, where $P \coloneqq d_X^{-1} \bar R \bar R^*$ is the projection depicted in \eqref{eq:JonesProjection-Graphical}. 
Lemma~\ref{lem:projection} then implies the result.

We first write down an explicit formula for $\bar R \colon \mathds{1} \to \rho^X \otimes \rho^{\bar X}$, from which we can build the projection $P = d_X^{-1} \bar R \bar R^*$. 
Recall the definition of the unitary $u^X_L$ for a link $L$ (section \ref{subsec:DrinfeldInsertions}), and recall that $U^X_{\scrC[n \to 1]} = u^X_{L_n} \cdots u^X_{L_1}$ (section \ref{subsec:LWAnyonSectors}). 
For ease of notation, we write $U_n^X \coloneqq U^X_{\scrC[n \to 1]}$ and $U_n^{\bar X} \coloneqq U^{\bar X}_{\scrC[n \to 1]}$.

\begin{lem}
\label{lem:ConjugateMaps-LevinWen}
For $n \geq 1$, define the maps 
\begin{equation}
\label{eq:ConjugateMaps-LevinWen}
R_n
\coloneqq
( U_{n + 1}^{\bar X})^*(U_n^{X})^*
\Dr_{L_{n + 1}} \left[ 
\tikzmath{
\draw[mid>, very thick, red] (0, .75) arc(-180:0:.25);
\draw[thick, dotted] (0, 0) -- (0, .25);
\draw[thick, dotted] (.5, 0) -- (.5, .25);
\node[red] at (0, .9) {$\scriptstyle \bar X$};
\node[red] at (.5, .9) {$\scriptstyle X$};
}\right],
\qquad\qquad
\bar R_n
\coloneqq
( U_{n + 1}^X)^*(U_n^{\bar X})^*
\Dr_{L_{n + 1}} \left[ 
\tikzmath{
\draw[mid<, very thick, red] (0, .75) arc(-180:0:.25);
\draw[thick, dotted] (0, 0) -- (0, .25);
\draw[thick, dotted] (.5, 0) -- (.5, .25);
\node[red] at (0, .9) {$\scriptstyle X$};
\node[red] at (.5, .9) {$\scriptstyle \bar X$};
}\right]
\end{equation}
The sequences $(R_n)$ and $(\bar R_n)$ converge in SOT to morphisms $R \colon \mathds{1} \to \rho^{\bar X} \otimes \rho^X$ and $\bar R \colon \mathds{1} \to \rho^X \otimes \rho^{\bar X}$ that satisfy the zig-zag equations. 
\end{lem}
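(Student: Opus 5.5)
The plan has three parts: show the sequences stabilize, show the limits are intertwiners, and verify the zig-zag equations at finite stage.

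\emph{Stabilization.} For $m\ge n$, compare $R_m$ with $R_n$. The vector-level picture from the proof of Lemma~\ref{lem:SplitNonAbelianAnyonStateProjections} is the guide. Applying $\Dr_{L_{n+1}}$ creates an $\bar X$--$X$ pair at the two ends of $L_{n+1}$. The subsequent $(U^{\bar X}_{n+1})^*(U^X_n)^*$ then transports the $X$ end back to $e_0$ via $(U_n^X)^*$ and the $\bar X$ end back via $(U_{n+1}^{\bar X})^*$, using the hopping identities \cite[Lem.~6.8--6.9]{2511.21521}.

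The key identity to establish is
\begin{align}
R_{n+1} = R_n \,\bigl(u^{\bar X}_{L_{n+2}}\bigr)^* \bigl(u^{X}_{L_{n+1}}\bigr)^*\Dr_{L_{n+2}}\bigl[\text{pair}\bigr] \Dr_{L_{n+1}}\bigl[\text{pair}\bigr]^* .
\end{align}
More usefully, I would argue directly that $R_{n+1}\xi = R_n\xi$ for $\xi$ in the dense set $\pi_0(\fA_{\loc})\Omega$, once $n$ is large compared to the support. For such $\xi$, the links $L_k$ with $k$ large act on $\xi$ only through the ``vacuum'' summand $P^{\mathds 1\mathds 1}_{L_k}$. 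There, creating a pair on $L_{n+2}$ and moving its $X$ end one link inward equals creating the pair on $L_{n+1}$, up to the identity $\Dr^{(X\mathds1\to\mathds1X)}\Dr^{(\mathds1\mathds1\to \bar X X)}$ on concatenated links. This is a Drinfeld-insertion computation in the skein module, which I take from \cite[Lem.~6.9]{2511.21521}. Combined with uniform boundedness ($\|R_n\|\le d_X^{1/2}$, from the partial isometry bounds), this gives SOT convergence of $R_n$ to some $R$, and similarly $\bar R_n\to\bar R$.

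\emph{Intertwining.} Next I would check that $R\colon\mathds1\to\rho^{\bar X}\otimes\rho^X$, i.e.\ $Rx=\rho^{\bar X}(\rho^X(x))R$ for $x\in\fA_{\loc}$. By localization of the sectors, for $x$ supported far from $L_{n+1}$ one has $\rho^X(x)=(U_n^X)^*xU_n^X$, and analogously for $\rho^{\bar X}$ with $U_{n+1}^{\bar X}$. Also $\Dr_{L_{n+1}}[\cdot]$ commutes with $x$ when the supports are disjoint. These facts give $R_nx=\rho^{\bar X}\rho^X(x)R_n$ exactly for $n$ large, and the relation passes to the SOT limit. The care needed is in composing the sectors: $\rho^{\bar X}\otimes\rho^X=(\rho^{\bar X})^{\Lambda_a}\circ\rho^X$, and $(\rho^{\bar X})^{\Lambda_a}$ acting on $(U^X_n)^*xU^X_n$ must equal $(U^{\bar X}_{n+1})^*(\cdot)U^{\bar X}_{n+1}$ on the relevant operators. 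This holds because $(U_n^X)^*xU_n^X$ is supported away from $L_{n+2}$ onwards.

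\emph{Zig-zag equations.} I would verify $R^*\rho^{\bar X}_\Lambda(\bar R)=\mathds1$ and $\rho^X_\Lambda(R^*)\bar R=\mathds1$ at finite stage. Write $\rho^{\bar X}_\Lambda(\bar R_m)$ via conjugation by $U^{\bar X}_N$ for large $N$. Then $R_n^*\rho^{\bar X}_\Lambda(\bar R_m)$, applied to a local vector, becomes a composite of Drinfeld insertions along a finite region containing $L_1,\dots,L_{N}$. In the skein-module picture this is exactly the snake diagram built from the cup and cap on $X,\bar X$, which straightens by the zig-zag identity in $Z(\cC)$; the normalizations $d_X^{-1/2}$ combine correctly.

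The main obstacle is this last step: justifying the interchange of the two limits, $n$ for $R$ and $m$ for $\bar R$ inside $\rho^{\bar X}_\Lambda$. Since products are only jointly SOT-continuous on bounded sets, and the $R_n,\bar R_n$ are uniformly bounded, it suffices to compute on the dense set of local vectors with $n,m$ large. There the sequences are eventually constant, so the identity reduces to a finite diagrammatic computation.
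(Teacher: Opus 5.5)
The paper does not prove this lemma itself. It takes SOT convergence to intertwiners from \cite[Lem.~5.7]{2603.01936} and the zig-zag equations from the proof of \cite[Thm.~1.1]{2603.01936}. You instead argue directly, and the skeleton is viable:
\begin{itemize}
\item eventual constancy on $\pi_0(\fA_{\loc})\Omega$ together with $\sup_n\|R_n\|\le d_X^{1/2}$ does give SOT convergence;
\item your intertwining step is correct as written. Since $\rho^X(\fA)\subseteq\fA$, one has $\rho^{\bar X}\otimes\rho^X=\rho^{\bar X}\circ\rho^X$ on $\fA$, so the auxiliary algebra plays no role.
\end{itemize}
However, the stabilization step is misdescribed, and the zig-zag step, which carries the real content, is only asserted.

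\emph{Stabilization.} Your displayed ``key identity'' is false. Its right-hand side ends in $\Dr_{L_{n+1}}[\,\cdot\,]^*$, whose source projection $P^{\bar X X}_{L_{n+1}}$ is orthogonal to $P^{\mathds 1\mathds 1}_{L_{n+1}}\ni\Omega$. So the right-hand side kills $\Omega$, while $\|R_{n+1}\Omega\|=d_X^{1/2}$.

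Your replacement argument also skips a step. Write $C_k$ for the cup insertion on $L_k$ and use $(u_L)^*=u_L$. Then
\begin{itemize}
\item $R_{n+1}=(U^{\bar X}_{n+1})^*\,u^{\bar X}_{L_{n+2}}\,(U^X_n)^*\,u^X_{L_{n+1}}\,C_{n+2}$,
\item $R_n=(U^{\bar X}_{n+1})^*(U^X_n)^*C_{n+1}$.
\end{itemize}
Moving the $X$ end inward only turns the pair on $L_{n+2}$ into a pair on the concatenation $L_{n+2}L_{n+1}$, with $\bar X$ still at $e_{n+2}$. It is the extra $\bar X$-hop $u^{\bar X}_{L_{n+2}}$ that brings $\bar X$ to $e_{n+1}$, and this factor sits to the left of $(U^X_n)^*$. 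You must commute it through $u^X_{L_1}\cdots u^X_{L_n}$ (via disjoint supports, or a vector-level argument). Only then can you reduce to the local identity $u^{\bar X}_{L_{n+2}}u^X_{L_{n+1}}C_{n+2}\Omega=C_{n+1}\Omega$, which is where \cite[Lem.~6.8--6.9]{2511.21521} enter.

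\emph{Zig-zag equations.} $R_n\to R$ in SOT gives only $R_n^*\to R^*$ weakly. So joint SOT continuity does not apply to $R_n^*\rho^{\bar X}(\bar R_m)$, and you give no reason for $R_n^*$ to stabilize on local vectors. Pair instead:
$\langle\eta,R^*\rho^{\bar X}_\Lambda(\bar R)\xi\rangle=\lim_{n,m}\langle R_n\eta,\rho^{\bar X}(\bar R_m)\xi\rangle$.
This uses that $\rho^{\bar X}_\Lambda$ is unitarily implemented, and that $\bar R_m$ is local, so $\rho^{\bar X}(\bar R_m)=(U^{\bar X}_N)^*\bar R_mU^{\bar X}_N$ for $N$ large.

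The finite computation can then be cut down considerably.
\begin{itemize}
\item By irreducibility, $R^*\rho^{\bar X}_\Lambda(\bar R)=c\mathds 1$ and $\rho^X_\Lambda(R^*)\bar R=c'\mathds 1$ for scalars $c,c'$.
\item Evaluate $\bar R^*\rho^X_\Lambda\bigl(R^*\rho^{\bar X}_\Lambda(\bar R)\bigr)\bar R$ in two ways, using $\rho^X_\Lambda(\rho^{\bar X}_\Lambda(\bar R))\bar R=\bar R\bar R$ and $\bar R^*\bar R=d_X\neq 0$. This gives $c=c'$.
\item So you need only show that a single number, e.g.\ $\langle R\Omega,\rho^{\bar X}_\Lambda(\bar R)\Omega\rangle$, equals $1$.
\end{itemize}
That is a statement about the relative phase of $R$ and $\bar R$. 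It depends on using compatible suppressed identifications $X^\vee\cong\bar X$ (and $\bar X^\vee\cong X$ inside $u^{\bar X}_L$). Your remark that ``the normalizations $d_X^{-1/2}$ combine correctly'' does not address this. It is exactly what the cited proof of \cite[Thm.~1.1]{2603.01936} supplies.
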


We provide further clarification about the operators defined in \eqref{eq:ConjugateMaps-LevinWen}. 
Recall that the Drinfeld insertion operators \eqref{eq:DrinfeldInsertions} require a tensor factor corresponding to each puncture in both the source and the target. 
We therefore use the dotted lines to indicate that the object $\mathds{1}$ is used for each tensor factor in the source. 
Furthermore, we are suppressing the (non-canonical) fixed unitary $X^\vee \to \bar X$ in equation \eqref{eq:ConjugateMaps-LevinWen}. 

\begin{proof}[Proof of Lemma \ref{lem:ConjugateMaps-LevinWen}]
The fact that $(R_n)$ and $(\bar R_n)$ converge in SOT to morphisms $R \colon \mathds{1} \to \rho^{\bar X} \otimes \rho^X$ and $\bar R \colon \mathds{1} \to \rho^X \otimes \rho^{\bar X}$ follows from \cite[Lem.~5.7]{2603.01936}. 
It follows that these morphisms satisfy the zig-zag equations by the proof of \cite[Thm.~1.1]{2603.01936}.
\end{proof}

\begin{lem}
\label{lem:JonesProjectionFixesState}
Let $P \coloneqq d_X^{-1} \bar R \bar R^*$, where $\bar R$ is the morphism defined in Lemma \ref{lem:ConjugateMaps-LevinWen}. 
Then $P \ket{\Omega} = \ket{\Omega}$. 
\end{lem}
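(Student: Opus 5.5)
Since $P$ is a projection, it suffices to show $\bra{\Omega} P \ket{\Omega} = 1$, i.e.\ $\|\bar R^*\Omega\|^2 = d_X$. Equivalently, we want $\bar R\bar R^*\Omega = d_X\Omega$. Because $\bar R = \mathrm{SOT}\text{-}\lim_n \bar R_n$ and these operators are uniformly bounded, we have $\bar R^* \Omega = \mathrm{WOT}\text{-}\lim_n \bar R_n^*\Omega$. The plan is therefore to compute $\bar R_n^*\Omega$ explicitly for each $n$, show it equals a fixed vector independent of $n$, and then show that $\bar R$ maps that vector back to $d_X \Omega$.

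\textbf{Step 1: the action of $\bar R_n^*$ on $\Omega$.} By definition,
\[
\bar R_n^* = \Dr_{L_{n+1}}[\cap]^*\, U_n^{\bar X} U_{n+1}^X .
\]
We track $\Omega$ through the unitaries using the skein-module bookkeeping from the proof of Lemma~\ref{lem:SplitNonAbelianAnyonStateProjections}.
\begin{itemize}
\item $\Omega$ lies in $P_{L_k}^{\mathds 1\mathds 1}\cH$ for every link. Hence $u^X_{L_1}$ creates an $X\bar X$ pair at $e_1,e_0$.
\item Each subsequent $u^X_{L_k}$ lies in $P^{X\mathds 1}$ or $P^{\mathds 1 X}$ on the relevant vector, and so hops the $X$ outward (cf.\ \cite[Lem.~6.9]{2511.21521}). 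Thus $U_{n+1}^X\Omega$ is the vector with a $\bar X$ at $e_0$, an $X$-string along the chain, and an $X$ at $e_{n+1}$.
\item Applying $U_n^{\bar X}$ next should create a $\bar X X$ pair at $e_0$ that fuses with the existing $\bar X$ there, and then transport a $\bar X$ charge out to $e_n$.
\end{itemize}
The intended outcome is a vector with an $X\bar X$ pair sitting at the two punctures of $L_{n+1}$. The fusion channel is the point to watch. The state at $e_0$ is a superposition of fusion channels, and the transporters act only on the components in their source projections. Because $e_0$ now carries vacuum total charge for the pair, I expect only the component proportional to the "cap" (the trivial channel) to survive, with weight $d_X^{-1/2}$-type factors. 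The operator $\Dr_{L_{n+1}}[\cap]^*$ then annihilates this pair back to the vacuum.

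\textbf{Step 2: identify the limit vector.} I expect to show that $\bar R_n^*\Omega = c\,\Omega'$ for a fixed vector $\Omega'$ independent of $n$, and that $\bar R\,\Omega' \propto \Omega$ with the constant making $\|\bar R^*\Omega\|^2=d_X$. Here $\Omega'$ may simply be $\Omega$ itself, since $\bar R^*\colon\rho^X\otimes\rho^{\bar X}\to\mathds 1$ sends the "vacuum of $\rho^X\otimes\rho^{\bar X}$" to the vacuum. If this holds, $\Omega$ is an eigenvector of $\bar R\bar R^*$, and then necessarily $\bar R\bar R^*\Omega = d_X\Omega$.

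An alternative route is to compute the expectation directly:
\[
\bra\Omega \bar R_n\bar R_n^*\Omega = \|\bar R_n^*\Omega\|^2 .
\]
This can be evaluated by the graphical calculus. I anticipate a closed loop that evaluates to $d_X$ once the normalizations $d_X^{-1/2}$ and the Drinfeld factors $(d_Y/d_X)^{1/2}$ are included. Convergence then needs only weak convergence of $\bar R_n^*\Omega$ and norm-preservation, and both should follow because the vectors stabilize in $n$.

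\textbf{Main obstacle.} The hardest part is Step 1, in particular the interaction at $e_0$. There, $u^{\bar X}_{L_1}$ acts on a vector in which $e_0$ already carries a $\bar X$ rather than vacuum, so the vector is not in the source of the pair-creation partial isometry. I will need the exact form of $P_{L}^{??}$ together with the identity on its complement to determine what happens. My first attempt would be to rewrite $U_n^{\bar X}U_{n+1}^X\Omega$ using the intertwining property $U_n^{\bar X}\rho^X(\cdot)=\dots$, and to borrow the corresponding computation in the proof of \cite[Lem.~5.7, Thm.~1.1]{2603.01936}, where exactly these vectors are analysed.
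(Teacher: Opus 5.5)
The reduction is sound: since $P$ is a projection it suffices to show $\bra{\Omega}P\ket{\Omega}=1$, i.e.\ $\|\bar R^*\Omega\|^2=d_X$, and $\bar R^*\Omega$ is the weak limit of $\bar R_n^*\Omega$. But the whole content of the lemma sits in your Step 1, and there your description of $U_n^{\bar X}$ is wrong while the decisive fact is left as the ``main obstacle''.

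The missing observation is that $U^X_{n+1}\Omega$ lies entirely in the source projection $\mathfrak{t}_{e_n}(p^{\mathds 1})\,\mathfrak{t}_{e_0}(p^{\bar X})\,p_{R^{L_{n\to 1}}}$ of the $\bar X$-hopping operator $\Dr^{(\mathds 1\bar X\to\bar X\mathds 1)}_{L_{n\to 1}}$. It satisfies $\mathfrak{t}_{e_0}(p^{\bar X})$, and it satisfies $B_f$ for every face not adjacent to $e_0$ or $e_{n+1}$, so in particular the vacuum condition $\mathfrak t_{e_n}(p^{\mathds 1})=B_{f_1}B_{f_2}$ holds. Hence, by \cite[Lem.~6.9]{2511.21521}, $U^{\bar X}_n$ acts as that single hopping operator and simply moves the $\bar X$ from $e_0$ to $e_n$. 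No $\bar X X$ pair is created at $e_0$, nothing fuses, and there is no superposition of fusion channels. The result is $U^{\bar X}_n U^X_{n+1}\Omega=\Dr^{(\mathds 1\mathds 1\to X\bar X)}_{L_{n+1}}\Omega$, which lies in $P^{X\bar X}_{L_{n+1}}\cH$.

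You correctly noted that the vector is not in the source of the pair-creation isometry. The point is that it is in the source of the hopping isometry, which is another of the four summands of $u^{\bar X}_L$. Your anticipated mechanism, where only a trivial-channel component survives with a $d_X^{-1/2}$-type weight, would make $\|\bar R_n^*\Omega\|^2<d_X$. That is incompatible with the statement you are proving, so Step 1 as sketched points the wrong way.

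With this in hand, your Step 2 closes immediately and more simply than planned:
\[
\bar R_n^*\Omega = d_X^{1/2}\,\Dr^{(X\bar X\to\mathds 1\mathds 1)}_{L_{n+1}}\Dr^{(\mathds 1\mathds 1\to X\bar X)}_{L_{n+1}}\Omega = d_X^{1/2}P^{\mathds 1\mathds 1}_{L_{n+1}}\Omega = d_X^{1/2}\Omega
\]
for every $n$. Even if the hopping identity held only up to a phase $\lambda_n$, the weak limit would be $d_X^{1/2}\lambda\Omega$ with $|\lambda|=1$. Hence $\|\bar R^*\Omega\|^2=d_X$, so $\bra{\Omega}P\ket{\Omega}=1$ and $P\Omega=\Omega$. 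There is no need to show that $\bar R$ maps anything back to $d_X\Omega$.

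The paper uses the same hopping computation but packages it as $P=\lim\,(U^X_{n+1})^*(U^{\bar X}_n)^*P^{X\bar X}_{L_{n+1}}U^{\bar X}_nU^X_{n+1}$, with each term fixing $\Omega$. Your weak-limit formulation, once completed, has the small advantage of not requiring convergence of $\bar R_n\bar R_n^*$.

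Finally, your ``alternative route'' does not close on its own. Knowing $\|\bar R_n^*\Omega\|^2=d_X$ for each $n$ together with weak convergence only gives $\|\bar R^*\Omega\|^2\le d_X$. You still have to identify the limit vector, which again needs the hopping computation above.
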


\begin{proof}
	First, observe that since $(\bar R_n)$ is a bounded sequence, we have that 
\begin{align}
P
&=
d_X^{-1} \bar R \bar R^*
=
d_X^{-1} \lim \bar R_n \bar R_n^*
\\&=
d_X^{-1} \lim ( U_{n + 1}^X)^*(U_n^{\bar X})^*
\Dr_{L_{n + 1}} \left[ 
\tikzmath{
\draw[mid>, very thick, red] (0, 0) arc(180:0:.25);
\draw[mid<, very thick, red] (0, .75) arc(-180:0:.25);
}\right]
U_n^{\bar X} U_{n + 1}^X
\\&=
\lim ( U_{n + 1}^X)^*(U_n^{\bar X})^* P_{L_{n + 1}}^{X \bar X} U_n^{\bar X} U_{n + 1}^X,
\end{align}
where in the last line we used the formula for $P^{X \bar X}_{L_{n + 1}}$ given in \eqref{eq:ProjectionForAnyonsAnnihilating} and all limits are in SOT.

We now compute $P \ket{\Omega}$. 
For $m \leq n$, we let $L_{n \to m}$ be the link formed by concatenating the links $L_n, \dots, L_m$. 
We also let $e_n \coloneqq \partial_i L_n$ and $e_{n - 1} \coloneqq \partial_f L_n$. 
(Recall that the punctures at the endpoints of links occur at an edge and the adjoining two faces.)
Note that $\ket{\Omega} \in P^{\mathds1 \mathds1}_{L_{n+1 \to 1}}\cH$, so by \cite[Lem.~6.9]{2511.21521} we have that 
\begin{equation}
\label{eq:FirstStepOfProjectionAppliedToGroundState}
U_{n + 1}^X\ket{\Omega} 
= 
u_{L_{n + 1 \to 1}}^X \ket{\Omega}
=
\Dr_{L_{n + 1 \to 1}}^{(\mathds 1 \mathds 1 \to X \bar X)}\ket{\Omega}.
\end{equation}
Therefore, by \cite[Lem.~6.8]{2511.21521}, we have that
\begin{equation}
\label{eq:FirstStepOfProjectionAppliedToGroundState-Constraint}
\mathfrak{t}_{e_{n + 1}}(p^X)U_{n + 1}^X\ket{\Omega} 
= 
\mathfrak{t}_{e_0}(p^{\bar X})U_{n + 1}^X\ket{\Omega} 
=
B_fU_{n + 1}^X \ket{\Omega}
=
U_{n + 1}^X\ket{\Omega},
\qquad\qquad 
\forall f \ \text{s.t.} \ e_0, e_{n + 1} \notin f.
\end{equation}
Graphically, we can represent $U_{n + 1}^X\ket{\Omega}$ by
\begin{equation}
d_X^{1/2}
\tikzmath{
\filldraw[gray!25](-.5, -.5) rectangle (6, 1);
\filldraw[fill=white] (0, 0) rectangle (1.25, .5);
\draw[dotted] (.625, 0) -- (.625, .5);
\node at (.79, .25) {\tiny $e_0$};
\draw[mid>, very thick, red] (1.25, .25) -- (4.25, .25);
\filldraw[fill=white] (4.25, 0) rectangle (5.5, .5);
\draw[dotted] (4.875, 0) -- (4.875, .5);
\node at (5.2, .25) {\tiny $ e_{n + 1}$};
\node[red] at (2.75, .5) {$\scriptstyle X$};
}
\end{equation}
As before, this picture should be thought of as a useful bookkeeping device, as the pictures only are defined mathematically for the finite volume Hilbert spaces. 

Now, by \eqref{eq:FirstStepOfProjectionAppliedToGroundState-Constraint}, we have that $U_{n + 1}^X\ket{\Omega} \in P_{L_{n \to 1}}^{\mathds{1} X} \cH$, so applying \cite[Lem.~6.9]{2511.21521} again, we have that 
\begin{equation}
U_n^{\bar X} U_{n + 1}^X\ket{\Omega}
=
u_{L_{n \to 1}}^{\bar X} U_{n + 1}^X\ket{\Omega}
=
\Dr_{L_{n \to 1}}^{(\mathds{1} \bar X \to \bar X \mathds{1})} U_{n + 1}^X\ket{\Omega}
=
\Dr_{L_{n \to 1}}^{(\mathds{1} \bar X \to \bar X \mathds{1})} \Dr_{L_{n + 1 \to 1}}^{(\mathds 1 \mathds 1 \to X \bar X)}\ket{\Omega},
\end{equation}
where we apply \eqref{eq:FirstStepOfProjectionAppliedToGroundState} in the last step. 
By the proof of \cite[Lem.~6.9]{2511.21521}, we can see that 
\begin{equation}
U_n^{\bar X} U_{n + 1}^X\ket{\Omega}
=
\Dr_{L_{n \to 1}}^{(\mathds{1} \bar X \to \bar X \mathds{1})} \Dr_{L_{n + 1 \to 1}}^{(\mathds 1 \mathds 1 \to X \bar X)}\ket{\Omega}
=
\Dr_{L_{n + 1 \to n}}^{(\mathds 1 \mathds 1 \to X \bar X)} \ket{\Omega}
\in
P_{L_{n + 1}}^{X \bar X} \cH,
\end{equation}
where in the last step we apply \cite[Lem.~6.8]{2511.21521}.
Graphically, the state $U_n^{\bar X}U_{n + 1}^X\ket{\Omega}$ is represented by 
\begin{equation}
d_X^{1/2}
\tikzmath{
\filldraw[gray!25](-.5, -.5) rectangle (6, 1);
\filldraw[fill=white] (2.5, 0) rectangle (3.75, .5);
\draw[dotted] (3.125, 0) -- (3.125, .5);
\node at (3.29, .25) {\tiny $e_n$};
\draw[mid>, very thick, red] (3.75, .25) -- (4.25, .25);
\filldraw[fill=white] (4.25, 0) rectangle (5.5, .5);
\draw[dotted] (4.875, 0) -- (4.875, .5);
\node at (5.2, .25) {\tiny $ e_{n + 1}$};
\node[red] at (4, .5) {$\scriptstyle X$};
}
\end{equation}
We thus have that $P_{L_{n + 1}}^{X \bar X} U_n^{\bar X}U_{n + 1}^X\ket{\Omega} = U_n^{\bar X}U_{n + 1}^X\ket{\Omega}$, so we have that
\begin{equation}
P\ket{\Omega}
=
\lim ( U_{n + 1}^X)^*(U_n^{\bar X})^* P_{L_{n + 1}}^{X \bar X} U_n^{\bar X} U_{n + 1}^X \ket{\Omega}
=
\lim ( U_{n + 1}^X)^*(U_n^{\bar X})^* U_n^{\bar X} U_{n + 1}^X \ket{\Omega}
=
\ket{\Omega},
\end{equation}
as desired.
\end{proof}

\begin{proof}[Proof of \cref{thm:LWFailureOfSteering}]
	First, since $\cE = \id$ clearly preserves $\omega$, steering is possible if $\rho$ is abelian by \cref{lem:projection}. So assume $d_X > 1$ from now.
By Lemma \ref{lem:JonesProjectionFixesState}, we have that $\bra{\Omega} P \ket{\Omega} = 1$. 
On the other hand, by Lemma \ref{lem:JonesProjectionRelation}, we have that 
\begin{equation}
\bra{\Omega} \cE(P) \ket{\Omega}
=
d_X^{-2} \langle \Omega| \Omega \rangle
=
d_X^{-2}.
\end{equation}
	We thus find $\bra{\Omega} P \ket{\Omega} \neq \bra{\Omega} \cE(P) \ket{\Omega}$.
	Hence $\cE$ does not preserve the sate $\omega = \bra{\Omega}\cdot\ket{\Omega}$, showing that steering is impossible.
\end{proof}

\section{Failure of approximate Haag duality}
\label{sec:approximateHD}
We now show that the weaker condition of approximate Haag duality fails for nonabelian anyon states. 
To define approximate Haag duality, we must introduce some notation. 
First, for a cone $\Lambda$, we let $\arg(\Lambda) \in (0, 2\pi)$ to be the angle between its bounding rays: 
\begin{equation}
\tikzmath{
\draw[red] (-.5, .866) edge [bend left=30] (.5, .866);
\node[red] at (0, 1.15) {$\scriptstyle \varphi$};
\draw[thick] (-1, 1.732) -- (0, 0) -- (1, 1.732);
\node at (-.4, 1.5) {$\scriptstyle \Lambda$};
}
\qquad\qquad 
\arg(\Lambda) = \varphi.
\end{equation}
Next, for $\varepsilon \geq 0$ and $R \geq 0$, we let $\Lambda_{R, \varepsilon}$ be the cone obtained by translating $\Lambda$ backwards along its axis by $R$ and widening it by $\varepsilon$: 
\begin{equation}
\Lambda_{R, \varepsilon}
=
\tikzmath{
\draw[orange, dashed] (0, 0) -- (0, -.866);
\node[orange] at (.15, -.25) {$\scriptstyle R$};
\draw[red] (1.05, .9526) edge[bend left=20] (1.25, 0.75775);
\node[red] at (1.23, .97) {$\scriptstyle \varepsilon$};
\draw[thick, gray] (-1, 1.732) -- (0, 0) -- (1, 1.732);
\node[gray] at (0, 1.5) {$\scriptstyle \Lambda$};
\draw[thick, dashed, gray] (-1.5, 1.732) -- (0, -.866) -- (1.5, 1.732);
\draw[thick] (-2, 1.732) -- (0, -.866) -- (2, 1.732);
}
\end{equation}

\begin{defn}[{\cite{MR4362722, 2511.08382}}]
Let $\cR(\Lambda) \subseteq B(\cH)$ be the net of cone von Neumann algebras coming from a quantum spin system (see \cref{sec:OABasics}). 
We say that the net $\cR(\Lambda)$ satisfies \emph{approximate Haag duality} if the following holds: 
For any $\varphi \in (0, 2\pi)$ and $\varepsilon > 0$ satisfying $\varphi + 4\varepsilon < 2\pi$, there exists $R > 0$ and decreasing functions $f_\delta(t)$ with $\lim_{t \to \infty} f_\delta(t) = 0$ such that
\begin{enumerate}
\item
for any cone $\Lambda$ with $\arg(\Lambda) = \varphi$, there is a unitary $U \in B(\cH)$ such that 
\begin{equation}
U\cR(\Lambda^c)'U^*
\subseteq
\cR(\Lambda_{R, \varepsilon}),
\end{equation}
\item
for any $\delta > 0$ and $t \geq 0$, there exists a unitary $\widetilde U \in \cR(\Lambda_{t, \varepsilon + \delta})$ such that 
\begin{equation}
\|U - \widetilde{U}\| \leq f_\delta(t).
\end{equation}
\end{enumerate}
\end{defn}

\begin{thm}
\label{thm:ApproxHaagDualityFails}
Suppose the net of reference von Neumann algebras $\cR(\Lambda)$ satisfies Haag duality. 
Let $\rho \colon \fA \to B(\cH)$ be a nonabelian anyon sector localized in a cone $\Lambda$ with conjugate sector $\rho^\vee \colon \fA \to B(\cH)$ (also localized in $\Lambda$). 
Then the net of von Neumann algebras $\rho_\Delta(\cR(\Delta))$ does not satisfy approximate Haag duality. 
\end{thm}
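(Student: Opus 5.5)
We argue by contradiction, using the Jones projection of Lemma~\ref{lem:JonesProjectionRelation} as a witness. Write $M(\Delta) \coloneqq \rho_\Delta(\cR(\Delta))$ and $d \coloneqq d_\rho$; since $\rho$ is nonabelian, $d > 1$. Suppose $M(\Delta)$ satisfies approximate Haag duality. Fix a cone $\Delta \supseteq \Lambda$, and fix $\varepsilon, \delta > 0$ small enough that the widened cones below are still proper cones.

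\textbf{Step 1: approximate containment of $\cR(\Delta)$ in the image of $\rho$.} Because $\rho$ is localized in $\Lambda \subseteq \Delta$, we have $\rho_{\Delta^c} = \id$. Haag duality of the vacuum net therefore gives $M(\Delta^c)' = \cR(\Delta^c)' = \cR(\Delta)$. Approximate Haag duality then provides $R>0$ and a unitary $U$ with
\[
U\cR(\Delta)U^* \subseteq \rho_{\Delta_{R,\varepsilon}}(\cR(\Delta_{R,\varepsilon})).
\]
For every $t \geq 0$ it also provides a unitary $\widetilde U \in \rho_{\Delta_{t,\varepsilon+\delta}}(\cR(\Delta_{t,\varepsilon+\delta}))$ with $\|U-\widetilde U\| \leq f_\delta(t)$. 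Let $\Gamma$ be a cone containing both $\Delta_{R,\varepsilon}$ and $\Delta_{t,\varepsilon+\delta}$; for instance $\Gamma = \Delta_{\max(R,t),\varepsilon+\delta}$.

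Since $\rho_\Gamma$ restricts to $\rho_{\Delta'}$ on $\cR(\Delta')$ for $\Delta' \subseteq \Gamma$, both $U\cR(\Delta)U^*$ and $\widetilde U$ lie in $\rho_\Gamma(\cR(\Gamma))$. Hence, for every $x \in \cR(\Delta)$ with $\|x\| \leq 1$,
\[
y \coloneqq \widetilde U^* U x U^* \widetilde U \in \rho_\Gamma(\cR(\Gamma)),
\qquad
\|x - y\| \leq 2 f_\delta(t).
\]

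\textbf{Step 2: the Jones projection is far from $\rho_\Gamma(\cR(\Gamma))$.} Since $\rho$ and $\rho^\vee$ are localized in $\Lambda \subseteq \Gamma$, the morphisms $R, \bar R \in \cR(\Lambda)$ also serve as conjugation morphisms for $\rho_\Gamma$. Section~\ref{sec:ConditionalExpectation} therefore yields a conditional expectation
\[
\cE_\Gamma \colon \cR(\Gamma) \to \rho_\Gamma(\cR(\Gamma)),
\qquad
\cE_\Gamma(x) = d^{-1}\rho_\Gamma\bigl(R^*\rho^\vee_\Gamma(x)R\bigr).
\]
The graphical computation of Lemma~\ref{lem:JonesProjectionRelation} gives $\cE_\Gamma(P) = d^{-2}\mathds 1$ for $P = d^{-1}\bar R\bar R^* \in \cR(\Lambda) \subseteq \cR(\Delta)$. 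By Lemma~\ref{lem:projection}, $P$ is a projection with $P \neq \mathds 1$; it is also nonzero.

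Take $x = P$ in Step 1 and set $\eta = 2f_\delta(t)$. Since $\cE_\Gamma$ is contractive and fixes $y$,
\[
\|d^{-2}\mathds 1 - y\| = \|\cE_\Gamma(P - y)\| \leq \eta,
\qquad\text{so}\qquad
\|P - d^{-2}\mathds 1\| \leq 2\eta.
\]
But $P$ is a projection different from $0$ and $\mathds 1$, so its spectrum is $\{0,1\}$ and $\|P - d^{-2}\mathds 1\| = \max(d^{-2}, 1-d^{-2}) \geq 1/2$. Choosing $t$ large enough that $4 f_\delta(t) < 1/2$ gives a contradiction.

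\textbf{Main obstacle.} The delicate point is the cone bookkeeping. We must check that $\rho_\Gamma$ is an endomorphism of $\cR(\Gamma)$ whose restrictions agree with the smaller $\rho_{\Delta'}$. This follows from Haag duality of the vacuum and the facts in Section~\ref{sec:OABasics}. We must also check that $\cE_\Gamma$ is built from the same $R, \bar R$, so that $\cE_\Gamma(P) = d^{-2}\mathds 1$ holds on the enlarged cone $\Gamma$. The argument does not need to control how $\Gamma$ grows with $t$, because $P$ is independent of $t$ and only the uniform bound $f_\delta(t) \to 0$ is used.
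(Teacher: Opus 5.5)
Your proof is correct and follows essentially the same route as the paper. You assume approximate Haag duality, place $U\cR(\Delta)U^*$ and $\widetilde U$ inside $\rho_\Gamma(\cR(\Gamma))$ for a common enlarged cone $\Gamma$, and contradict $\cE_\Gamma(P) = d^{-2}\mathds 1$ using the contractivity of the conditional expectation. The differences are only cosmetic. You conjugate back by $\widetilde U^*$, so you only need that $\cE_\Gamma$ fixes its range, whereas the paper applies bimodularity to $\widetilde U P \widetilde U^*$. You also use the explicit bound $\|P - d^{-2}\mathds 1\| \geq 1/2$ instead of an abstract $\gamma > 0$.
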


\begin{proof}
Observe that for any cone $\Delta$ such that $\Lambda \subseteq \Delta$, we have that $\rho$ and $\rho^\vee$ are localized in $\Delta$. 
Let $\cE_\Delta \colon \cR(\Delta) \to \rho_\Delta(\cR(\Delta))$ be the conditional expectation given in \eqref{eq:ConditionalExpectation}. 
Let $R \colon \mathds{1} \to \rho^\vee \otimes \rho$ and $\bar R \colon \mathds{1} \to \rho \otimes \rho^\vee$ be morphisms satisfying the zig-zag equations \eqref{eq:zig-zag}. 
We let $P \coloneqq d_\rho^{-1} \bar R \bar R^* \colon \rho \otimes \rho^\vee \to \rho \otimes \rho^\vee$. 
	By Lemma \ref{lem:JonesProjectionRelation}, $P$ is a projection in $\cR(\Lambda)$ satisfying that $\cE_\Lambda(P) = d_\rho^{-2} \mathds{1}$. Since $\rho_\Delta(x) = \rho_\Lambda(x)$ for any $x\in \cR(\Lambda)\subset \cR(\Delta)$, it follows from the construction of $\cE_\Delta$ that $\cE_\Delta(x) = \cE_\Lambda(x)$ for any $x\in \cR(\Lambda)$. Hence $\cE_\Delta(P) = d_\rho^{-2} \mathds{1}$, too.
	Since $\rho$ represents a nonabelian anyon we have $P \neq d_\rho^{-2} \mathds{1}$. 
We set $\gamma \coloneqq \|P - d_\rho^{-2} \mathds{1}\| > 0$.

Suppose, towards contradiction, that the net $\rho_\Lambda(\cR(\Lambda))$ does satisfy approximate Haag duality. 
Then there exist $R, \varepsilon > 0$ and $U \in B(\cH)$ such that 
\begin{equation}
U \cR(\Lambda)U^* 
= 
U\rho_{\Lambda^c}(\cR(\Lambda^c))'U^*
\subseteq
\rho_{\Lambda_{R, \varepsilon}} (\cR(\Lambda_{R, \varepsilon})),
\end{equation}
	where we used that $\rho$ is localized in $\Lambda$ together with Haag duality for the net $\cR(\Lambda)$ for the first equality.
	Moreover, there exist $t, \delta > 0$ and $\widetilde U \in \rho_{\Lambda_{t, \varepsilon + \delta}}(\cR(\Lambda_{t, \varepsilon + \delta}))$ such that 
\begin{equation}
\|U - \widetilde{U}\| < \frac{\gamma}{2}.
\end{equation}
Observe that 
\begin{equation}
\|UPU^* - \widetilde U P \widetilde U^*\|
\leq
\|UPU^* - \widetilde U P U^*\| + \|\widetilde UPU^* - \widetilde U P \widetilde U^*\|
\leq
	2\|U - \widetilde U\|
<
\gamma. 
\end{equation}

Now, let $\Delta$ be a cone such that $\Lambda_{R, \varepsilon}, \Lambda_{t, \varepsilon + \delta} \subseteq \Delta$. 
(For instance, one can take $\Delta = \Lambda_{R + t, \varepsilon + \delta}$.) 
Then we have that $U\cR(\Lambda)U^* \subseteq \rho_\Delta(\cR(\Delta))$ and $\widetilde{U} \in \rho_\Delta(\cR(\Delta))$. 
	In particular, this implies $\cE_\Delta(UPU^*) = UPU^*$. Using the bimodule property of $\cE_\Delta$ we therefore have that 
\begin{equation}
\|\cE_\Delta(UPU^*) - \cE_\Delta(\widetilde U P \widetilde U^*)\|
=
\|UPU^* - \widetilde U\cE_\Delta(P) \widetilde U^*\|
=
\|UPU^* - d_\rho^{-2} \mathds{1}\|
=
\|P - d_\rho^{-2} \mathds{1}\|
=
\gamma.
\end{equation}
On the other hand, since $\|UPU^* - \widetilde U P \widetilde U^*\| < \gamma$, we have that 
\begin{equation}
\|\cE_\Delta(UPU^*) - \cE_\Delta(\widetilde U P \widetilde U^*)\|
=
\|\cE_\Delta(UPU^* - \widetilde U P \widetilde U^*)\|
\leq
\|UPU^* - \widetilde U P \widetilde U^*\|
<
\gamma,
\end{equation}
which is a contradiction. 
Hence the net $\rho_\Delta(\cR(\Delta))$ does not satisfy approximate Haag duality. 
\end{proof}

We now briefly review the definitions of finite-range commuting projector Hamiltonians and gapped ground states (see \cite{MR3617688} for a more detailed discussion on this subject). 
We consider a Hamiltonian $H$ to be a collection of local Hamiltonians $H_R$ for $R \subseteq \Gamma$ finite satisfying that 
\begin{equation}
H_R
=
\sum_{X \subseteq R} \Phi(X),
\end{equation}
where each $\Phi(X) \in \fA(X)$ is self-adjoint. 
We term $\Phi(X)$ the \emph{interactions} of the Hamiltonian $H$.
We say that the Hamiltonian $H$ is \emph{finite-range} if there exists $r > 0$ such that $\Phi(X) = 0$ for all $X \subseteq \Gamma$ finite with $\diam(X) > r$. 
Furthermore, we say that $H$ is \emph{commuting projector} if each $\Phi(X)$ is a projection and $[\Phi(X), \Phi(Y)] = 0$ for all $X, Y \subseteq \Gamma$ finite.

For a Hamiltonian $H$ with finite-range interactions $\Phi(X)$, we define a derivation $\delta$ on $\fA_{\loc}$ by 
\begin{equation}
\delta(A)
\coloneqq
\lim_{\Lambda \nearrow \Gamma} i [H_\Lambda, A]
=
\sum_{\substack{X \subseteq \Gamma \ \text{finite} \\ A \cap \supp(\Phi(X)) \neq \emptyset}} i[\Phi(X), A].
\end{equation}
We say that a state $\omega \colon \fA \to \bbC$ is a \emph{gapped ground state} for $H$ if there exists $\gamma > 0$ such that
\begin{equation}
-i\omega(A^*\delta(A)) \geq \gamma \omega(A^*A), 
\qquad\qquad
\forall A \in \fA_{\loc}, \ \omega(A) = 0.
\end{equation}
If $H$ is a commuting projector Hamiltonian, we say that $\omega \colon \fA \to \bbC$ is a \emph{frustration-free ground state} for $H$ if $\omega(\Phi(X)) = 0$ for all $X \subseteq \Gamma$ finite. 
If $H$ has a unique frustration-free ground state, then this state is also a gapped ground state for $H$. 

\begin{cor}
There exists a gapped ground state of a finite-range commuting projector Hamiltonian that does not satisfy approximate Haag duality. 
\end{cor}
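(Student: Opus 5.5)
The witness will be the nonabelian anyon state $\omega^X$ of the Levin--Wen model built from a unitary fusion category $\cC$ whose Drinfeld center contains a nonabelian simple object $X$; for instance $\cC = \mathrm{Rep}(S_3)$ or the Fibonacci category. Four things must be checked: (i) $\omega^X$ is the unique frustration-free ground state of an explicit finite-range commuting projector Hamiltonian $H^X$; (ii) hence it is a gapped ground state of $H^X$; (iii) its cone algebras are unitarily equivalent to $\rho^X_\Delta(\cR(\Delta))$; and (iv) Theorem~\ref{thm:ApproxHaagDualityFails} applies to this net.

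For (i), define $H^X$ by taking the Levin--Wen interactions $\mathds{1}-A_e$ for all edges $e\neq e_0$ and $\mathds{1}-B_f$ for all faces $f$ not adjacent to $e_0$. To these add the single interaction $\mathds{1}-\mathfrak{t}_{e_0}(p^{\bar X})$, supported on a bounded neighbourhood of the puncture at $e_0$. All interactions are projections of uniformly bounded diameter, so $H^X$ is finite range. I expect commutativity to be the main obstacle. The $A_e$ and $B_f$ commute among themselves by the standard Levin--Wen facts. For $\mathfrak{t}_{e_0}(p^{\bar X})$, I would argue that the tube algebra action along the boundary of the puncture commutes with the $A_e$ and $B_f$ away from the puncture, since these act on the skein module by local relations away from that boundary component; this should follow from the skein-module description in \cite{2511.21521}, and if needed I would simply enlarge the excluded neighbourhood of $e_0$. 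A state is frustration free for $H^X$ exactly when it satisfies \eqref{eq:anyonstate}, and by \cite[Prop.~5.2]{2511.21521} the unique such state is $\omega^X$.

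For (ii), I invoke the fact recalled just before the corollary: a commuting projector Hamiltonian with a unique frustration-free ground state has that state as a gapped ground state.

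For (iii), the GNS representation of $\omega^X$ is $(\pi_0\circ\rho^X,\cH,\Omega)$ by \eqref{eq:AnyonStateIsAnyonState}. As explained at the start of the proof of Theorem~\ref{thm:dhr}, the cone algebras $\pi_{\omega^X}(\fA(\Delta))''$ are therefore unitarily equivalent to $\rho^X_\Delta(\cR(\Delta))$. Approximate Haag duality is invariant under a global unitary conjugation of the net, because one simply conjugates $U$ and $\widetilde U$ as well. So it suffices to show that $\rho^X_\Delta(\cR(\Delta))$ violates approximate Haag duality.

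For (iv), the hypotheses of Theorem~\ref{thm:ApproxHaagDualityFails} hold. The vacuum net $\cR(\Lambda)$ satisfies Haag duality by \cite{2509.23734,2605.10693}. The sector $\rho^X$ is localized in a cone $\Lambda$ containing the chain $\scrC$ and is an irreducible anyon sector of type $X$ by \cite[Prop.~7.10]{2511.21521}. It is nonabelian by the choice of $X$. Its conjugate $\rho^{\bar X}$ is localized in the same cone by \cite[Thm.~1.1]{2603.01936}, as in Lemma~\ref{lem:ConjugateMaps-LevinWen}. The theorem then gives the failure of approximate Haag duality, which proves the corollary.
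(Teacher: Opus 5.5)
Your proposal is correct and follows essentially the same route as the paper. You realize $\omega^X$ as the unique frustration-free ground state of the modified Levin--Wen commuting projector Hamiltonian, and then apply \cref{thm:ApproxHaagDualityFails} to the net $\rho^X_\Delta(\cR(\Delta))$ through the GNS representation $\pi_0\circ\rho^X$. The paper takes both commutativity and uniqueness directly from \cite[Prop.~5.2]{2511.21521}, so no enlargement of the excluded neighbourhood is needed; your explicit check that approximate Haag duality is invariant under global unitary conjugation is a step the paper leaves implicit.
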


\begin{proof}
Recall the setup for the Levin--Wen model in \cref{sec:LWRigorousHeuristic}. 
We let $X \in Z(\cC)$ be a nonabelian anyon type, where $\cC$ is a unitary fusion category. 
Let $\scrC \coloneqq (L_n)_{n \geq 1}$ be a chain contained in the cone $\Lambda$, and let $e_0 \coloneqq \partial_f L_1$ be the edge at the endpoint puncture of this chain. 
We also let $f^1_{e_0}, f^2_{e_0}$ denote the two faces adjacent to $e_0$. 
Then $\rho^X \coloneqq \rho^X_{\scrC} \colon \fA \to \fA$ defines a nonabelian anyon of type $X$ localized in the cone $\Lambda$. 
As before, we let $\pi_0 \colon \fA \to B(\cH)$ be the GNS representation of the frustration-free ground state $\omega_0$ for the Levin--Wen model, and we let $\cR(\Delta) = \pi_0(\fA(\Delta))''$. 
Recall that $\pi_0 \circ \rho^X$ is a GNS representation for the state $\omega^X \coloneqq \omega_0 \circ \rho^X$ \cite[Prop.~7.2]{2511.21521}. 
By Theorem \ref{thm:ApproxHaagDualityFails}, the net $\rho^X_\Delta(\cR(\Delta))$ fails to satisfy approximate Haag duality. 
We now show that $\omega^X$ is a gapped ground state for some finite-range commuting projector Hamiltonian. 
Indeed, by \cite[Prop.~5.2]{2511.21521}, we have that $\omega^X$ is the unique frustration-free ground state of the commuting projector Hamiltonian defined by the following interactions: 
\begin{equation}
\Phi(e) = \mathds{1} - A_e \ \forall e \neq e_0,
\qquad
\Phi(f) = \mathds{1} - B_f \ \forall f \neq f^1_{e_0}, f^2_{e_0},
\qquad
\Phi(f^1_{e_0} \cup f^2_{e_0}) = \mathds{1} - \mathfrak{t}_{e_0}(p^{\bar X}).
\end{equation}
Therefore, $\omega^X$ is a gapped ground state for this Hamiltonian.
\end{proof}

\begin{rem}\label{rem:anyons}
Even though a nonabelian anyon sector $\rho \colon \fA \to B(\cH)$ fails to satisfy approximate Haag duality, there is still a way to define the braided monoidal category of superselection sectors for the gapped phase associated to $\rho$. 
We let $\pi_0 \colon \fA \to B(\cH)$ denote the reference representation satisfying Haag duality, with respect to which $\rho$ is a nonabelian anyon sector. 
Then $\pi \colon \fA \to B(\cH_1)$ is a superselection sector with respect to $\pi_0$ if and only if it is a superselection sector with respect to $\rho$. 
Indeed, since $\rho$ is an anyon sector with respect to $\pi_0$, we have that for every cone $\Lambda$, 
\begin{equation}
\rho|_{\fA(\Lambda^c)}
\cong
\pi_0|_{\fA(\Lambda^c)}.
\end{equation}
Therefore, for any cone $\Lambda$, we have that 
\begin{equation}
\pi|_{\fA(\Lambda^c)} \cong \rho|_{\fA(\Lambda^c)}
\qquad
\Longleftrightarrow
\qquad
\pi|_{\fA(\Lambda^c)} \cong \pi_0|_{\fA(\Lambda^c)},
\end{equation}
so $\pi$ is a superselection sector with respect to $\pi_0$ if and only if it is one with respect to $\rho$. 

We can therefore define the braided monoidal structure on the category of superselection sectors with respect to $\rho$ to be the one defined using $\pi_0$ as the reference representation. 
It remains to check that this fusion and braiding is an invariant of the gapped phase. 
We say that two states are in the same phase if they are related by an automorphism $\alpha$ obtained via spectral flow by finite-range Hamiltonians (see \cite{MR2885611} for a precise definition). 
By \cite[Thm.~3.1]{MR4426734}, such an automorphism is \emph{approximately factorizable} in the sense of \cite[Def.~1.2]{MR4362722}. 
For any approximately factorizable automorphism $\alpha \colon \fA \to \fA$, we have that $\pi_0$ and $\pi_0 \circ \alpha$ have the same braided monoidal category of superselection sectors by \cite[Thm.~6.1]{MR4362722}, and the equivalence of categories is given by the map $\pi \mapsto \pi \circ \alpha$. 
In particular, $\rho \circ \alpha$ is an anyon sector with respect to $\pi_0 \circ \alpha$  by \cite[Thm.~4.7]{MR4426734}, so $\pi \colon \fA \to B(\cH_1)$ is a superselection sector with respect to $\rho \circ \alpha$ if and only if it is one with respect to $\pi_0 \circ \alpha$. 
We can therefore define the braided monoidal structure on superselection sectors with respect to $\rho \circ \alpha$ to be the one obtained using $\pi_0 \circ \alpha$ as the reference representation. 
By \cite[Thm.~6.1]{MR4362722}, this is exactly the same category as the one we had previously defined for $\rho$ by way of $\pi_0$. 

We can therefore see that the anyon theories for the gapped phases corresponding to $\pi_0$ and $\rho$ are exactly the same, even though these gapped phases must be distinct since one satisfies approximate Haag duality while the other does not. 
This observation therefore strengthens our argument that approximate Haag duality should be viewed as a selection criterion for the collection of gapped phases. 
\end{rem}

\printbibliography

\end{document}